\theoremstyle{plain}
\newtheorem{thm}{\protect\theoremname}
\theoremstyle{plain}
\newtheorem{lyxalgorithm}[thm]{\protect\algorithmname}
\theoremstyle{definition}
\newtheorem{defn}[thm]{\protect\definitionname}
\theoremstyle{definition}
\newtheorem{example}[thm]{\protect\examplename}
\theoremstyle{plain}
\newtheorem{prop}[thm]{\protect\propositionname}
\theoremstyle{plain}
\newtheorem{cor}[thm]{\protect\corollaryname}
\theoremstyle{remark}
\newtheorem{rem}[thm]{\protect\remarkname}
\DeclareMathOperator{\Tr}{Tr}
\DeclareMathOperator{\tot}{tot}
\DeclareMathOperator{\SPAN}{span}
\DeclareMathOperator{\arctanh}{arctanh}
\providecommand{\algorithmname}{Algorithm}
\providecommand{\corollaryname}{Corollary}
\providecommand{\definitionname}{Definition}
\providecommand{\examplename}{Example}
\providecommand{\propositionname}{Proposition}
\providecommand{\remarkname}{Remark}
\providecommand{\theoremname}{Theorem}
\begin{document}

\title{Constrained free energy minimization for the design of thermal states
and\\stabilizer thermodynamic systems}

\author{Michele Minervini}
\affiliation{Institute of Physics, {\'E}cole Polytechnique F\'ed\'erale de Lausanne (EPFL), Lausanne, Switzerland}
\affiliation{School of Electrical and Computer Engineering, Cornell University, Ithaca, New
York 14850, USA}

\author{Madison Chin}
\affiliation{School of Electrical and Computer Engineering, Cornell University, Ithaca, New
York 14850, USA}

\author{Jacob Kupperman}
\affiliation{School of Applied and Engineering Physics, Cornell University, Ithaca, New
York 14850, USA}

\author{Nana Liu}
\affiliation{Institute of Natural Sciences, Shanghai Jiao Tong University, Shanghai 200240, China}
\affiliation{School of Mathematical Sciences, Shanghai Jiao Tong University, Shanghai
200240, China}

\author{Ivy Luo}
\affiliation{School of Electrical and Computer Engineering, Cornell University, Ithaca, New
York 14850, USA}

\author{Meghan Ly}
\affiliation{School of Electrical and Computer Engineering, Cornell University, Ithaca, New
York 14850, USA}

\author{Soorya Rethinasamy}
\affiliation{School of Applied and Engineering Physics, Cornell University, Ithaca, New
York 14850, USA}

\author{Kathie Wang}
\affiliation{School of Electrical and Computer Engineering, Cornell University, Ithaca, New
York 14850, USA}

\author{Mark M. Wilde}
\affiliation{School of Electrical and Computer Engineering, Cornell University, Ithaca, New
York 14850, USA}

\date{\today}

\begin{abstract}
A quantum thermodynamic system is described by a Hamiltonian and a
list of conserved, non-commuting charges, and a fundamental goal is
to determine the minimum energy of the system subject to constraints
on the charges. Recently, {[}Liu \textit{et al}., arXiv:2505.04514{]}~proposed
first- and second-order classical and hybrid quantum-classical algorithms
for solving a dual chemical potential maximization problem, and they
proved that these algorithms converge to global optima by means of
gradient-ascent approaches. In this paper, we benchmark these algorithms
on several problems of interest in thermodynamics, including one-
and two-dimensional quantum Heisenberg models with nearest- and next-nearest
neighbor interactions and with the charges set to the total $x$,
$y$, and $z$ magnetizations. We also offer an alternative compelling
interpretation of these algorithms as methods for designing ground
and thermal states of controllable Hamiltonians, with potential applications
in molecular and material design. Furthermore, we introduce stabilizer
thermodynamic systems as thermodynamic systems based on stabilizer
codes, with the Hamiltonian constructed from a given code's stabilizer
operators and the charges constructed from the code's logical operators.
We benchmark the aforementioned algorithms on several examples of
stabilizer thermodynamic systems, including those constructed from
the one-to-three-qubit repetition code, the perfect one-to-five-qubit
code, and the two-to-four-qubit error-detecting code. Finally, we
observe that the aforementioned hybrid quantum-classical algorithms,
when applied to stabilizer thermodynamic systems, can serve as alternative
methods for encoding quantum information into stabilizer codes at
a fixed temperature, and we provide an effective method for warm-starting
these encoding algorithms whenever a single qubit is encoded into
multiple physical qubits.

\end{abstract}

\maketitle

\tableofcontents

\section{Introduction}

\subsection{Background and motivation}

\label{subsec:Background-and-motivation}Quantum computation has the
potential to cause a paradigm shift in several areas of computer science.
After the initial proposals of quantum speedups for factoring~\cite{Shor1994}
and unstructured search~\cite{Grover1996}, it was postulated that
quantum computation could offer speedups for optimization, with applications
to physics, combinatorial optimization, machine learning, and beyond
(see, e.g.,~\cite{Abbas2024} for a recent review). While this possibility
still remains the subject of ongoing research, concrete quantum algorithms
for solving semidefinite optimization problems have been proposed
starting nearly a decade ago~\cite{Brandao2017,Apeldoorn2019,Brandao2019,vanApeldoorn2020quantumsdpsolvers,Kerenidis2020,Bharti2022,Augustino2023quantuminterior,watts2023quantum,Patti2023quantumgoemans,Patel2024variationalquantum,westerheim2023dualvqequantumalgorithmlower,chen2023qslackslackvariableapproachvariational,liu2025qthermoSDPs}.
Some of these algorithms have guaranteed runtimes~\cite{Brandao2017,Apeldoorn2019,Brandao2019,vanApeldoorn2020quantumsdpsolvers,Kerenidis2020,Augustino2023quantuminterior,watts2023quantum,liu2025qthermoSDPs},
while others are heuristic in nature~\cite{Bharti2022,Patti2023quantumgoemans,Patel2024variationalquantum,westerheim2023dualvqequantumalgorithmlower,chen2023qslackslackvariableapproachvariational},
falling under the umbrella of variational quantum algorithms~\cite{Cerezo2021vqa}. 

Liu \textit{et al}.~recently identified a concrete link between quantum
thermodynamics and semidefinite optimization~\cite{liu2025qthermoSDPs},
having the benefit of bridging two previously unlinked foundational
areas and bringing insights from quantum thermodynamics into the analysis
of semidefinite optimization. As identified in~\cite{liu2025qthermoSDPs},
a special class of semidefinite optimization problems, known as constrained
energy minimization problems, are relevant in quantum thermodynamics.
In particular, these problems are defined in terms of a Hamiltonian
and a list of conserved, non-commuting charges, with the latter imposing
constraints on the state of the physical system. The goal then is
to minimize the energy of the system, while respecting the imposed
constraints. 

Generally speaking, determining the minimum energy of
a quantum physical system is a primary goal in physics~\cite{Lieb2005},
typically being the first step employed in computing energetic properties
of molecules and materials, and thus having applications in materials
science~\cite{Steinhauser2009}, condensed-matter physics~\cite{Continentino2021},
and quantum chemistry~\cite{Deglmann2015}. Going beyond this application,
and as discussed in this paper, constrained energy minimization has
applications in the design of molecules and materials having desired
properties. In particular, the ability to enforce constraints on conserved quantities (such as particle number or spin) allows for the precise targeting of specific electronic symmetry sectors -- a capability that is often challenging for standard unconstrained variational approaches which may collapse to incorrect sectors (e.g., neutral species instead of ions)~\cite{Ryabinkin2018ConstrainedVQ, McClean2016, Rubin2018}.

Beyond establishing a link between quantum thermodynamics and semidefinite
optimization, Ref.~\cite{liu2025qthermoSDPs} proposed various classical
and hybrid quantum-classical (HQC) algorithms for solving semidefinite
programs, being motivated by physical intuition coming from quantum
thermodynamics when applied to the special class of constrained energy
minimization problems. These algorithms, hereafter referred to as
the LMPW algorithms, have provable guarantees for their runtimes
and are based on the observation that parameterized thermal states,
also known as non-Abelian thermal states~\cite{YungerHalpern2016,YungerHalpern2020,Majidy2023}
or quantum Boltzmann machines~\cite{Amin2018,Benedetti2017,Kieferova2017},
are optimal for solving constrained free-energy minimization problems,
which closely approximate the original constrained energy minimization
problems for sufficiently low temperatures. Indeed, Ref.~\cite{liu2025qthermoSDPs}
established that the optimization landscapes of dual chemical potential
maximization problems are concave in the parameters of the thermal
states, so that conventional optimization techniques like gradient
ascent, stochastic gradient ascent, and their variants~\cite{Bubeck2015}
are guaranteed to converge. The HQC algorithms are variational in
nature, using a quantum computer for two purposes only, which include
the preparation of parameterized thermal states and estimating expectations
of observables (i.e., the Hamiltonian and non-commuting charges).
A classical computer is then employed for storing and updating the
parameters of the parameterized thermal states as the HQC algorithm
progresses toward convergence. The LMPW HQC algorithms make use
of first-derivative information to determine the next step in a search
and can additionally incorporate second-derivative information to
improve their convergence to globally optimal solutions.

Ref.~\cite{liu2025qthermoSDPs} also proved that the HQC algorithms
are sample efficient, meaning that they require a number of thermal-state
preparations polynomial in the number of qubits in order to converge.
However, for the HQC algorithms to have efficient overall runtimes,
and not merely efficient sample complexity, the Hamiltonian and charges
should be such that it is possible to efficiently prepare their corresponding
low-temperature parameterized thermal states on a quantum computer.
While there has been much theoretical progress on this topic in recent
years~\cite{chen2023q_Gibbs_sampl,chen2023thermalstatepreparation,rajakumar2024gibbssampling,bergamaschi2024gibbs_sampling,chen2024sim_Lindblad,rouze2024efficientthermalization,bakshi2024hightemperaturegibbsstates,ding2024preparationlowtemperaturegibbs},
it remains largely open to design quantum algorithms that can prepare
the needed thermal states with temperature sufficiently low for accurately
solving constrained energy minimization problems.

Let us also recall an observation from~\cite{liu2025qthermoSDPs},
that the LMPW HQC algorithms for solving constrained free energy
minimization problems can alternatively be viewed as HQC algorithms
for preparing non-Abelian thermal states of thermodynamic systems
with conserved, non-commuting charges. This aspect of the LMPW HQC
algorithms could be useful in fields like materials science, condensed
matter physics, and quantum chemistry, because one expects such states
to arise in these thermodynamic systems and, after preparing them,
one could subsequently measure various observables of interest on
the generated states, in order to determine their properties. 

This capability is particularly timely, as recent developments in quantum thermodynamics have highlighted the rich physics of systems with conserved, non-commuting charges~\cite{campbell2025roadmapquantumthermodynamics, Majidy2023}. Unlike standard commuting charges, non-commutation relations can fundamentally alter many-body phenomena, leading to effects such as the modification of entropy production~\cite{Manzano_2022} and the enhancement of entanglement entropy~\cite{Majidy_2023increase_entropy}. Furthermore, these charges play a critical role in thermalization dynamics; for instance, they can invalidate the standard Eigenstate Thermalization Hypothesis (ETH)~\cite{Murthy_2023} or prevent certain forms of thermalization entirely~\cite{Majidy_2024}. However, while the dynamical consequences of non-commuting charges have been actively explored, the rigorous construction and calculation of the resulting equilibrium states remains a computational challenge. Our work addresses this gap by providing a suite of algorithms to explicitly construct these states via free energy minimization, offering a tool to verify the equilibrium endpoints predicted by these thermodynamic theories.

\subsection{Summary of contributions}

\label{subsec:Summary-of-contributions}In this paper, we apply the
LMPW algorithms to concrete examples of constrained energy minimization
problems of interest in quantum thermodynamics, including the one-
and two-dimensional Heisenberg models~\cite{Goldschmidt2011} with
nearest and next-nearest neighbor interactions, and with the non-commuting
charges set to the total magnetizations in the $x$, $y$, and $z$
directions. Such models describe interacting spin systems on a lattice
and are essential for understanding magnetic materials~\cite{Mattis2006}.

As a first conceptual contribution, we offer an alternative compelling
interpretation of the LMPW algorithms as methods for designing ground
and thermal states of controllable Hamiltonians, such that these states
satisfy desired properties (i.e., constraints on the expectation values
of certain observables). This application of the algorithms could
find use in the design of molecules and materials with desired properties. 

As a key theoretical contribution, our paper also establishes a bridge
between quantum error correction~\cite{Gottesman2009,Devitt2013,Lidar2013,Roffe2019,Bradshaw2025}
and conserved, non-commuting charges in quantum thermodynamics, by
observing that every stabilizer code~\cite{Gottesman1996,Calderbank1997}
can be understood as corresponding to a thermodynamic system with
conserved, non-commuting charges. In particular, we set the Hamiltonian
to be in correspondence with the stabilizer operators of the code,
while the non-commuting charges are in correspondence with the logical
operators of the code. We refer to such a thermodynamic system as
a \textit{stabilizer thermodynamic system}. 

Let us note that stabilizer
Hamiltonians were proposed in~\cite{Kitaev2003} and have been investigated
substantially since then (see, e.g.,~\cite{Bacon2001,Dennis2002,Jordan2006,Bacon2008,Yoshida2011,Young2012,Temme2015,Temme2017,Weinstein2019}).
However, interpreting logical operators in terms of
conserved, non-commuting charges in
a thermodynamic system has not been considered previously, to the
best of our knowledge. While the aforementioned previous works largely studied the passive properties of stabilizer Hamiltonians (e.g., thermal stability and memory lifetime~\cite{Dennis2002, Bacon2006, Yoshida2011feasibility, Weinstein2019}), our work utilizes stabilizer thermodynamic systems for active state preparation. By formally treating the logical operators as non-commuting charges with associated chemical potentials within a thermodynamic ensemble, we
demonstrate that the LMPW
HQC algorithms can serve as thermodynamics-inspired HQC algorithms
for encoding quantum information into quantum error-correcting codes,
which is rather different from the standard approach that employs
an encoding circuit to do so (see, e.g., \cite[Chapter~4]{Gottesman1997}, \cite{GrasslRoettelerBeth2003}, 
and \cite[Section 3.5]{Wilde2008}). The approach is instead based
on constrained energy minimization and the fact that the LMPW HQC
algorithms, as applied in this context, prepare low-temperature thermal
states of stabilizer thermodynamic systems. We envision that this approach
will be useful in experimental scenarios for which it is accessible to prepare
low-temperature thermal states of stabilizer thermodynamic systems.

As further examples of constrained energy minimization, and building
on the aforementioned bridge, we also benchmark the LMPW algorithms
on stabilizer thermodynamic systems constructed from the one-to-three-qubit
repetition code~\cite{Peres1985}, the perfect one-to-five-qubit code
\cite{Laflamme1996,Bennett1996}, and the two-to-four qubit error-detecting
code~\cite{Vaidman1996,Grassl1997}. For each case, we demonstrate
that the LMPW HQC algorithms are capable of steering the state of
the stabilizer thermodynamic  system in such a way that any desired
quantum state is encoded into the code's logical qubits approximately,
with the approximation improving as the temperature of the system
decreases. 

On the one hand, the numerical contributions of our paper can be viewed
as complementing the theoretical developments of~\cite{liu2025qthermoSDPs},
by providing a comprehensive numerical study of the performance of
the LMPW algorithms for constrained energy minimization problems.
The value of this contribution is that it offers a practical assessment
of the theoretical guarantees from~\cite{liu2025qthermoSDPs} for
various problems of interest in quantum thermodynamics. On the other
hand, the first conceptual contribution of our paper, i.e., the design
of ground and thermal states, could have far reaching applications
for the use of quantum computers in molecular and material design.
Additionally, the other theoretical contribution of our paper, i.e.,
establishing a bridge between quantum error correction and conserved,
non-commuting charges in quantum thermodynamics via the notion of
stabilizer thermodynamic  systems, is conceptually enlightening and
the related encoding algorithm could be useful as an alternative approach
for stabilizing quantum information.

\subsection{Overview of numerical simulations}

\label{subsec:Overview-of-numerical}Before proceeding with the details
of our paper, let us first elaborate on some aspects of the numerical simulations
that we conducted. Given that all of the constrained energy minimization
problems we consider can be formulated as semidefinite programs (SDPs)
\cite[Section 4.6.2]{Boyd2004}, the performance of the LMPW algorithms
can be benchmarked against the values returned by standard SDP solvers.
These solvers are based on the interior-point method~\cite{Alizadeh1998,Todd1998,Jiang2020}
and are so reliable that we can view their outputs as essentially
the true optimal values of the constrained energy minimization problems.
We test out four variants of the LMPW algorithms, which include
a first-order classical algorithm, a second-order classical algorithm,
a first-order HQC algorithm, and a second-order HQC algorithm. Considering
the classical algorithms in addition to the HQC algorithms is useful,
as the classical algorithms can be understood as fully noiseless simulators
of the HQC algorithms, such that the only kind of error involved when
using them is the algorithmic error and not any other kind of error,
e.g., shot noise arising from estimating expectations of observables.
For this reason, we can benchmark the HQC algorithms against the classical
algorithms, in addition to the true values generated by standard SDP
solvers.

We conducted all of our simulations for small system sizes, between
three to six qubits. These sizes are reasonable for providing a
sense of the performance of the LMPW algorithms, while taking a
reasonable amount of time for them to converge when conducted on readily
available laptop computers. In order to speed up the second-order HQC algorithms, which require estimating the Hessian, we parallelized the implementation of them in order to use multiple processor cores. All of the software code needed to reproduce
the results of our paper is publicly available as a zenodo repository \url{https://zenodo.org/records/17285628}. We leave it to future research
to scale up the examples to much larger numbers of qubits, possibly
by using tensor-network methods~\cite{Alhambra2021,Lu2025,Cocchiarella2025},
and to conduct the simulations on actual quantum computers, incorporating
techniques like error correction~\cite{Lidar2013} and mitigation
\cite{Cai2023} in order to reduce the effects of noise.

\subsubsection{Method for thermal state preparation}

\label{subsec:Methods-for-thermal}As mentioned in Section~\ref{subsec:Background-and-motivation},
a critical subroutine used by the first- and second-order HQC algorithms
is thermal state preparation, and there has been much development
on this topic in recent years~\cite{chen2023q_Gibbs_sampl,chen2023thermalstatepreparation,rajakumar2024gibbssampling,bergamaschi2024gibbs_sampling,chen2024sim_Lindblad,rouze2024efficientthermalization,bakshi2024hightemperaturegibbsstates,ding2024preparationlowtemperaturegibbs}.
However, in our paper, we do not make use of these more recent advanced
methods. Instead, we employ a simple method for thermal state preparation, consisting of feeding the numerical
values of the desired thermal state to Qiskit's DensityMatrix routine
\cite{JavadiAbhari2024} and employing its functionality for generic
state preparation when given a description of the state. Although
this method is elementary, it has value in demonstrating the performance
of the algorithms under the assumption that a robust method for thermal
state preparation exists. Indeed, the literature on thermal-state
preparation algorithms is rather active and advances are being made
rapidly, such that we expect continued progress in this direction
during the near term. As such, any of these alternative thermal
state preparation algorithms can be plugged
in to our approach, and further study of the LMPW HQC algorithms'
performance can be conducted under them.

When using the aforementioned elementary method for thermal state preparation, it is essential to make use of the following identity, holding for all $b \in\mathbb{R} $ and every Hermitian matrix $A$, in order for this approach to be numerically stable, especially at low temperature $T$:
\begin{equation}
      \frac{\exp(A)}{\operatorname{Tr}[\exp(A)]} = \frac{\exp(A+b I)}{\operatorname{Tr}[\exp(A+bI)]}.
      \label{eq:numerically-stable-thermal-state}
\end{equation}
For the specific case of the algorithms considered in our paper, we make the choices $A = -\frac{1}{T}(H-\mu\cdot Q)$ and $b = \frac{1}{T} \lambda_{\min}\!\left(H-\mu\cdot Q\right)$, so that the identity in~\eqref{eq:numerically-stable-thermal-state} applies as follows to our case:
\begin{multline}
    \frac{\exp\!\left(-\frac{1}{T}(H-\mu\cdot Q)\right)}{\operatorname{Tr}\!\left[\exp\!\left(-\frac{1}{T}(H-\mu\cdot Q)\right)\right]} =\\ \frac{\exp\!\left(-\frac{1}{T}(H-\mu\cdot Q) +bI\right)}{\operatorname{Tr}\!\left[\exp\!\left(-\frac{1}{T}(H-\mu\cdot Q)+bI \right)\right]}
    \label{eq:specific-thermal-state-stabilize}
\end{multline}
In all of our numerical implementations, we make use of the right-hand side of~\eqref{eq:specific-thermal-state-stabilize} when encoding the numerical values of a thermal state at temperature $T$. Specifically, we evaluate the matrix $-\frac{1}{T}(H-\mu\cdot Q) +bI$, apply the matrix exponential function, and then normalize the resulting matrix by its trace. Adding the specific value of $b$ given above has the effect of balancing out extremely low eigenvalues of $-\frac{1}{T}(H-\mu\cdot Q)$, so that the result of the matrix exponential function is numerically stable (not doing so leads to numerical instabilities).
See Sections~\ref{subsec:Quantum-thermodynamics-in} and~\ref{subsec:Review-of-constrained} for more details of the thermal state in~\eqref{eq:specific-thermal-state-stabilize}. 


\subsection{Paper organization}

\label{subsec:Paper-organization}Our paper is organized as follows.
Section~\ref{sec:Review} provides review material required for understanding
the rest of the paper. In particular, Section~\ref{subsec:Quantum-thermodynamics-in}
reviews quantum thermodynamics in the presence of conserved, non-commuting
charges, Section~\ref{subsec:LMPW-algorithms-for} reviews the LMPW
classical and HQC algorithms for constrained energy minimization,
Section~\ref{subsec:Quantum-Heisenberg-models} reviews quantum Heisenberg
models, and Section~\ref{subsec:Quantum-error-correction} reviews
quantum error correction codes with a focus on stabilizer codes.

Our contributions start with Section~\ref{sec:Design-of-Hamiltonians},
where we offer an alternative interpretation of the LMPW algorithms
as methods for designing Hamiltonians with ground and thermal states
satisfying desired properties. In Section~\ref{sec:Stabilizer-codes-as},
we observe that every stabilizer code corresponds to a stabilizer
thermodynamic system, in which the Hamiltonian is formed from the
code's stabilizer operators and the conserved, non-commuting charges
are formed from the code's logical operators. In Section~\ref{sec:Encoding-quantum-information},
we argue how the LMPW HQC algorithms for constrained energy minimization
can be used to encode quantum information into stabilizer codes when
applied to stabilizer thermodynamic systems. We then detail our simulation
results in Section~\ref{sec:Simulation-results}, exhibiting the performance
of the LMPW classical and HQC algorithms on various quantum Heisenberg
models and stabilizer thermodynamic systems. We finally conclude in
Section~\ref{sec:Conclusion} with a summary of our findings and suggestions
for future research.

\section{Review}

\label{sec:Review}

\subsection{Quantum thermodynamics with conserved, non-commuting charges}

\label{subsec:Quantum-thermodynamics-in}

Recall that a qudit is a
$d$-dimensional quantum system, where $d\in\mathbb{N}$, and a qubit
is a qudit such that $d=2$. A thermodynamic system of $n\in\mathbb{N}$
qudits is described by a Hamiltonian $H$ and a tuple
\begin{equation}
Q\coloneqq(Q_{1},\ldots,Q_{c})
\end{equation}
of non-commuting charges, where $c\in\mathbb{N}$. Each of these matrices
is Hermitian and has dimension $d^{n}\times d^{n}$. The charges are
conserved if each of them commutes with the Hamiltonian; i.e., if
\begin{equation}
[H,Q_{i}]=0\quad\text{for all}\quad i\in\left[c\right]\equiv\{1,\ldots,c\}.\label{eq:conserved-charges-def}
\end{equation}
However, it is not required that the charges in $Q$ form a commuting
set, and much of the recent interest in quantum thermodynamics regards
the scenario in which the charges do not form a commuting set~\cite{YungerHalpern2016,Guryanova2016,Lostaglio2017,YungerHalpern2020,Anshu2021,Kranzl2023,Majidy2023}.

A charge $Q_{i}$ is extensive if it can be written in the following
additive form:
\begin{equation}
Q_{i}=\sum_{j=1}^{n}C_{i}^{(j)},
\label{eq:def-extensive-charges}
\end{equation}
where here and throughout we adopt the following shorthand:
\begin{equation}
C_{i}^{(j)}\equiv I^{(1)}\otimes\cdots\otimes I^{(j-1)}\otimes C_{i}^{(j)}\otimes I^{(j+1)}\otimes\cdots\otimes I^{(n)},\label{eq:shorthand-omit-identities}
\end{equation}
which leaves tensor products with identity operators implicit if not
written explicitly. In~\eqref{eq:shorthand-omit-identities}, the
superscript of each operator indicates the system on which the operator
acts, $I^{(k)}$ denotes the $d\times d$ identity operator acting
on system $k\in\left[n\right]$, and $C_{i}^{(j)}$ denotes a single-site
charge operator acting on the $j$th system. In typical cases of extensive
charges, we have that $C_{i}^{(j)}=C_{i}^{(k)}$ for all systems $j,k\in[n]$,
so that the charge $Q_{i}$ is equal to the sum of the $i$th charges
of each system. 

Examples of conserved, non-commuting, and extensive charges that we
consider in the quantum Heisenberg model (see Section~\ref{subsec:Quantum-Heisenberg-models})
include the total $X$, $Y$, and $Z$ magnetizations:
\begin{equation}
X_{\tot}\coloneqq\sum_{j=1}^{n}X^{(j)},\quad Y_{\tot}\coloneqq\sum_{j=1}^{n}Y^{(j)},\quad Z_{\tot}\coloneqq\sum_{j=1}^{n}Z^{(j)},
\end{equation}
where $X^{(j)}$, $Y^{(j)}$, and $Z^{(j)}$ denote the Pauli operators
acting on system $j$ and are defined in terms of
\begin{equation}
X\coloneqq\begin{bmatrix}0 & 1\\
1 & 0
\end{bmatrix},\qquad Y\coloneqq\begin{bmatrix}0 & -i\\
i & 0
\end{bmatrix},\qquad Z\coloneqq\begin{bmatrix}1 & 0\\
0 & -1
\end{bmatrix}.\label{eq:pauli-defs}
\end{equation}

A parameterized thermal state of the thermodynamic system described
by $H$ and $(Q_{1},\ldots,Q_{c})$, at temperature $T>0$, is given
by
\begin{equation}
\rho_{T}(\mu)\coloneqq\frac{\exp\!\left(-\frac{1}{T}\left(H-\mu\cdot Q\right)\right)}{Z_{T}(\mu)},\label{eq:param-thermal-state}
\end{equation}
where $\mu\coloneqq\left(\mu_{1},\ldots,\mu_{c}\right)\in\mathbb{R}^{c}$
is the chemical potential vector, we have used the shorthand
\begin{equation}
\mu\cdot Q\coloneqq\sum_{i=1}^{c}\mu_{i}Q_{i},
\end{equation}
and the partition function $Z_{T}(\mu)$ is defined as
\begin{equation}
Z_{T}(\mu)\coloneqq\Tr\!\left[\exp\!\left(-\frac{1}{T}\left(H-\mu\cdot Q\right)\right)\right].\label{eq:partition-func-param-thermal-states}
\end{equation}
The chemical potential vector $\mu$ parameterizes the thermal state
$\rho_{T}(\mu)$ and can be used for enabling $\rho_{T}(\mu)$ to
satisfy constraints on expectation values of the charges in $Q$ (see,
e.g., \cite[Appendix B]{liu2025qthermoSDPs}). The state $\rho_{T}(\mu)$
is known in the quantum thermodynamics literature as a non-Abelian
thermal state~\cite{YungerHalpern2016,YungerHalpern2020,Majidy2023}
and in the quantum machine learning literature as a quantum Boltzmann
machine~\cite{Amin2018,Benedetti2017,Kieferova2017}. Let us note
that the state $\rho_{T}(\mu)$ in~\eqref{eq:param-thermal-state}
is well defined even if the charges are not conserved, but much of
the focus in the quantum thermodynamics literature is on the case
in which they are conserved (i.e., Eq.~\eqref{eq:conserved-charges-def}
is satisfied).

A low-temperature thermal state is an approximation of a Hamiltonian's
ground state, and vice versa, a Hamiltonian's ground state is an approximation
of a low-temperature thermal state. In Appendix~\ref{sec:Closeness-of-low-temperature},
we provide quantitative forms of this approximation in terms of three
well known distinguishability measures for quantum states, including
normalized trace distance, fidelity, quantum relative entropy, and
quantum R\'enyi relative entropies.

All of the thermodynamic models that we deal with in this paper, including
quantum Heisenberg models and stabilizer thermodynamic systems, satisfy
the charge conservation constraint in~\eqref{eq:conserved-charges-def}.
However, we should note that the developments in~\cite{liu2025qthermoSDPs}
do not depend on this constraint holding, nor does the interpretation
of the LMPW algorithms given in Section~\ref{sec:Design-of-Hamiltonians}.

\subsection{LMPW algorithms for constrained energy minimization}

\label{subsec:LMPW-algorithms-for}In this section, we review the
problem of constrained energy minimization, as well as the LMPW
classical and HQC algorithms for solving it~\cite{liu2025qthermoSDPs}.

\subsubsection{Review of constrained energy minimization}

\label{subsec:Review-of-constrained}

Let $H$ be a Hamiltonian, and
let $Q\coloneqq\left(Q_{1},\ldots,Q_{c}\right)$ be a tuple of non-commuting
charges, where $c\in\mathbb{N}$ and $H,Q_{1},\ldots,Q_{c}$ are $d^{n}\times d^{n}$
Hermitian matrices, corresponding to observables for $n$ qudit systems.
As mentioned previously, all of the developments in~\cite{liu2025qthermoSDPs}
do not require the charges in $Q$ to be conserved, and so we do not
make such an assumption in this section.

Given $H$, $Q$, and a tuple $q\coloneqq\left(q_{1},\ldots,q_{c}\right)\in\mathbb{R}^{c}$
of constraint values, constrained energy minimization corresponds
to the following optimization problem:
\begin{equation}
E(\mathcal{Q},q)\coloneqq\min_{\rho\in\mathcal{D}_{d^{n}}}\left\{ \Tr[H\rho]:\Tr[Q_{i}\rho]=q_{i}\ \forall i\in\left[c\right]\right\} ,\label{eq:constrained-energy-min-def}
\end{equation}
where $\mathcal{D}_{d^{n}}$ denotes the set of $d^{n}\times d^{n}$
density matrices and $\mathcal{Q}\equiv\left(H,Q_{1},\ldots,Q_{c}\right)$. Clearly, not every tuple of constraints can be satisfied, in which case the value of $E(\mathcal{Q},q)$ is $+\infty$. Going forward, in order to avoid this singularity, we therefore assume that the tuple of constraints, $\Tr[Q_{i}\rho]=q_{i}\ \forall i\in\left[c\right]$, is such that there exists at least one state $\rho$ satisfying them.

By introducing a temperature $T\geq0$, one can approximate this problem
by the following constrained free energy minimization problem \cite[Section II-B]{liu2025qthermoSDPs}:
\begin{multline}
F_{T}(\mathcal{Q},q)\coloneqq\\
\min_{\rho\in\mathcal{D}_{d^{n}}}\left\{ \Tr[H\rho]-TS(\rho):\Tr[Q_{i}\rho]=q_{i}\ \forall i\in\left[c\right]\right\} ,\label{eq:constrained-free-energy-min-def}
\end{multline}
where $S(\rho)\coloneqq-\Tr[\rho\ln\rho]$ is the von Neumann entropy.
By invoking properties of von Neumann entropy, the following approximation
holds \cite[Lemma 4]{liu2025qthermoSDPs}:
\begin{equation}
E(\mathcal{Q},q)\geq F_{T}(\mathcal{Q},q)\geq E(\mathcal{Q},q)-nT\ln d,
\end{equation}
so that one has $\varepsilon\geq0$ additive error in the approximation
by picking $T=\frac{\varepsilon}{n\ln d}$ \footnote{Let us note that the dimension of the system we are considering here is $d^n$, instead of $d$ as considered previously in \cite{liu2025qthermoSDPs}}. Observe that this choice
implies that the temperature is low, i.e., proportional to the desired
error $\varepsilon$ and inversely proportional to the number of qudits
in the system.

By invoking Lagrangian duality and properties of quantum
relative entropy, one can rewrite the constrained free energy minimization
problem in terms of the following dual chemical potential maximization
problem \cite[Appendix B]{liu2025qthermoSDPs}:
\begin{equation}
F_{T}(\mathcal{Q},q)=\sup_{\mu\in\mathbb{R}^{c}}f(\mu),\label{eq:free-energy-duality}
\end{equation}
where
\begin{equation}
f(\mu)\coloneqq\mu\cdot q-T\ln Z_{T}(\mu)
\label{eq:obj-func-chemical-pot-max}
\end{equation}
and the partition function $Z_{T}(\mu)$ is defined in~\eqref{eq:partition-func-param-thermal-states}.
As part of the proof of~\eqref{eq:free-energy-duality}, one deduces
that parameterized thermal states of the form in~\eqref{eq:param-thermal-state}
are optimal for the constrained free energy minimization problem in~\eqref{eq:constrained-free-energy-min-def} (as
mentioned before, these states are also known as non-Abelian thermal
states or quantum Boltzmann machines). Additionally, the objective
function $f(\mu)$ in~\eqref{eq:free-energy-duality} is concave in
$\mu$ and the spectral norm of its Hessian is bounded from above
by the following smoothness parameter \cite[Lemma 9]{liu2025qthermoSDPs}:
\begin{equation}
L\coloneqq\frac{2}{T}\sum_{i\in\left[c\right]}\left\Vert Q_{i}\right\Vert ^{2}=\frac{2n\ln d}{\varepsilon}\sum_{i\in\left[c\right]}\left\Vert Q_{i}\right\Vert ^{2},\label{eq:Hessian-bound}
\end{equation}
implying that standard optimization methods like gradient ascent and
its variants~\cite{Bubeck2015} are guaranteed to converge to a globally
optimal solution with a number of steps related to the smoothness
parameter $L$ in~\eqref{eq:Hessian-bound}.

The gradient $\nabla_{\mu}f(\mu)$ of the objective function $f(\mu)$
has the following entries \cite[Eq.~(18)]{liu2025qthermoSDPs}:
\begin{equation}
\frac{\partial}{\partial\mu_{i}}f(\mu)=q_{i}-\Tr[Q_{i}\rho_{T}(\mu)].\label{eq:gradient-elements}
\end{equation}
This form of the gradient allows for calculation by means of a classical
algorithm for small system sizes or for estimation by means of a quantum
algorithm equipped with subroutines for thermal state preparation
of $\rho_{T}(\mu)$ and measurement of each observable $Q_{i}$.

\subsubsection{LMPW first-order classical algorithm}

With the background from Section~\ref{subsec:Review-of-constrained}
in hand, let us now recall the LMPW first-order classical algorithm
for constrained energy minimization \cite[Algorithm 1]{liu2025qthermoSDPs}:
\begin{lyxalgorithm}
\label{alg:classical-LMPW}The algorithm proceeds according to the
following steps:
\begin{enumerate}
\item Set $m\leftarrow0$, and initialize $\mu^{m}\leftarrow\left(0,\ldots,0\right)$,
the temperature $T\leftarrow\frac{\varepsilon}{n\ln d}$, the step
size $\eta\in(0,L^{-1}]$, where $L$ is defined in~\eqref{eq:Hessian-bound},
and the stopping parameter $\delta>0$.
\item Increment $m$, and set
\begin{equation}
\mu^{m}\leftarrow\mu^{m-1}+\eta\nabla_{\mu}f(\mu^{m-1}),\label{eq:gradient-update-LMPW}
\end{equation}
where the gradient $\nabla_{\mu}f(\mu^{m-1})$ is defined from~\eqref{eq:gradient-elements}.
\item Repeat Step 2 until $\left\Vert \nabla_{\mu}f(\mu^{m})\right\Vert \leq\delta$.
\item Output
\begin{equation}
\mu^{m}\cdot q+\Tr[\left(H-\mu^{m}\cdot Q\right)\rho_{T}(\mu^{m})]\label{eq:classical-LMPW-alg-output}
\end{equation}
as an approximation of $E(\mathcal{Q},q)$ in~\eqref{eq:constrained-energy-min-def}.
\end{enumerate}
\end{lyxalgorithm}

Recall from \cite[Eq.~(D2)]{liu2025qthermoSDPs} that the following
equality holds:
\begin{equation}
f(\mu)=\mu\cdot q+\Tr[\left(H-\mu\cdot Q\right)\rho_{T}(\mu)]-TS(\rho_{T}(\mu)),\label{eq:actual-fmu-val}
\end{equation}
so that the output of Algorithm~\ref{alg:classical-LMPW} in~\eqref{eq:classical-LMPW-alg-output}
is an approximation of $f(\mu^{m})$ that omits the last term $-TS(\rho_{T}(\mu))$
in~\eqref{eq:actual-fmu-val}. As stated around \cite[Eq.~(21)]{liu2025qthermoSDPs},
we could alternatively output $f(\mu^{m})$ as an approximation of
the desired solution $E(\mathcal{Q},q)$, but the output in~\eqref{eq:classical-LMPW-alg-output}
is consistent with the output of the LMPW HQC algorithms. For our
purposes here, this allows for a more direct comparison of the classical
first-order algorithm above and the first-order HQC algorithm recalled
in Section~\ref{subsec:LMPW-first-order-hybrid}.

We have phrased the LMPW classical algorithm slightly differently
from how it was presented previously in \cite[Algorithm 1]{liu2025qthermoSDPs},
in order to simplify its presentation. The main difference is the
introduction of the stopping parameter $\delta$ above, as a simple
way of ending the algorithm after convergence has been achieved (i.e.,
when the norm of the gradient is small, further changes to $\mu$
are small and convergence near a globally optimal solution is guaranteed
at this point by \cite[Corollary 3.5]{garrigos2024handbookconvergencetheoremsstochastic}). 

To be implemented on a classical computer, Algorithm~\ref{alg:classical-LMPW}
requires computation of matrix multiplications, traces, and matrix
exponentials in order to evaluate the gradient in~\eqref{eq:gradient-elements}
and the algorithm's output in~\eqref{eq:classical-LMPW-alg-output}, while adhering to the approach from~\eqref{eq:specific-thermal-state-stabilize} for encoding the numerical values of the thermal states.
This is possible using standard tools like NumPy in Python~\cite{Harris2020},
and it is what we used in our realizations of the classical algorithm,
the results of which are reported in Section~\ref{sec:Simulation-results}. 

Let us also note that variations of gradient ascent can be employed
instead of the standard approach presented in Algorithm~\ref{alg:classical-LMPW}.
These include momentum~\cite{Polyak1964}, ADAM~\cite{Kingma2017},
and Nesterov accelerated gradient~\cite{Nesterov1983}. However, the
preferred approach here is actually Nesterov accelerated gradient,
because it converges most quickly among all of these methods and is
in fact provably optimal whenever the objective function is smooth
and concave~\cite{Nesterov1983}, as it is in the constrained free-energy
minimization problem.

Before moving on to the LMPW first-order HQC algorithm reviewed
in Section~\ref{subsec:LMPW-first-order-hybrid}, let us note that
the LMPW first-order classical algorithm can be considered a noiseless
simulation of the first-order HQC algorithm. The main reason for this
is that the classical algorithm numerically computes the expectation
values $\Tr[H\rho_{T}(\mu)]$ and $\Tr[Q_{i}\rho_{T}(\mu)]$, for
all $i\in\left[c\right]$, whereas the HQC algorithm estimates these
expectation values through sampling on a quantum computer. As such,
the classical algorithm can serve as a benchmark for the HQC algorithm,
in addition to the benchmark provided by standard SDP solvers.

\subsubsection{LMPW first-order hybrid quantum-classical algorithm}

\label{subsec:LMPW-first-order-hybrid}

As proposed in~\cite{liu2025qthermoSDPs}, Algorithm~\ref{alg:classical-LMPW}
can be modified to a first-order HQC algorithm by replacing the gradient
and output calculations in~\eqref{eq:gradient-update-LMPW} and~\eqref{eq:classical-LMPW-alg-output},
respectively, with gradient and output estimations that are realized
by a quantum computer. Estimating these quantities requires two capabilities:
preparing the thermal state $\rho_{T}(\mu)$ of temperature $T$ and
measuring the observables $H,Q_{1},\ldots,Q_{c}$. The former task
is the bottleneck for the algorithm, given that measuring $H,Q_{1},\ldots,Q_{c}$
can be performed efficiently for observables of physical interest.
As mentioned in Section~\ref{subsec:Methods-for-thermal}, here we
focus on a basic thermal state preparation routine: direct preparation
from the description of the density matrix from a quantum circuit that encodes its numerical values,
while leaving it as an open direction
to implement more advanced methods.

Algorithm~\ref{alg:classical-LMPW} is then modified as follows,
to become the LMPW first-order HQC algorithm for constrained energy
minimization:
\begin{lyxalgorithm}
\label{alg:HQC-LMPW-1st-order}The algorithm proceeds according
to the following steps:
\begin{enumerate}
\item Set $m\leftarrow0$, and initialize $\mu^{m}\leftarrow\left(0,\ldots,0\right)$,
the temperature $T\leftarrow\frac{\varepsilon}{n\ln d}$, the step
size $\eta\in(0,L^{-1}]$, where $L$ is defined in~\eqref{eq:Hessian-bound},
and the stopping parameter $\delta>0$.
\item Increment $m$. For all $i\in\left[c\right]$, set $\widetilde{Q}_{i}$
as an estimate of $\Tr[Q_{i}\rho_{T}(\mu^{m-1})]$, and set
\begin{equation}
\mu_{i}^{m}\leftarrow\mu_{i}^{m-1}+\eta\left(q_{i}-\widetilde{Q}_{i}\right).\label{eq:gradient-update-HQC-LMPW-1st-order}
\end{equation}
\item Repeat Step 2 until $\sqrt{\sum_{i\in\left[c\right]}\left(q_{i}-\widetilde{Q}_{i}\right)^{2}}\leq\delta$.
\item Set $\widetilde{H}$ as an estimate of $\Tr[H\rho_{T}(\mu^{m})]$,
and set $\widetilde{Q}_{i}$ as an estimate of $\Tr[Q_{i}\rho_{T}(\mu^{m})]$.
Output
\begin{equation}
\mu^{m}\cdot q+\widetilde{H}-\sum_{i\in\left[c\right]}\mu_{i}^{m}\widetilde{Q}_{i}\label{eq:HQC-LMPW-alg-output-1st-order}
\end{equation}
as an approximation of $E(\mathcal{Q},q)$ in~\eqref{eq:constrained-energy-min-def}.
\end{enumerate}
\end{lyxalgorithm}

As presented above, Algorithm~\ref{alg:HQC-LMPW-1st-order} is a
simplification of the first-order HQC algorithm proposed in \cite[Algorithm~2]{liu2025qthermoSDPs}.
Since the gradient calculations are replaced by gradient estimates,
the latter are random variables, and the algorithm thus employs stochastic
gradient ascent in its search for a globally optimal solution (see,
e.g., \cite[Chapter 5]{garrigos2024handbookconvergencetheoremsstochastic}
and \cite[Section 6]{Bubeck2015}). Since one of the goals of~\cite{liu2025qthermoSDPs}
was to prove that the algorithm converges to a global optimum, the
presentation of the algorithm was more involved in order to provide
a theoretical guarantee on its performance. However, in practice,
we find in our simulations that the above simplification of the algorithm
works well for converging near the global optimum.

The first-order HQC algorithm proposed in \cite[Algorithm 2]{liu2025qthermoSDPs}
focused on the case in which the Hamiltonian $H$ and the non-commuting
charges $Q_{1},\ldots,Q_{c}$ can be written as a real linear combination
of Pauli matrices. As described there, one can invoke a Monte Carlo
sampling procedure in order to measure these observables. Such a procedure
is useful when the coefficients in the linear combination of Pauli
matrices differ markedly, because terms with less weight will be sampled
less frequently, while terms with more weight will be sampled more
frequently. However, in all of our example problems, the coefficients
have equal weight, so that there is no benefit from performing such
a Monte Carlo sampling procedure and we instead simply measure each
term in the observables sufficiently many times in order to obtain
reasonable estimates.

\subsubsection{LMPW second-order classical algorithm}

It is possible to modify Algorithm~\ref{alg:classical-LMPW} to
become a second-order classical algorithm, i.e., incorporating second-derivative
information to aid the search \cite[Section VI]{liu2025qthermoSDPs}.
In this case, the Hessian of the objective function $f(\mu)$ has
the following matrix elements \cite[Eqs.~(24)--(25)]{liu2025qthermoSDPs}:
\begin{multline}
\frac{\partial^{2}}{\partial\mu_{i}\partial\mu_{j}}f(\mu)=-\frac{1}{T}\int_{0}^{1}ds\ \Tr[\rho_{T}(\mu)^{1-s}Q_{i}\rho_{T}(\mu)^{s}Q_{j}]\\
+\frac{1}{T}\Tr[Q_{i}\rho_{T}(\mu)]\Tr[Q_{j}\rho_{T}(\mu)].\label{eq:Hessian-matrix-real-integral}
\end{multline}
As noted in \cite[Lemmas~6 and 8]{liu2025qthermoSDPs}, the Hessian is negative semidefinite, implying that the objective function $f(\mu)$ in~\eqref{eq:obj-func-chemical-pot-max} is concave.
Let us denote the Hessian matrix evaluated at $\nu\in\mathbb{R}^{c}$
by $\nabla_{\mu}^{2}f(\nu)$, so that it has matrix elements
\begin{multline}
\left[\nabla_{\mu}^{2}f(\nu)\right]_{i,j}=-\frac{1}{T}\int_{0}^{1}ds\ \Tr[\rho_{T}(\nu)^{1-s}Q_{i}\rho_{T}(\nu)^{s}Q_{j}]\\
+\frac{1}{T}\Tr[Q_{i}\rho_{T}(\nu)]\Tr[Q_{j}\rho_{T}(\nu)].\label{eq:Hessian-elements-w-notation}
\end{multline}
One can then use the explicit expression in~\eqref{eq:Hessian-elements-w-notation}
to calculate the Hessian at each iteration of Algorithm~\ref{alg:classical-LMPW}.
This leads to the following modified classical algorithm for constrained
energy minimization:
\begin{lyxalgorithm}
\label{alg:classical-LMPW-2nd-order}The algorithm proceeds according
to the following steps:
\begin{enumerate}
\item Set $m\leftarrow0$, and initialize $\mu^{m}\leftarrow\left(0,\ldots,0\right)$,
the temperature $T\leftarrow\frac{\varepsilon}{n\ln d}$, the step
size $\eta>0$, 
and the stopping parameter $\delta>0$.
\item Increment $m$, and solve for $\Delta$ in
\begin{equation}
\nabla_{\mu}^{2}f(\mu^{m-1})\Delta=\nabla_{\mu}f(\mu^{m-1}),
\end{equation}
where the gradient $\nabla_{\mu}f(\mu^{m-1})$ is defined from~\eqref{eq:gradient-elements}
and the Hessian matrix $\nabla_{\mu}^{2}f(\mu^{m-1})$ from~\eqref{eq:Hessian-elements-w-notation}.
Set
\begin{equation}
\mu^{m}\leftarrow\mu^{m-1}-\eta\Delta.\label{eq:gradient-update-LMPW-2nd-order}
\end{equation}
\item Repeat Step 2 until $\left\Vert \nabla_{\mu}f(\mu^{m})\right\Vert \leq\delta$.
\item Output
\begin{equation}
\mu^{m}\cdot q+\Tr[\left(H-\mu^{m}\cdot Q\right)\rho_{T}(\mu^{m})]\label{eq:classical-LMPW-alg-output-2nd-order}
\end{equation}
as an approximation of $E(\mathcal{Q},q)$ in~\eqref{eq:constrained-energy-min-def}.
\end{enumerate}
\end{lyxalgorithm}

Due to the Hessian being negative semidefinite, the minus sign in~\eqref{eq:gradient-update-LMPW-2nd-order} effectively cancels out with the implicit minus sign from the Hessian $\nabla_{\mu}^{2}f(\mu^{m-1})$, so that each gradient update always proceeds in a positive direction and corresponds to an ascent procedure.

The algorithm can be sensitive to the step size $\eta$. The basic form of a Newton update simply sets $\eta = 1$. However, such an update does not always lead to convergence, and instead we incorporate a backtracking approach based on the value of the norm of the gradient. If, at a certain iteration, the gradient norm increases, this indicates that the algorithm is not making progress, and we reduce the step size in a multiplicative way until the gradient norm decreases. At that point we continue with larger step sizes while making progress.

By incorporating second-derivative information, Algorithm~\ref{alg:classical-LMPW-2nd-order}
typically requires fewer iterations to converge than does Algorithm
\ref{alg:classical-LMPW}. However, each iteration of Algorithm
\ref{alg:classical-LMPW-2nd-order} is computationally more expensive
because it requires the calculation of the Hessian matrix.

Let us note that Algorithm~\ref{alg:classical-LMPW-2nd-order} can
be used as a direct benchmark for the performance of the second-order
HQC algorithm detailed in Section~\ref{subsec:LMPW-second-order-hybrid}.
As before, it can be considered a noiseless simulation of Algorithm
\ref{alg:HQC-LMPW-2nd-order}, in the sense that it does not involve
shot noise error that occurs from having to estimate the gradient
and the Hessian of the objective function $f(\mu)$.

\subsubsection{LMPW second-order hybrid quantum-classical algorithm}

\label{subsec:LMPW-second-order-hybrid}Finally, let us recall a
second-order HQC algorithm for constrained energy minimization, as
proposed in \cite[Section VI]{liu2025qthermoSDPs}. A key observation,
originally made in the context of a general-purpose natural gradient
algorithm for quantum Boltzmann machines \cite[Theorem 2]{patel2024naturalgradientparameterestimation},
is that the Hessian matrix in~\eqref{eq:Hessian-matrix-real-integral}
admits the following alternative expression \cite[Lemma 6]{liu2025qthermoSDPs}:
\begin{multline}
\frac{\partial^{2}}{\partial\mu_{i}\partial\mu_{j}}f(\mu)=-\frac{1}{2T}\Tr[\left\{ \Phi_{\mu}(Q_{i}),Q_{j}\right\} \rho_{T}(\mu)]\\
+\frac{1}{T}\Tr[Q_{i}\rho_{T}(\mu)]\Tr[Q_{j}\rho_{T}(\mu)],\label{eq:Hessian-matrix-fourier-integral}
\end{multline}
where $\Phi_{\mu}$ is the following quantum channel~\cite{Hastings2007,Kim2012,Anshu2021,Coopmans2024,patel2024quantumboltzmannmachinelearning,patel2024naturalgradientparameterestimation}:
\begin{equation}
\Phi_{\mu}(\cdot)\coloneqq\int_{-\infty}^{\infty}dt\ p(t)\,e^{-i\left(H-\mu\cdot Q\right)t/T}(\cdot)e^{i\left(H-\mu\cdot Q\right)t/T},
\end{equation}
and $p(t)$ is the following high-peak tent probability density defined
for $t\in\mathbb{R}$ \cite[Eq.~(9)]{patel2024quantumboltzmannmachinelearning}:
\begin{equation}
p(t)\coloneqq\frac{2}{\pi}\ln\left|\coth(\pi t/2)\right|.
\end{equation}
The elements of the Hessian matrix in~\eqref{eq:Hessian-matrix-fourier-integral}
break up into two terms (the first in the first line of~\eqref{eq:Hessian-matrix-fourier-integral}
and the second in the second line of~\eqref{eq:Hessian-matrix-fourier-integral}).
The second term can be estimated by measuring the observable $Q_{i}\otimes Q_{j}$
with respect to the state $\rho_{T}(\mu)\otimes\rho_{T}(\mu)$. The
first term can be estimated by a combination of several subroutines,
including classical random sampling from $p(t)$, Hamiltonian simulation
\cite{lloyd1996universal,childs2018toward} according to $e^{-i\left(H-\mu\cdot Q\right)t/T}$,
and the Hadamard test~\cite{Cleve1998}. The precise procedure is
detailed in \cite[Figure 3 and Appendix H]{liu2025qthermoSDPs}.

Let us now write down the explicit steps of the LMPW second-order
HQC algorithm for constrained energy minimization:
\begin{lyxalgorithm}
\label{alg:HQC-LMPW-2nd-order}The algorithm proceeds according
to the following steps:
\begin{enumerate}
\item Set $m\leftarrow0$, and initialize $\mu^{m}\leftarrow\left(0,\ldots,0\right)$,
the temperature $T\leftarrow\frac{\varepsilon}{n\ln d}$, the step
size $\eta>0$, 
and the stopping parameter $\delta>0$.

\item Increment $m$. For all $i\in\left[c\right]$, set
\begin{equation}
\widetilde{\nabla}\leftarrow\left(q_{i}-\widetilde{Q}_{i}\right)_{i\in[c]}
\end{equation}
as an estimate of the gradient, where $\widetilde{Q}_{i}$ is an estimate
of $\Tr[Q_{i}\rho_{T}(\mu^{m-1})]$. Set $\widetilde{\nabla^{2}}$
as an estimate of the Hessian matrix in~\eqref{eq:Hessian-matrix-fourier-integral},
by following \cite[Algorithm~4]{liu2025qthermoSDPs}. Solve for $\Delta$
in
\begin{equation}
\widetilde{\nabla^{2}}\Delta=\widetilde{\nabla},
\end{equation}
and set
\begin{equation}
\mu^{m}\leftarrow\mu^{m-1}-\eta\Delta,\label{eq:gradient-update-LMPW-2nd-order-1}
\end{equation}

\item Repeat Step 2 until $\sqrt{\sum_{i\in\left[c\right]}\left(q_{i}-\widetilde{Q}_{i}\right)^{2}}\leq\delta$.
\item Set $\widetilde{H}$ as an estimate of $\Tr[H\rho_{T}(\mu^{m})]$,
and set $\widetilde{Q}_{i}$ as an estimate of $\Tr[Q_{i}\rho_{T}(\mu^{m})]$
for all $i\in\left[c\right]$. Output
\begin{equation}
\mu^{m}\cdot q+\widetilde{H}-\sum_{i\in\left[c\right]}\mu_{i}^{m}\widetilde{Q}_{i}\label{eq:HQC-LMPW-alg-output-2nd-order}
\end{equation}
as an approximation of $E(\mathcal{Q},q)$ in~\eqref{eq:constrained-energy-min-def}.
\end{enumerate}
\end{lyxalgorithm}

In addition to the backtracking approach mentioned after Algorithm~\ref{alg:classical-LMPW-2nd-order}, we also regularize the estimate of the Hessian matrix, by adding a scaled identity matrix to it, in order to ensure that it is negative semidefinite. By employing both backtracking and regularization, we find in our simulations that the algorithm converges in all cases. Without doing so, the estimate of the Hessian matrix need not be negative semidefinite, due to noise inherent in the sampling process, and the algorithm need not converge in such a case. 

The implementation of the channel $\Phi_\mu$ can be simplified for estimating the Hessian when the charges are conserved (i.e., when~\eqref{eq:conserved-charges-def} holds). To see this, observe that the term $\Phi_\mu(Q_i)$ appears in \eqref{eq:Hessian-matrix-fourier-integral}, which can be written as follows:
\begin{align}
    \Phi_\mu(Q_i) & = \int_{-\infty}^{\infty}dt\ p(t)\,e^{-i\left(H-\mu\cdot Q\right)t/T}Q_ie^{i\left(H-\mu\cdot Q\right)t/T} \\
    & = \int_{-\infty}^{\infty}dt\ p(t)\,e^{i\mu\cdot Qt/T}Q_i e^{-i\mu\cdot Qt/T},
    \label{eq:simplify-phi-conserved-2}
\end{align}
where the second equality follows by employing the assumption in \eqref{eq:conserved-charges-def} and the fact that 
\begin{equation}
    e^{A+B} = e^{A}e^{B}
    \label{eq:exp-sum-commute}
\end{equation}
for Hermitian $A$ and $B$ satisfing $[A,B]=0$. If the charges are extensive in addition to being conserved (i.e., if~\eqref{eq:def-extensive-charges} holds), the implementation of the channel $\Phi_\mu$ can be simplified even more. This is the case for the quantum Heisenberg models reviewed in Section~\ref{subsec:Quantum-Heisenberg-models}. In this case, it follows that
\begin{equation}
    \Phi_\mu(Q_i) = \sum_{j=1}^{n}\int_{-\infty}^{\infty}dt\ p(t)\, e^{\frac{it}{T}  \mu\cdot  C^{(j)}} C_{i}^{(j)}  e^{-\frac{it}{T}  \mu \cdot  C^{(j)} },
    \label{eq:simplify-phi-conserved-extensive-charges}
\end{equation}
where $C^{(j)} \equiv (C_{1}^{(j)}, \ldots, C_{c}^{(j)})$. This simplification has the effect of reducing a multi-site  
Hamiltonian evolution, acting on all systems, to single-site evolutions, which act only on single systems, and can significantly reduce the circuit depth. See Appendix~\ref{app:simplifying-phi-mu} for a proof of~\eqref{eq:simplify-phi-conserved-extensive-charges}.

\subsection{Quantum Heisenberg models}

\label{subsec:Quantum-Heisenberg-models}As mentioned in Section~\ref{subsec:Summary-of-contributions},
quantum Heisenberg models describe interacting spin systems on a lattice
and are essential for understanding magnetic materials~\cite{Mattis2006}.
They are key examples considered for our simulations of the LMPW
algorithms, the results of which we report in Section~\ref{sec:Simulation-results}.
A general form for a quantum Heisenberg model involves qubits placed
at the vertices of a graph $G$ with vertex set $V$ and edge set
$E$~\cite{Goldschmidt2011}. The Hamiltonian is defined in terms
of the graph $G$ and a list $\vec{J}\coloneqq(J_{i,j})_{\left\{ i,j\right\} \in E}$
of coupling coefficients as follows:
\begin{multline}
H(G,\vec{J})\coloneqq\\
\sum_{\left\{ i,j\right\} \in E}J_{i,j}\left(X^{(i)}\otimes X^{(j)}+Y^{(i)}\otimes Y^{(j)}+Z^{(i)}\otimes Z^{(j)}\right).\label{eq:heisenberg-model}
\end{multline}
The total magnetizations in the $x$, $y$, and $z$ directions are
defined as follows:
\begin{align}
X(G) & \coloneqq\sum_{i\in V}X^{(i)},\label{eq:heisenberg-model-charge-x}\\
Y(G) & \coloneqq\sum_{i\in V}Y^{(i)},\\
Z(G) & \coloneqq\sum_{i\in V}Z^{(i)}.\label{eq:heisenberg-model-charge-z}
\end{align}

The following commutation relations hold for every graph $G$ and
for every list $\vec{J}$ of coupling coefficients:
\begin{align}
\left[H(G,\vec{J}),X(G)\right] & =\left[H(G,\vec{J}),Y(G)\right]\label{eq:heis-comm-1}\\
 & =\left[H(G,\vec{J}),Z(G)\right]\\
 & =0.\label{eq:eq:heis-comm-last}
\end{align}
These are a direct consequence of the following commutation relations
holding:
\begin{align}
\left[X\otimes X+Y\otimes Y+Z\otimes Z,X\otimes I+I\otimes X\right] & =0,\\
\left[X\otimes X+Y\otimes Y+Z\otimes Z,Y\otimes I+I\otimes Y\right] & =0,\\
\left[X\otimes X+Y\otimes Y+Z\otimes Z,Z\otimes I+I\otimes Z\right] & =0,
\end{align}
which in turn follow directly from applying the equalities
\begin{align}
XY & =-YX=iZ,\\
YZ & =-ZY=iX,\\
ZX & =-XZ=iY.
\end{align}
Clearly, the total magnetization operators $X(G)$, $Y(G)$, and $Z(G)$
do not form a commuting set. 

As a consequence of the commutation relations in~\eqref{eq:heis-comm-1}--\eqref{eq:eq:heis-comm-last},
it follows that every quantum Heisenberg model realizes a thermodynamic
system with conserved, non-commuting charges, in which $H(G,\vec{J})$
is the Hamiltonian and $X(G)$, $Y(G)$, and $Z(G)$ are the conserved,
non-commuting charges. Several examples of quantum Heisenberg models
were closely studied in various prior works on quantum thermodynamics
with conserved, non-commuting charges~\cite{YungerHalpern2020,Kranzl2023,Majidy2023}.

Special cases of the general model presented in~\eqref{eq:heisenberg-model}--\eqref{eq:eq:heis-comm-last}
include the one- and two-dimensional quantum Heisenberg models with
nearest-neighbor and next-nearest-neighbor interactions. In the
case of a one-dimensional model, the qubits are arranged on a line
with edges placed between nearest neighbors, for the nearest neighbor
case, and edges additionally placed between next-nearest neighbors,
for the next-nearest neighbor case. To simplify the model further,
we set the coupling coefficients to be a fixed value $J$ between
nearest neighbors, and for next-nearest neighbors, we set the coupling
coefficients to be fixed at a constant value, say, $\lambda J$, where
$\lambda\in[0,1]$. For a two-dimensional model, the main difference
is that the qubits are arranged on a square lattice, and we take similar
conventions, as mentioned above, for the nearest-neighbor and next-nearest-neighbor cases. See Figure~\ref{fig:heisenberg-models} for a depiction of these models.

\begin{figure}
\includegraphics[width=\linewidth]{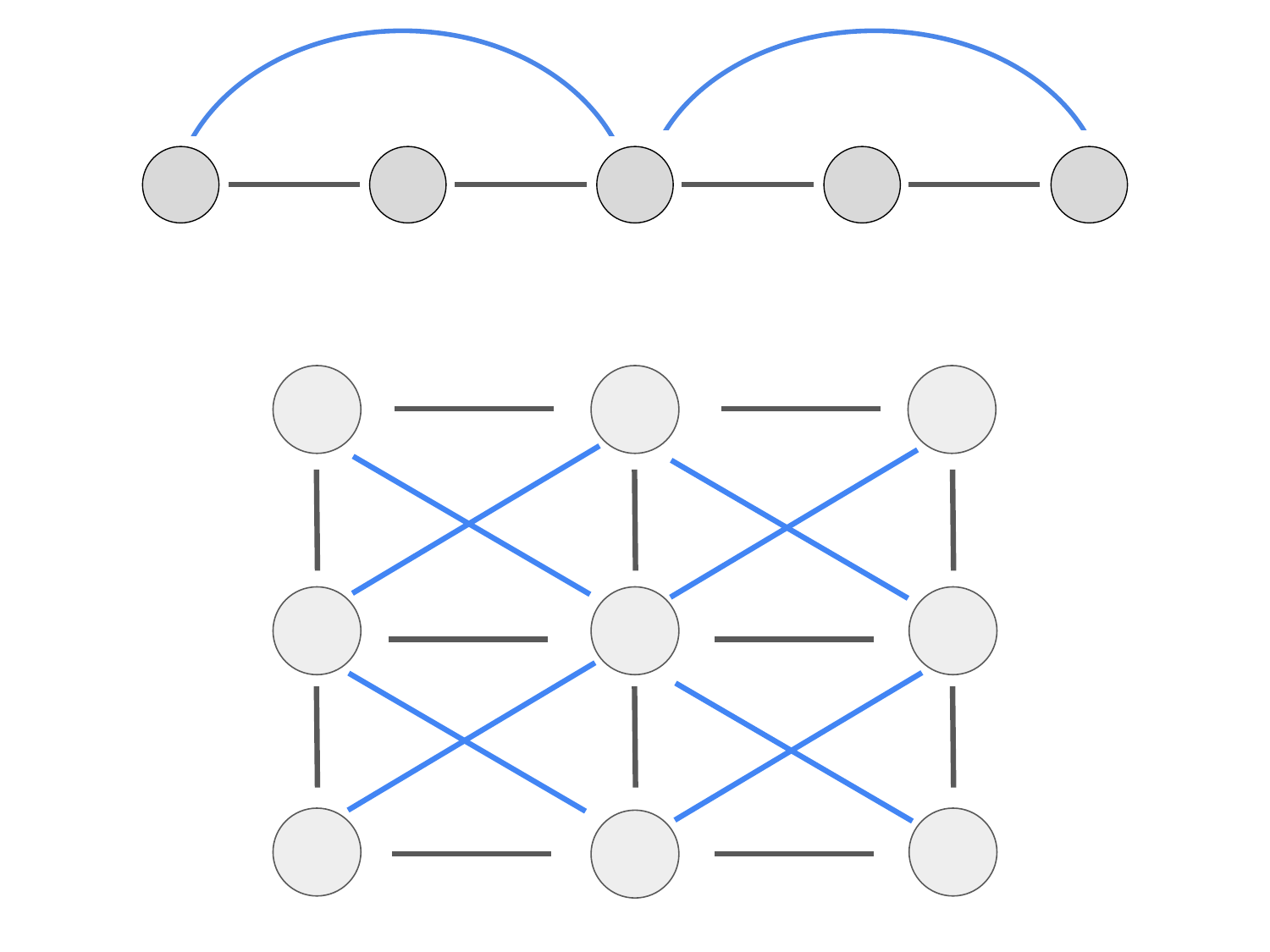}
\caption{Depiction of one- and two-dimensional Heisenberg models. (Top) Qubits of the one-dimensional model are arranged on a line, with edges placed between nearest neighbors (black lines) and additionally between next-nearest neighbors (blue curves).
(Bottom) Qubits of the two-dimensional model are arranged on a square lattice, with nearest-neighbor and next-nearest-neighbor interactions shown in black and blue lines, respectively.}
\label{fig:heisenberg-models}
\end{figure}

\subsection{Stabilizer codes for quantum error correction}

\label{subsec:Quantum-error-correction}Here we provide a brief review
of stabilizer codes~\cite{Gottesman1996,Calderbank1997} (see also
\cite{Gottesman1997}), which suffices for our purposes in this paper,
and we point to various reviews for further exposition~\cite{Gottesman2009,Devitt2013,Lidar2013,Roffe2019,Bradshaw2025}.

Let $\mathcal{P}^{n}$ denote the Pauli group for $n\in\mathbb{N}$
qubits. The Pauli group for a single qubit has the generating set
$\left\langle X,Y,Z\right\rangle $, where $X$, $Y$, and $Z$ are
defined in~\eqref{eq:pauli-defs}, and the Pauli group on $n$ qubits
is generated by taking tensor products of the elements of this generating
set.

Let $k\in\mathbb{N}$ be such that $k\leq n$. A stabilizer $\mathcal{S}$
is an Abelian subgroup of the Pauli group $\mathcal{P}^{n}$ not containing
$-I$ and is generated by $n-k$ commuting Pauli operators $S_{1},\ldots,S_{n-k}$:
\begin{equation}
\mathcal{S}\coloneqq\left\langle S_{1},\ldots,S_{n-k}\right\rangle .
\end{equation}
By definition, the stabilizer operators $S_{1},\ldots,S_{n-k}$ in
the generating set are independent, in the sense that it is not possible
to realize any other one in the generating set by multiplying any
two of them in it. The codespace $\mathcal{C}$ is the vector space
stabilized by $\mathcal{S}$; i.e., it is equal to the simultaneous
$+1$-eigenspace of the stabilizer operators in $\mathcal{S}$:
\begin{equation}
\mathcal{C}\coloneqq\SPAN\!\left\{ |\psi\rangle\in\left(\mathbb{C}^{2}\right)^{\otimes n}:S|\psi\rangle=|\psi\rangle\ \ \forall S\in\mathcal{S}\right\} .
\end{equation}
The condition that $-I\notin\mathcal{S}$ ensures that $\mathcal{S}$
has a simultaneous $+1$-eigenspace. The codespace $\mathcal{C}$
is the subspace of $\left(\mathbb{C}^{2}\right)^{\otimes n}$ into
which quantum information is encoded. The projection onto the codespace
is given by
\begin{align}
\Pi_{\mathcal{C}} & \coloneqq\frac{1}{2^{n-k}}\sum_{S\in\mathcal{S}}S=\prod_{i=1}^{n-k}\frac{I^{\otimes n}+S_{i}}{2},
\end{align}
where the latter product is over $S_{1},\ldots,S_{n-k}$, i.e., those
in the generating set for $\mathcal{S}$.

The set $\mathcal{L}$ of logical operators for the stabilizer $\mathcal{S}$
consists of the elements of the normalizer of $\mathcal{S}$:
\begin{equation}
\mathcal{L}\coloneqq N(\mathcal{S})\coloneqq\left\{ L\in\mathcal{P}^{n}:L\mathcal{S}=\mathcal{S}L\right\} ,\label{eq:logical-ops-normalizer}
\end{equation}
where $L\mathcal{S}$ is the group realized by left multiplying all
elements of $\mathcal{S}$ by $L$. For a stabilizer code, the normalizer
$N(\mathcal{S})$ of $\mathcal{S}$ coincides with its centralizer
\cite[Section 3.2]{Gottesman1997}:
\begin{equation}
N(\mathcal{S})=C(\mathcal{S})\coloneqq\left\{ L\in\mathcal{P}^{n}:LS=SL\ \ \forall S\in\mathcal{S}\right\} .\label{eq:centralizer-stab}
\end{equation}
The logical operators of $\mathcal{S}$ are generated by $2k$ operators
that are labeled by $\overline{X}_{1},\overline{Z}_{1},\ldots,\overline{X}_{k},\overline{Z}_{k}$,
with this notation being used because their commutation relations
are the same as those of the standard Pauli operators $X_{1},Z_{1},\ldots,X_{k},Z_{k}$
acting on $k$ qubits:
\begin{align}
\left[\overline{X}_{i},\overline{X}_{j}\right] & =0\quad\forall i,j\in[k],\label{eq:logical-ops-comm-rels-1}\\
\left[\overline{Z}_{i},\overline{Z}_{j}\right] & =0\quad\forall i,j\in[k],\\
\left[\overline{X}_{i},\overline{Z}_{j}\right] & =0\quad\forall i,j\in[k]:i\neq j,\\
\left\{ \overline{X}_{i},\overline{Z}_{i}\right\}  & =0\quad\forall i\in[k].\label{eq:logical-ops-comm-rels-last}
\end{align}
As such, the logical operators can be used to manipulate the $k$
logical qubits encoded in the stabilizer code. In correspondence with
the equality $iXZ=Y$ that holds for the Pauli operators, we can also
define the logical $\overline{Y}_{i}$ operator for each $i\in\left[k\right]$
as follows:
\begin{equation}
\overline{Y}_{i}\coloneqq i\overline{X}_{i}\overline{Z}_{i}.
\end{equation}

A state $\rho$ of $k$ qubits is uniquely characterized by its Pauli
coefficients $\left(r_{i_{1},\ldots,i_{k}}\right)_{i_{1},\ldots,i_{k}\in\{0,1,2,3\}}$:
\begin{equation}
\rho=\frac{1}{2^{k}}\sum_{i_{1},\ldots,i_{k}\in\{0,1,2,3\}}r_{i_{1},\ldots,i_{k}}\sigma_{i_{1}}\otimes\cdots\otimes\sigma_{i_{k}},\label{eq:pauli-expansion-k-qubits}
\end{equation}
where we have introduced the notation $\sigma_{0}\equiv I$, $\sigma_{1}\equiv X$,
$\sigma_{2}\equiv Y$, and $\sigma_{3}\equiv Z$. Similar to this,
a state $\overline{\rho}$ of $k$ logical qubits encoded in the codespace
$\mathcal{C}$ of a stabilizer code $\mathcal{S}$ is uniquely characterized
by its logical Pauli coefficients $\left(r_{i_{1},\ldots,i_{k}}\right)_{i_{1},\ldots,i_{k}\in\{0,1,2,3\}}$:
\begin{equation}
\overline{\rho}=\Pi_{\mathcal{C}}\left(\frac{1}{2^{k}}\sum_{i_{1},\ldots,i_{k}\in\{0,1,2,3\}}r_{i_{1},\ldots,i_{k}}\overline{\sigma}_{i_{1},1}\cdots\overline{\sigma}_{i_{k},k}\right)\Pi_{\mathcal{C}},\label{eq:encoded-pauli-expansion}
\end{equation}
where we have introduced the notation $\sigma_{0,i}\equiv\overline{I}_{i}=I^{\otimes n}$,
$\sigma_{1,i}\equiv\overline{X}_{i}$, $\sigma_{2}\equiv\overline{Y}_{i}$,
and $\sigma_{3}\equiv\overline{Z}_{i}$ for the logical operators
of the code. For clarity, we include a short derivation of~\eqref{eq:encoded-pauli-expansion}
in Appendix~\ref{sec:Derivation-of-Equation}.

\section{Design of Hamiltonians, ground states, and thermal states}

\label{sec:Design-of-Hamiltonians}In this section, we offer alternative
interpretations and applications of the constrained energy minimization
and constrained free energy minimization problems in~\eqref{eq:constrained-energy-min-def}
and~\eqref{eq:constrained-free-energy-min-def}, respectively. Namely,
we show how applying the LMPW algorithms to these problems leads
to the design of Hamiltonians with ground and thermal states satisfying
fixed constraints on the expectation values of certain observables.

In this way, the LMPW algorithms could be useful for the design
of molecules and materials satisfying desired properties. Specifically, they enable the precise targeting of specific electronic symmetry sectors (defined by particle number, spin, or other charges), which is a prerequisite for calculating the properties of molecular isomers and magnetic materials. As noted in recent literature regarding the Constrained Variational Quantum Eigensolver~\cite{Ryabinkin2018ConstrainedVQ}, standard unconstrained energy minimization is often insufficient for molecular design because the optimization can collapse to an incorrect sector (e.g., a neutral species instead of a cation). While penalty methods are commonly employed to enforce these constraints by energetically punishing unwanted sectors~\cite{McClean2016,McClean2017,Rubin2018,Higgott2019}, our framework offers a rigorous thermodynamic alternative that naturally targets specific sectors via chemical potentials, without requiring the manual tuning of penalty strengths. This approach connects directly to Reduced Density Matrix (RDM) theory~\cite{Mazziotti2012}, where constrained optimization is central to capturing electron correlation effects.

Let us note
that the design of ground states has been considered previously in
the context of the density matrix renormalization group, in the special
case that there is a single chemical potential that constrains a single
conserved charge of the system~\cite{Kuehner1998,Weir2010,Kiely2022}.

To begin with, suppose that the Hamiltonian of a system of $n$ qudits
is of the following form:
\begin{equation}
H-\mu\cdot Q,
\end{equation}
where $H$ is an arbitrary $d^{n}\times d^{n}$ Hermitian matrix,
\begin{equation}
Q\equiv\left(Q_{1},\ldots,Q_{c}\right)
\end{equation}
is a tuple of $d^{n}\times d^{n}$ Hermitian matrices, and $\mu\in\mathbb{R}^{c}$.
Contrary to the setting of Section~\ref{subsec:Review-of-constrained},
here we enforce no other constraints on $H$, $Q$, and $\mu$ (namely,
we do not enforce the conservation constraint in~\eqref{eq:conserved-charges-def}).
Additionally, in distinction to the scenario put forth in Section
\ref{subsec:Review-of-constrained}, now we no longer interpret $Q$
as being separate from the system Hamiltonian, but here instead we
consider $Q$ to be a part of it. That is, we interpret $\mu$ as
a vector of controllable parameters that one can vary in order to
modify the Hamiltonian, such that it will satisfy fixed constraints
on the expectations of the observables in $Q$.

As an example, the quantum Heisenberg model with an external magnetic
field is as follows:
\begin{equation}
H(G,\vec{J})-\mu_{x}X(G)-\mu_{y}Y(G)-\mu_{z}Z(G),\label{eq:heisenberg-model-controllable-ham}
\end{equation}
where $H(G,\vec{J})$ is defined in~\eqref{eq:heisenberg-model},
the operators $X(G)$, $Y(G)$, and $Z(G)$ are defined in~\eqref{eq:heisenberg-model-charge-x}--\eqref{eq:heisenberg-model-charge-z},
and $\left(\mu_{x},\mu_{y},\mu_{z}\right)\in\mathbb{R}^{3}$ is a
vector of control parameters that determines the orientation and strength
of an external magnetic field. One could then vary the control parameters
$\mu_{x}$, $\mu_{y}$, and $\mu_{z}$ such that a ground or thermal
state $\rho$ of~\eqref{eq:heisenberg-model-controllable-ham} satisfies
the following constraints on the total magnetizations:
\begin{equation}
\Tr[X(G)\rho]=q_{x},\quad\Tr[Y(G)\rho]=q_{y},\quad\Tr[Z(G)\rho]=q_{z},\label{eq:constraints-for-heisenberg-model}
\end{equation}
where $\left(q_{x},q_{y},q_{z}\right)\in\mathbb{R}^{3}$ and under
the assumption that there exists at least one state $\rho$ satisfying
the constraints in~\eqref{eq:constraints-for-heisenberg-model}.

\subsection{Design of ground states}

\label{subsec:Design-of-ground}Let $q\equiv\left(q_{1},\ldots,q_{c}\right)\in\mathbb{R}^{c}$
be a vector of constraints. Suppose now that we would like to tune
the control parameters $\mu$ of the Hamiltonian $H-\mu\cdot Q$,
such that there is a state in its ground space satisfying the following
constraints:
\begin{equation}
\Tr[Q_{i}\rho]=q_{i}\qquad\forall i\in\left[c\right].\label{eq:constraints-for-ground-space}
\end{equation}
We refer to this problem as ``design of ground states.'' As mentioned previously in Section~\ref{subsec:Review-of-constrained}, not every tuple of constraints can be satisfied, in which case it is impossible to design a ground state (or any state for that matter) satisfying the constraints in~\eqref{eq:constraints-for-ground-space}. Thus, going forward from here, in order to avoid this singularity, we assume that there exists at least one state satisfying  the tuple of constraints in~\eqref{eq:constraints-for-ground-space}.

Our claim is that this is possible by solving the following constrained
energy minimization problem and its dual:
\begin{equation}
\min_{\rho\in\mathcal{D}_{d^{n}}}\left\{ \Tr[H\rho]:\Tr[Q_{i}\rho]=q_{i}\ \forall i\in\left[c\right]\right\} .\label{eq:constrained-energy-min-design-ham}
\end{equation}
To see this, let us suppose that there exists at least one state satisfying
the constraints in~\eqref{eq:constraints-for-ground-space}, so that
the minimum value in~\eqref{eq:constrained-energy-min-design-ham}
is finite, and then determine the dual of the constrained energy minimization
problem in~\eqref{eq:constrained-energy-min-design-ham} by applying
the Lagrange multiplier method, as follows:
\begin{align}
 & \min_{\rho\in\mathcal{D}_{d^{n}}}\left\{ \Tr[H\rho]:\Tr[Q_{i}\rho]=q_{i}\ \forall i\in\left[c\right]\right\} \nonumber \\
 & =\min_{\rho\in\mathcal{D}_{d^{n}}}\left\{ \Tr[H\rho]+\sup_{\mu\in\mathbb{R}^{c}}\sum_{i\in\left[c\right]}\mu_{i}\left(q_{i}-\Tr[Q_{i}\rho]\right)\right\} \label{eq:proof-lagrange-mults}\\
 & =\min_{\rho\in\mathcal{D}_{d^{n}}}\sup_{\mu\in\mathbb{R}^{c}}\left\{ \Tr[H\rho]+\sum_{i\in\left[c\right]}\mu_{i}\left(q_{i}-\Tr[Q_{i}\rho]\right)\right\} \label{eq:sup-to-outside}\\
 & =\min_{\rho\in\mathcal{D}_{d^{n}}}\sup_{\mu\in\mathbb{R}^{c}}\left\{ \mu\cdot q+\Tr[\left(H-\mu\cdot Q\right)\rho]\right\} \label{eq:algebraic-simplifications}\\
 & =\sup_{\mu\in\mathbb{R}^{c}}\min_{\rho\in\mathcal{D}_{d^{n}}}\left\{ \mu\cdot q+\Tr[\left(H-\mu\cdot Q\right)\rho]\right\} \\
 & =\sup_{\mu\in\mathbb{R}^{c}}\left\{ \mu\cdot q+\lambda_{\min}(H-\mu\cdot Q)\right\} .\label{eq:duality-proof-last}
\end{align}
The first equality in~\eqref{eq:proof-lagrange-mults} follows by
introducing $\mu$ as a vector of Lagrange multipliers to enforce
the constraints in~\eqref{eq:constraints-for-ground-space}. The equality
in~\eqref{eq:sup-to-outside} follows by moving the supremum to the
outside, and the equality in~\eqref{eq:algebraic-simplifications}
follows by simple algebra. The penultimate equality follows from the
Sion minimax theorem for extended reals \cite[Theorem~2.11]{BenDavid2023}.
The final equality follows by bringing in the minimum over all density
matrices and recognizing that
\begin{equation}
\min_{\rho\in\mathcal{D}_{d^{n}}}\Tr[\left(H-\mu\cdot Q\right)\rho]=\lambda_{\min}(H-\mu\cdot Q),
\end{equation}
where $\lambda_{\min}(A)$ denotes the minimum eigenvalue of a Hermitian
operator $A$. In the theory of semidefinite programming, the equality
between the first line and the last line above is known as strong
duality \cite[Section~5.9.1]{Boyd2004}.

The development in~\eqref{eq:proof-lagrange-mults}--\eqref{eq:duality-proof-last}
demonstrates that, in the case that the constrained energy minimization
problem in~\eqref{eq:constrained-energy-min-design-ham} is feasible
(i.e., there exists at least one state satisfying the constraints
in~\eqref{eq:constraints-for-ground-space}), then its minimum value
is finite and, for the dual optimization problem in~\eqref{eq:duality-proof-last},
there exists an optimal choice of $\mu$, which we denote by $\mu^{*}$.
Furthermore, for the optimal choice $\mu^{*}$, there is a corresponding
Hamiltonian, $H-\mu^{*}\cdot Q$, such that there exists a state in
its ground space satisfying the constraints in~\eqref{eq:constraints-for-ground-space}.
To see this, let $\rho^{*}$ be a state that is optimal for~\eqref{eq:constrained-energy-min-design-ham},
and let $\mu^{*}$ be optimal for~\eqref{eq:duality-proof-last}.
Then, the complementary slackness conditions of semidefinite programming
imply that the following equality holds:
\begin{equation}
\left(H-\mu^{*}\cdot Q\right)\rho^{*}=\lambda_{\min}(H-\mu^{*}\cdot Q)\rho^{*}.\label{eq:complementary-slack-cond}
\end{equation}
See Appendix~\ref{sec:Derivation-of-Equation-1} for a derivation
of~\eqref{eq:complementary-slack-cond}. After taking a trace, Eq.~\eqref{eq:complementary-slack-cond}
implies that $\rho^{*}$ is a state in the ground space of $H-\mu^{*}\cdot Q$.

In this way, by solving the constrained energy minimization problem
in~\eqref{eq:constrained-energy-min-design-ham} and its dual in~\eqref{eq:duality-proof-last},
we determine an optimal choice $\mu^{*}$ of the controllable parameters
for the Hamiltonian $H-\mu^{*}\cdot Q$, such that there exists a
state $\rho^{*}$ in the ground space of $H-\mu^{*}\cdot Q$ satisfying
the constraints in~\eqref{eq:constrained-energy-min-design-ham}.
As such, by solving the primal and dual optimization problems, we
can design Hamiltonians with a ground state satisfying desirable properties,
i.e., the constraints in~\eqref{eq:constraints-for-ground-space}.
The LMPW algorithms solve the constrained energy minimization problem
approximately, by instead solving a free-energy minimization problem
at low temperature, as reviewed in Section~\ref{subsec:LMPW-algorithms-for},
and preparing a low-temperature thermal state close to the ground
state. After the LMPW HQC algorithm prepares this state, other observables
could then be measured on it, in order to learn various properties
of it.

Next, in Section~\ref{subsec:Design-of-thermal}, we specifically
discuss the free-energy minimization problem as a way of designing
thermal states with desired properties.

\subsection{Design of thermal states}

\label{subsec:Design-of-thermal}The development from Section~\ref{subsec:Design-of-ground}
extends to the finite-temperature setting and the design of Hamiltonians
having thermal states with desirable properties. We refer to this
problem as ``design of thermal states'' and discuss it in more detail
in this section.

As before, suppose that the controllable Hamiltonian is $H-\mu\cdot Q$,
and the goal is find values of the control parameters in $\mu$ such
that the temperature-$T$ thermal state of $H-\mu\cdot Q$ satisfies
the constraints in~\eqref{eq:constraints-for-ground-space}. Recall
that the temperature-$T$ thermal state of $H-\mu\cdot Q$ has the
form given in~\eqref{eq:param-thermal-state}.

Our claim is that this is possible by solving the dual of the following
constrained free energy minimization problem:
\begin{equation}
\min_{\rho\in\mathcal{D}_{d^{n}}}\left\{ \Tr[H\rho]-TS(\rho):\Tr[Q_{i}\rho]=q_{i}\ \forall i\in\left[c\right]\right\} .\label{eq:constrained-free-energy-min-design-ham}
\end{equation}
To see this, let us revisit the derivation from \cite[Appendix~B]{liu2025qthermoSDPs},
and before doing so, let us recall the quantum relative entropy $D(\omega\|\tau)$
of states $\omega$ and $\tau$~\cite{Umegaki1962}:
\begin{align}
D(\omega\|\tau) & \coloneqq\Tr\!\left[\omega\left(\ln\omega-\ln\tau\right)\right]\\
 & =-S(\rho)-\Tr\!\left[\omega\ln\tau\right],\label{eq:2nd-expr-rel-ent}
\end{align}
with the property that $D(\omega\|\tau)\geq0$ for all states $\omega$
and $\tau$ and $D(\omega\|\tau)=0$ if and only if $\omega=\tau$.
As in the previous section, let us suppose that there exists at least
one state satisfying the constraints in~\eqref{eq:constraints-for-ground-space},
so that the minimum value in~\eqref{eq:constrained-free-energy-min-design-ham}
is finite. Now consider that
\begin{align}
 & \min_{\rho\in\mathcal{D}_{d^{n}}}\left\{ \Tr[H\rho]-TS(\rho):\Tr[Q_{i}\rho]=q_{i}\ \forall i\in\left[c\right]\right\} \nonumber \\
 & =\min_{\rho\in\mathcal{D}_{d^{n}}}\left\{ \begin{array}{cc}
      \Tr[H\rho]-TS(\rho)+  \\
      \sup_{\mu\in\mathbb{R}^{c}}\sum_{i\in\left[c\right]}\mu_{i}\left(q_{i}-\Tr[Q_{i}\rho]\right)
 \end{array}\right\} \label{eq:eq:proof-lagrange-mults-thermal}\\
 & =\min_{\rho\in\mathcal{D}_{d^{n}}}\sup_{\mu\in\mathbb{R}^{c}}\left\{ \begin{array}{cc}
      \Tr[H\rho]-TS(\rho)+  \\
      \sum_{i\in\left[c\right]}\mu_{i}\left(q_{i}-\Tr[Q_{i}\rho]\right)
 \end{array} \right\} \\
 & =\min_{\rho\in\mathcal{D}_{d^{n}}}\sup_{\mu\in\mathbb{R}^{c}}\left\{ \mu\cdot q-TS(\rho)+\Tr[\left(H-\mu\cdot Q\right)\rho]\right\} \\
 & =\sup_{\mu\in\mathbb{R}^{c}}\min_{\rho\in\mathcal{D}_{d^{n}}}\left\{ \mu\cdot q-TS(\rho)+\Tr[\left(H-\mu\cdot Q\right)\rho]\right\} \\
 & =\sup_{\mu\in\mathbb{R}^{c}}\min_{\rho\in\mathcal{D}_{d^{n}}}\left\{ \mu\cdot q+TD(\rho\|\rho_{T}(\mu))-T\ln Z_{T}(\mu)\right\} \label{eq:relative-entropy-recognize}\\
 & =\sup_{\mu\in\mathbb{R}^{c}}\left\{ \mu\cdot q-T\ln Z_{T}(\mu)+T\min_{\rho\in\mathcal{D}_{d^{n}}}D(\rho\|\rho_{T}(\mu))\right\} \\
 & =\sup_{\mu\in\mathbb{R}^{c}}\left\{ \mu\cdot q-T\ln Z_{T}(\mu)\right\} .\label{eq:last-step-legendre-duality}
\end{align}
The first four equalities follow from steps similar to the first few
steps of~\eqref{eq:proof-lagrange-mults}--\eqref{eq:duality-proof-last}.
The equality in~\eqref{eq:relative-entropy-recognize} follows from
recognizing the quantum relative entropy, i.e.,
\begin{align}
 & -TS(\rho)+\Tr[\left(H-\mu\cdot Q\right)\rho]\nonumber \\
 & =T\left(-S(\rho)-\Tr\!\left[\rho\left(-\frac{1}{T}\left(H-\mu\cdot Q\right)\right)\right]\right)\\
 & =T\left(-S(\rho)-\Tr\!\left[\rho\ln\exp\!\left(-\frac{1}{T}\left(H-\mu\cdot Q\right)\right)\right]\right)\\
 & =T\left(-S(\rho)-\Tr\!\left[\rho\ln\rho_{T}(\mu)\right]-\ln Z_{T}(\mu)\right)\\
 & =T\left[D(\rho\|\rho_{T}(\mu))-\ln Z_{T}(\mu)\right],
\end{align}
where the last equality follows from~\eqref{eq:2nd-expr-rel-ent}.
The final equality in~\eqref{eq:last-step-legendre-duality} follows
from the property of quantum relative entropy stated after~\eqref{eq:2nd-expr-rel-ent}.

The development in~\eqref{eq:eq:proof-lagrange-mults-thermal}--\eqref{eq:last-step-legendre-duality}
demonstrates that, in the case that the constrained free energy minimization
problem in~\eqref{eq:constrained-free-energy-min-design-ham} is feasible
(i.e., there exists at least one state satisfying the constraints
in~\eqref{eq:constraints-for-ground-space}), then its minimum value
is finite and, for the dual optimization problem in~\eqref{eq:last-step-legendre-duality},
there exists an optimal choice of $\mu$, which we denote by $\mu^{*}$.
Furthermore, for the optimal choice $\mu^{*}$, there is a corresponding
Hamiltonian, $H-\mu^{*}\cdot Q$, such that the thermal state $\rho_{T}(\mu^{*})$
satisfies the constraints in~\eqref{eq:constraints-for-ground-space}.
To see this, observe that $(\mu^{*},\rho_{T}(\mu^{*}))$ is a saddle
point of~\eqref{eq:eq:proof-lagrange-mults-thermal}--\eqref{eq:last-step-legendre-duality},
such that all equalities are attained. Indeed, for the optimal choice
$\mu^{*}$, the first-order optimality conditions for the objective
function $f(\mu)\equiv\mu\cdot q-T\ln Z_{T}(\mu)$ in~\eqref{eq:last-step-legendre-duality}
guarantee that $\nabla_{\mu}f(\mu^{*})=0$, which implies by~\eqref{eq:gradient-elements}
that $q_{i}=\Tr[Q_{i}\rho_{T}(\mu^{*})]$ for all $i\in\left[c\right]$.
Thus, the state $\rho_{T}(\mu^{*})$ satisfies all of the constraints.
Using this condition, one then finds that
\begin{equation}
\Tr[H\rho_{T}(\mu^{*})]-TS(\rho_{T}(\mu^{*}))=\mu^{*}\cdot q-T\ln Z_{T}(\mu^{*}).
\label{eq:optimality-condition-thermal}
\end{equation}
As such, due to the equalities in~\eqref{eq:eq:proof-lagrange-mults-thermal}--\eqref{eq:last-step-legendre-duality}
holding, it follows that $\rho_{T}(\mu^{*})$ is a thermal state that
is optimal for the constrained free energy minimization problem in~\eqref{eq:constrained-free-energy-min-design-ham}.

In this way, by solving the constrained free energy minimization problem
in~\eqref{eq:constrained-free-energy-min-design-ham} and its dual
in~\eqref{eq:last-step-legendre-duality}, we determine the optimal
choice $\mu^{*}$ of the controllable parameters for the Hamiltonian
$H-\mu^{*}\cdot Q$, such that $\rho_{T}(\mu^{*})$ is a temperature-$T$
thermal state of $H-\mu^{*}\cdot Q$ satisfying the constraints in
\eqref{eq:constrained-energy-min-design-ham}. As such, by solving
this dual optimization problem, we can design Hamiltonians with thermal
states satisfying desirable properties, i.e., the constraints in~\eqref{eq:constraints-for-ground-space}.
The LMPW algorithms solve this problem to arbitrary accuracy and
thus can be used for the design of thermal states. If one is exclusively
interested in the design of thermal states, note that it is not necessary
for the algorithms to output the final value of the free energy, but
they instead can exclusively perform the gradient update steps in
Algorithms~\ref{alg:classical-LMPW}--\ref{alg:HQC-LMPW-2nd-order}
until the norm of the gradient is sufficiently small.

\section{Stabilizer thermodynamic systems}

\label{sec:Stabilizer-codes-as}Given the background in Sections~\ref{subsec:Quantum-thermodynamics-in}
and~\ref{subsec:Quantum-error-correction}, now we define a stabilizer
thermodynamic system as follows:
\begin{defn}[Stabilizer thermodynamic system] Let $\mathcal{S}$ denote a stabilizer code that encodes $k$ logical
qubits into $n$ physical qubits, has commuting stabilizer generators
$S_{1},\ldots,S_{n-k}$, and has the set $\mathcal{L}$ of logical
operators. We define a stabilizer thermodynamic system to have a Hamiltonian
$H$ given by
\begin{equation}
H\coloneqq-\sum_{i=1}^{n-k}\gamma_{i}S_{i},\label{eq:stabilizer-Ham}
\end{equation}
where $\gamma_{i}>0$, and conserved, non-commuting charges given by
\begin{equation}
Q_{i}\coloneqq\sum_{j}\alpha_{i,j}L_{i,j},\label{eq:charge-stabilizer-TS}
\end{equation}
where $L_{i,j}\in\mathcal{L}$ and $\alpha_{i,j}\in\mathbb{R}$, for
all $i\in\left[c\right]$ and $j$. 
\end{defn}

For the stabilizer Hamiltonian $H$ in~\eqref{eq:stabilizer-Ham},
its ground space coincides with the codespace of $\mathcal{S}$, by
construction~\cite{Kitaev2003}. Due to the fact that $\left[L_{i,j},S\right]=0$
for all $S\in\mathcal{S}$ (recall~\eqref{eq:logical-ops-normalizer}
and~\eqref{eq:centralizer-stab}), it follows that each $Q_{i}$ in
\eqref{eq:charge-stabilizer-TS} is a conserved charge, i.e., satisfying
$\left[H,Q_{i}\right]=0$. Also, the tuple $\left(Q_{1},\ldots,Q_{c}\right)$
need not form a commuting set, as observed by recalling~\eqref{eq:logical-ops-comm-rels-1}--\eqref{eq:logical-ops-comm-rels-last},
in which case the charges are non-commuting in general.

It is clear from the definition in~\eqref{eq:charge-stabilizer-TS}
that there is great freedom allowed in what is considered a conserved
charge in a stabilizer thermodynamic system. As a first example, the
conserved charges could be encoded total magnetizations, corresponding
to the total magnetizations of the $k$ logical qubits, as follows:
\begin{equation}
Q_{x}\coloneqq\sum_{i=1}^{k}\overline{X}_{i},\quad Q_{y}\coloneqq\sum_{i=1}^{k}\overline{Y}_{i},\quad Q_{z}\coloneqq\sum_{i=1}^{k}\overline{Z}_{i}.
\end{equation}
As another example, the conserved charges could be the following exhaustive
list of all possible $4^{k}$ logical Pauli operators:
\begin{equation}
Q_{i_{1},\ldots,i_{k}}\coloneqq\overline{\sigma}_{i_{1},1}\cdots\overline{\sigma}_{i_{k},k},
\end{equation}
where $i_{\ell}\in\{0,1,2,3\}$ and we have employed the notation
introduced in~\eqref{eq:encoded-pauli-expansion}. In this latter
case, if all of the expectations $\left\langle Q_{i_{1},\ldots,i_{k}}\right\rangle _{\overline{\rho}}=r_{i_{1},\ldots,i_{k}}$
are fixed, where $\overline{\rho}$ is a state in the ground space
of $H$ and $r_{i_{1},\ldots,i_{k}}\in\mathbb{R}$ for all $i_{1},\ldots,i_{k}\in\{0,1,2,3\}$,
then the state of the $k$ logical qubits is completely specified
after doing so (see~\eqref{eq:encoded-pauli-expansion}). This case
leads to a particular way of using the LMPW HQC algorithms for encoding
quantum information into the ground space of $H$, which we discuss
in more detail in Section~\ref{sec:Encoding-quantum-information}.
\begin{example}
\label{exa:repetition}As a simple example of a stabilizer thermodynamic
system, let us consider the one-to-three qubit repetition code, which
has a stabilizer generated by
\begin{equation}
\left\langle Z_{1}Z_{2},Z_{2}Z_{3}\right\rangle ,
\end{equation}
where $Z_{1}Z_{2}\equiv Z\otimes Z\otimes I$ and $Z_{2}Z_{3}\equiv I\otimes Z\otimes Z$,
and logical operators generated by
\begin{equation}
\left\langle X_{1}X_{2}X_{3},Z_{1}\right\rangle ,
\end{equation}
where $X_{1}X_{2}X_{3}\equiv X\otimes X\otimes X$ and $Z_{1}\equiv Z\otimes I\otimes I$.
In this case, the Hamiltonian of the corresponding stabilizer thermodynamic
system is
\begin{equation}
H=-Z_{1}Z_{2}-Z_{2}Z_{3},
\end{equation}
and we can take the conserved, non-commuting charges to be the encoded
magnetizations of the logical qubit in the $x$, $y$, and $z$ directions:
\begin{align}
X_{\tot} & =X_{1}X_{2}X_{3},\\
Y_{\tot} & =iX_{1}X_{2}X_{3}Z_{1}=Y_{1}X_{2}X_{3},\\
Z_{\tot} & =Z_{1}.
\end{align}
For the case of a single logical qubit, these operators coincide precisely
with the $X$, $Y$, and $Z$ logical operators of the code, and constraints
on them specify the state of the encoded qubit. The parameterized
thermal state of this stabilizer thermodynamic system is as follows:
\begin{equation}
\frac{e^{-\frac{1}{T}\left(-Z_{1}Z_{2}-Z_{2}Z_{3}-\mu_{x}X_{1}X_{2}X_{3}-\mu_{y}Y_{1}X_{2}X_{3}-\mu_{z}Z_{1}\right)}}{\Tr\!\left[e^{-\frac{1}{T}\left(-Z_{1}Z_{2}-Z_{2}Z_{3}-\mu_{x}X_{1}X_{2}X_{3}-\mu_{y}Y_{1}X_{2}X_{3}-\mu_{z}Z_{1}\right)}\right]},
\end{equation}
where $\left(\mu_{x},\mu_{y},\mu_{z}\right)\in\mathbb{R}^{3}$ is
the chemical potential vector.
\end{example}

\section{Encoding quantum information into stabilizer thermodynamic systems}

\label{sec:Encoding-quantum-information}The LMPW first- and second-order
HQC algorithms reviewed in Sections~\ref{subsec:LMPW-first-order-hybrid}
and~\ref{subsec:LMPW-second-order-hybrid} can alternatively serve
as methods for encoding quantum information into stabilizer thermodynamic
systems. To see this, consider a stabilizer thermodynamic system corresponding
to a stabilizer code encoding $k$ logical qubits into $n$ physical
qubits, with a stabilizer $\mathcal{S}$ and a set $\mathcal{L}$
of logical operators. Suppose that we would like to encode a particular
state of the $k$ logical qubits. Recall from~\eqref{eq:encoded-pauli-expansion}
that such a state is specified by $4^{k}$ Pauli coefficients in the
tuple $\left(r_{i_{1},\ldots,i_{k}}\right)_{i_{1},\ldots,i_{k}\in\{0,1,2,3\}}$.
Then, after defining the Hamiltonian $H$ as in~\eqref{eq:stabilizer-Ham},
with $\gamma_{i}=1$ for all $i\in[n-k]$, we set up a constrained
energy minimization of the form in~\eqref{eq:constrained-energy-min-def},
as follows:
\begin{equation}
\begin{aligned}\min_{\rho\in\mathcal{D}_{2^{n}}}\Tr[H\rho]\\
\text{subject to}\\
\Tr[\overline{\sigma}_{i_{1},1}\cdots\overline{\sigma}_{i_{k},k}\rho] & =r_{i_{1},\ldots,i_{k}}\:\forall i_{1},\ldots,i_{k}\in\{0,1,2,3\}.
\end{aligned}
\label{eq:constrained-energy-min-stabilizer}
\end{equation}
Evidently, the optimal solution to this problem is the value $-\left(n-k\right)$,
and the optimal state is $\overline{\rho}$, as given in~\eqref{eq:encoded-pauli-expansion}.
Indeed, Hamiltonian $H$ has a minimum eigenvalue of $-\left(n-k\right)$,
corresponding to a ground space that is $2^{k}$-fold degenerate.
The constraints in~\eqref{eq:constrained-energy-min-stabilizer} then
uniquely specify the state of the logical qubits, according to~\eqref{eq:encoded-pauli-expansion}. 

By transforming this constrained energy minimization problem to a
constrained free-energy minimization problem at temperature $T\geq0$,
as reviewed in Section~\ref{subsec:Review-of-constrained}, the LMPW
algorithms converge to an approximate solution of~\eqref{eq:constrained-energy-min-stabilizer}
for a sufficiently low temperature $T$. Furthermore, the HQC algorithms
at the final step generate a parameterized thermal state of the following
form:
\begin{equation}
\frac{e^{-\frac{1}{T}\left(H-\sum_{i_{1},\ldots,i_{k}\in\{0,1,2,3\}}\mu_{i_{1},\ldots,i_{k}}\overline{\sigma}_{i_{1},1}\cdots\overline{\sigma}_{i_{k},k}\right)}}{\Tr\!\left[e^{-\frac{1}{T}\left(H-\sum_{i_{1},\ldots,i_{k}\in\{0,1,2,3\}}\mu_{i_{1},\ldots,i_{k}}\overline{\sigma}_{i_{1},1}\cdots\overline{\sigma}_{i_{k},k}\right)}\right]},
\end{equation}
where the chemical potentials in $\left(\mu_{i_{1},\ldots,i_{k}}\right)_{i_{1},\ldots,i_{k}\in\{0,1,2,3\}}$
ensure that the constraints in~\eqref{eq:constrained-energy-min-stabilizer}
are satisfied. Thus, in this way, the LMPW HQC algorithms
can be viewed as methods for approximately encoding logical qubit
states into stabilizer thermodynamic systems, whenever low-temperature
thermal state preparation methods are experimentally accessible.

\subsection{Warm-starting the encoding of a single qubit}

\label{sec:warm-start}

The LMPW HQC algorithms start from the initial value $\mu=\left(0,\ldots,0\right)$,
as seen in Algorithms~\ref{alg:HQC-LMPW-1st-order} and~\ref{alg:HQC-LMPW-2nd-order}.
Such an initial starting point is used whenever there is no information
about the solution to the problem at hand. However, in the case of
encoding a single logical qubit into multiple physical qubits, we
can provide a much better starting point for the algorithms, which
speeds up the convergence significantly.

To see this, suppose that $\rho$ is a qubit state characterized by
its Bloch vector $r\coloneqq\left(r_{x},r_{y},r_{z}\right)\in\mathbb{R}^{3}$:
\begin{equation}
\rho=\frac{1}{2}\left(I+r_{x}X+r_{y}Y+r_{z}Z\right).\label{eq:bloch-rep}
\end{equation}
If $\left\Vert r\right\Vert =1$, then $\rho$ is a pure state, and
if $\left\Vert r\right\Vert < 1$, then $\rho$ is a mixed state.
Whenever $\rho$ is a mixed state, it can alternatively be written
in terms of a chemical potential vector $\mu\coloneqq\left(\mu_{x},\mu_{y},\mu_{z}\right)\in\mathbb{R}^{3}$
and an inverse temperature parameter $\beta\geq0$ as follows:
\begin{equation}
\rho=\frac{1}{Z_{\beta}(\mu)}\exp\!\left(-\beta\left(\mu_{x}X+\mu_{y}Y+\mu_{z}Z\right)\right),\label{eq:exponential-coord-qubit}
\end{equation}
where
\begin{equation}
Z_{\beta}(\mu)\coloneqq\Tr\!\left[\exp\!\left(-\beta\left(\mu_{x}X+\mu_{y}Y+\mu_{z}Z\right)\right)\right].
\end{equation}
The coordinates specified by $r$ are known as mixture coordinates,
and those specified by $\mu$ and $\beta$ are known as exponential coordinates
\cite[Section~8.2]{Bengtsson2006}. The following theorem, proven
in Appendix~\ref{app:mixture-to-exp-proof}, establishes a mapping
from mixture coordinates to exponential coordinates for a single-qubit
state:
\begin{thm}
\label{thm:mixture-to-exp-qubit}Let $r$ be a Bloch vector corresponding
to a qubit density matrix $\rho$, as written in~\eqref{eq:bloch-rep},
such that $\left\Vert r\right\Vert <1$. Then, by choosing
\begin{align}
\mu & =r,\label{eq:warm-start-qubit-1}\\
\beta & =\frac{\arctanh\!\left(-\left\Vert r\right\Vert \right)}{\left\Vert r\right\Vert },\label{eq:warm-start-qubit-2}
\end{align}
the following equality holds:
\begin{multline}
\frac{1}{2}\left(I+r_{x}X+r_{y}Y+r_{z}Z\right)=\\
\frac{1}{Z_{\beta}(\mu)}\exp\!\left(-\beta\left(\mu_{x}X+\mu_{y}Y+\mu_{z}Z\right)\right).\label{eq:desired-eq-mix-exp}
\end{multline}
\end{thm}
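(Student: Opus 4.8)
The plan is to reduce both sides of~\eqref{eq:desired-eq-mix-exp} to the same explicit $2\times 2$ matrix by exploiting the fact that the operator appearing in the exponent squares to a multiple of the identity. First I would set $s \coloneqq \left\Vert r\right\Vert$ and $A \coloneqq r_x X + r_y Y + r_z Z$. Using the anticommutation relations among $X$, $Y$, $Z$ (so that $XY+YX = 0$, etc.) together with $X^2 = Y^2 = Z^2 = I$, one computes $A^2 = (r_x^2 + r_y^2 + r_z^2)\, I = s^2 I$, and moreover $\Tr[A] = 0$ since each Pauli matrix is traceless. With the choice $\mu = r$ from~\eqref{eq:warm-start-qubit-1}, the operator in the exponent of~\eqref{eq:exponential-coord-qubit} is exactly $-\beta A$.

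Next I would invoke the elementary identity that, for any Hermitian $A$ with $A^2 = s^2 I$ where $s > 0$, and any $\theta \in \mathbb{R}$,
\begin{equation}
\exp(\theta A) = \cosh(s\theta)\, I + \frac{\sinh(s\theta)}{s}\, A,
\end{equation}
which follows by splitting the power series for $\exp$ into its even and odd parts and resumming. Applying this with $\theta = -\beta$, and using that $\cosh$ is even while $\sinh$ is odd, gives $\exp(-\beta A) = \cosh(\beta s)\, I - \frac{\sinh(\beta s)}{s}\, A$. Taking the trace and using $\Tr[I] = 2$ and $\Tr[A] = 0$ yields $Z_\beta(\mu) = 2\cosh(\beta s)$, and hence
\begin{equation}
\frac{\exp(-\beta A)}{Z_\beta(\mu)} = \frac{1}{2}\!\left( I - \frac{\tanh(\beta s)}{s}\, A \right).
\end{equation}

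To finish, I would substitute the choice $\beta = \arctanh(-s)/s$ from~\eqref{eq:warm-start-qubit-2}, so that $\beta s = \arctanh(-s)$ and therefore $\tanh(\beta s) = -s$. The coefficient $-\tanh(\beta s)/s$ then equals $1$, so the right-hand side of the last display becomes $\frac{1}{2}(I + A) = \frac{1}{2}(I + r_x X + r_y Y + r_z Z)$, which is precisely the left-hand side of~\eqref{eq:desired-eq-mix-exp}. The one remaining case is $r = 0$, which is allowed by the hypothesis $\left\Vert r\right\Vert < 1$: then $\mu = 0$, so $\exp(-\beta A) = I$ and the normalized state is $I/2$, whose Bloch vector is $0 = r$; here one may assign $\beta$ the limiting value $\lim_{s\to 0^{+}}\arctanh(-s)/s = -1$, or simply note the identity holds for every $\beta$. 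I do not expect a genuine obstacle: once the observation $A^2 = s^2 I$ is in hand the computation is short, and the only point requiring care is the bookkeeping of signs in the even/odd resummation and in the final substitution of $\beta$.
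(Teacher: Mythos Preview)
Your proof is correct and essentially coincides with the paper's second proof, which likewise uses the observation $(\mu\cdot\sigma)^2 \propto I$ to collapse the exponential into $\cosh$ and $\sinh$ terms and then solves for $\beta$; your version is slightly cleaner in that you work directly with the unnormalized $A=r\cdot\sigma$ satisfying $A^2=s^2 I$, whereas the paper first normalizes $\mu$. The paper also supplies an alternative first proof that matches eigenvectors (forcing $\mu=r$) and eigenvalues (yielding $\frac{1\pm s}{2}$ versus $e^{\mp\beta s}/(e^{-\beta s}+e^{\beta s})$) and solves the resulting scalar equation for $\beta$.
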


Equipped with Theorem~\ref{thm:mixture-to-exp-qubit}, we can initialize
the LMPW HQC algorithms with the following state:
\begin{equation}
\frac{e^{-\frac{1}{T}\left(H+\beta T\left(\mu_{x}\overline{X}+\mu_{y}\overline{Y}+\mu_{z}\overline{Z}\right)\right)}}{\Tr\!\left[e^{-\frac{1}{T}\left(H+\beta T\left(\mu_{x}\overline{X}+\mu_{y}\overline{Y}+\mu_{z}\overline{Z}\right)\right)}\right]},
\label{eq:warm-start-qubit-encoded}
\end{equation}
where $H$ is the stabilizer Hamiltonian for the code (as in~\eqref{eq:stabilizer-Ham},
with $\gamma_{i}=1$ for all $i\in[n-k]$), $\overline{X}$, $\overline{Y}$,
and $\overline{Z}$ are the logical operators for the encoded logical
qubit, and $\mu$ and $\beta$ are chosen as in~\eqref{eq:warm-start-qubit-1}--\eqref{eq:warm-start-qubit-2}. The extra factor of $T$ multiplying $\beta$ in~\eqref{eq:warm-start-qubit-encoded} ensures that the constraints in~\eqref{eq:constrained-energy-min-stabilizer} are satisfied identically. In our simulations, we find that this initial starting point leads
to immediate convergence of the LMPW HQC algorithms. In fact, the state in \eqref{eq:warm-start-qubit-encoded} is actually optimal, as a special case of the following more general theorem, which is proven in Appendix~\ref{app:proof-solution-stab-thermo}:

\begin{thm}
\label{thm:solution-stab-thermo}
Consider a stabilizer thermodynamic system with Hamiltonian
\begin{equation}
H=-\sum_{i=1}^{n-k}S_{i},
\end{equation}
and logical operators $\overline{\sigma}_{i_{1},1}\cdots\overline{\sigma}_{i_{k},k}$
for all $i_{1},\ldots,i_{k}\in\left\{ 0,1,2,3\right\} $. Suppose
that there is a mixed (non-pure) state $\rho$ of $k$ qubits satisfying
the following constraints:
\begin{equation}
\Tr\!\left[\left(\sigma_{i_{1}}\otimes\cdots\otimes\sigma_{i_{k}}\right)\rho\right]=r_{i_{1},\ldots,i_{k}}\label{eq:k-qubit-constraints-for-state}
\end{equation}
for all $i_{1},\ldots,i_{k}\in\left\{ 0,1,2,3\right\} $. (As indicated
in~\eqref{eq:pauli-expansion-k-qubits}, the constraints in \eqref{eq:k-qubit-constraints-for-state}
completely specify the $k$-qubit state.) Suppose that the coefficients
in $\left(\mu_{i_{1},\ldots,i_{k}}\right)_{i_{1},\ldots,i_{k}\in\left\{ 0,1,2,3\right\} }$
are such that
\begin{equation}
\rho=\frac{\exp\!\left(\sum_{i_{1},\ldots,i_{k}}\mu_{i_{1},\ldots,i_{k}}\sigma_{i_{1}}\otimes\cdots\otimes\sigma_{i_{k}}\right)}{\Tr\!\left[\exp\!\left(\sum_{i_{1},\ldots,i_{k}}\mu_{i_{1},\ldots,i_{k}}\sigma_{i_{1}}\otimes\cdots\otimes\sigma_{i_{k}}\right)\right]}.\label{eq:unencoded-state-thermal-form}
\end{equation}
Then the following state
\begin{equation}
\frac{\exp\!\left(-\frac{1}{T}\left(H-T\sum_{i_{1},\ldots,i_{k}}\mu_{i_{1},\ldots,i_{k}}\overline{\sigma}_{i_{1}}\cdots\overline{\sigma}_{i_{k}}\right)\right)}{\Tr\!\left[\exp\!\left(-\frac{1}{T}\left(H-T\sum_{i_{1},\ldots,i_{k}}\mu_{i_{1},\ldots,i_{k}}\overline{\sigma}_{i_{1}}\cdots\overline{\sigma}_{i_{k}}\right)\right)\right]}
\end{equation}
is the unique solution to the following constrained free energy minimization
problem:
\begin{equation}
\min_{\omega\in\mathcal{D}^{2^{n}}}\left\{ \begin{array}{c}
\Tr[H\omega]-TS(\omega):\\
\Tr\!\left[\left(\overline{\sigma}_{i_{1}}\cdots\overline{\sigma}_{i_{k}}\right)\omega\right]=r_{i_{1},\ldots,i_{k}}
\end{array}\right\} .\label{eq:SDP-QECC}
\end{equation}
\end{thm}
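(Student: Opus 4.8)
The plan is to recognize the constrained free energy minimization in~\eqref{eq:SDP-QECC} as a special case of the thermal-state design problem analyzed in Section~\ref{subsec:Design-of-thermal}, so that the theorem reduces to checking that the state displayed in the statement satisfies the $4^k$ linear constraints; the remaining work is then a self-contained Pauli-algebra computation that exploits the commuting stabilizer structure of $H$. Throughout I would abbreviate $\overline{\sigma}_{\vec{i}} \coloneqq \overline{\sigma}_{i_1}\cdots\overline{\sigma}_{i_k}$ and $\sigma_{\vec{i}} \coloneqq \sigma_{i_1}\otimes\cdots\otimes\sigma_{i_k}$.

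First I would observe that~\eqref{eq:SDP-QECC} is precisely an instance of~\eqref{eq:constrained-free-energy-min-design-ham} with the $4^k$ charges $Q_{\vec{i}} = \overline{\sigma}_{\vec{i}}$ and constraint values $q_{\vec{i}} = r_{\vec{i}}$. By the duality computation in~\eqref{eq:last-step-legendre-duality}--\eqref{eq:optimality-condition-thermal}, a parameterized thermal state $\rho_T(\nu^{*})$ of the form~\eqref{eq:param-thermal-state} is optimal for~\eqref{eq:SDP-QECC} whenever its chemical potentials $\nu^{*}$ satisfy $\Tr[\overline{\sigma}_{\vec{i}}\,\rho_T(\nu^{*})] = r_{\vec{i}}$ for all $\vec{i}$, since by~\eqref{eq:gradient-elements} this is exactly the first-order optimality condition $\nabla f(\nu^{*}) = 0$ for the concave dual, after which~\eqref{eq:optimality-condition-thermal} certifies primal optimality; moreover $\omega \mapsto \Tr[H\omega]-TS(\omega)$ is strictly convex for $T > 0$ on the affine feasible set, so such an optimizer is unique, with feasibility automatic once one is exhibited. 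Because $-\tfrac{1}{T}\bigl(H - T\sum_{\vec{i}}\mu_{\vec{i}}\overline{\sigma}_{\vec{i}}\bigr) = -\tfrac{1}{T}\bigl(H - \sum_{\vec{i}}(T\mu_{\vec{i}})\overline{\sigma}_{\vec{i}}\bigr)$, the state in the theorem is exactly $\rho_T(\nu^{*})$ for $\nu^{*}_{\vec{i}} = T\mu_{\vec{i}}$, so everything reduces to verifying the constraints for this state.

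For that verification I would proceed as follows. Every $S_i$ lies in $\mathcal{S}$ while every logical Pauli lies in the centralizer $C(\mathcal{S})$ (recall~\eqref{eq:centralizer-stab}), so $H = -\sum_i S_i$ commutes with $A \coloneqq \sum_{\vec{i}}\mu_{\vec{i}}\overline{\sigma}_{\vec{i}}$ and~\eqref{eq:exp-sum-commute} gives $\exp(-\tfrac1T H + A) = \exp(-\tfrac1T H)\exp(A)$. Using $S_i^2 = I$ one expands $\exp(-\tfrac1T H) = \prod_i\bigl(\cosh(\tfrac1T)I + \sinh(\tfrac1T)S_i\bigr) = \sum_{S\in\mathcal{S}}w(S)\,S$ with $w(I) = \cosh(\tfrac1T)^{n-k}\neq 0$; and since $A$ is Hermitian, $\exp(A)$ expands over the (Hermitian, linearly independent) logical Paulis as $\sum_{\vec{m}}b_{\vec{m}}\overline{\sigma}_{\vec{m}}$ with real $b_{\vec{m}}$ and $b_{\vec{0}} = 2^{-n}\Tr[\exp(A)] > 0$. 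The key fact is then that for $S\in\mathcal{S}$ the operator $\overline{\sigma}_{\vec{i}}S\overline{\sigma}_{\vec{m}} = S\,\overline{\sigma}_{\vec{i}}\overline{\sigma}_{\vec{m}}$ is a Pauli operator up to a phase and has nonzero trace if and only if $S = I$ and $\vec{m} = \vec{i}$: when $\vec{m}\neq\vec{i}$, $\overline{\sigma}_{\vec{i}}\overline{\sigma}_{\vec{m}}$ is a nontrivial logical Pauli, and when $S\neq I$ one would need $S$ proportional to $\overline{\sigma}_{\vec{i}}\overline{\sigma}_{\vec{m}}$, which is impossible because a nontrivial stabilizer element acts as the identity on the codespace while a nontrivial logical Pauli does not, and $-I\notin\mathcal{S}$. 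Since Pauli operators other than $I^{\otimes n}$ are traceless, only the $S = I$, $\vec{m}=\vec{i}$ term survives, giving $\Tr[\overline{\sigma}_{\vec{i}}\exp(-\tfrac1T H)\exp(A)] = w(I)b_{\vec{i}}2^n$ and (at $\vec{i} = \vec{0}$) normalization $w(I)b_{\vec{0}}2^n$, so $\Tr[\overline{\sigma}_{\vec{i}}\,\rho_T(\nu^{*})] = b_{\vec{i}}/b_{\vec{0}}$.

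Finally I would match this to the target: conjugating by an encoding isometry $V\colon(\mathbb{C}^2)^{\otimes k}\to(\mathbb{C}^2)^{\otimes n}$ with $\overline{\sigma}_{\vec{m}}V = V\sigma_{\vec{m}}$ and $VV^{\dagger} = \Pi_{\mathcal{C}}$ gives $V^{\dagger}\exp(A)V = \exp\bigl(\sum_{\vec{m}}\mu_{\vec{m}}\sigma_{\vec{m}}\bigr) = \sum_{\vec{m}}b_{\vec{m}}\sigma_{\vec{m}}$ with the \emph{same} coefficients, whence~\eqref{eq:unencoded-state-thermal-form} together with~\eqref{eq:pauli-expansion-k-qubits} yields $r_{\vec{i}} = \Tr[\sigma_{\vec{i}}\rho] = b_{\vec{i}}/b_{\vec{0}}$, matching the previous paragraph and closing the reduction from the second paragraph. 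I expect the main obstacle to be that key algebraic fact in the third paragraph --- that a nontrivial stabilizer element times a product of logical Paulis is never proportional to the identity --- since this is exactly what makes the stabilizer part of $H$ effectively factor out of all logical-operator expectations in the encoded thermal state; everything else is either a direct appeal to the duality, optimality, and strict-convexity analysis of Section~\ref{subsec:Design-of-thermal} or routine Pauli-algebra and encoding-isometry bookkeeping.
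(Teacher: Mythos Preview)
Your proof is correct and takes a genuinely different route from the paper's. The paper's argument conjugates the candidate thermal state by the full Clifford encoding unitary $U$ to reveal that $U^\dagger\omega^*U$ is literally a tensor product of $n-k$ single-qubit thermal states of $-Z$ with a $k$-qubit exponential state, and then reads off the parameters by comparing the partial trace to $\rho$; this is what drives the physical picture in the remark following the proof. You instead stay in the encoded picture: you factor $\exp(-\tfrac{1}{T}H+A)=\exp(-\tfrac{1}{T}H)\exp(A)$ via commutativity, expand each factor over stabilizers and logical Paulis respectively, and then use the trace-orthogonality fact that $S\overline{\sigma}_{\vec m}$ is never proportional to $I^{\otimes n}$ unless $S=I$ and $\vec m=\vec 0$, so that only a single term survives in every logical expectation. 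The encoding isometry enters only at the end to match the surviving ratio $b_{\vec i}/b_{\vec 0}$ to $r_{\vec i}$. Your approach is arguably more elementary in that it needs only the centralizer relation~\eqref{eq:centralizer-stab} and basic Pauli-trace orthogonality rather than the full tensor-product decoding; the paper's approach buys the explicit factorized form of the optimal state, which is what makes the warm-start interpretation transparent.
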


As expanded upon in Remark~\ref{rem:stab-thermo-encoding} of Appendix~\ref{app:proof-solution-stab-thermo}, the intuition behind Theorem~\ref{thm:solution-stab-thermo} is that, if the state of the logical qubits is completely specified by the constraints, then these degrees of freedom are not available for minimizing the free energy of the system, and one can only do so by initializing the ancilla qubits to temperature $T$ thermal states. 

\section{Simulation results}

\label{sec:Simulation-results}

\begin{figure*}
\includegraphics[width=\linewidth]{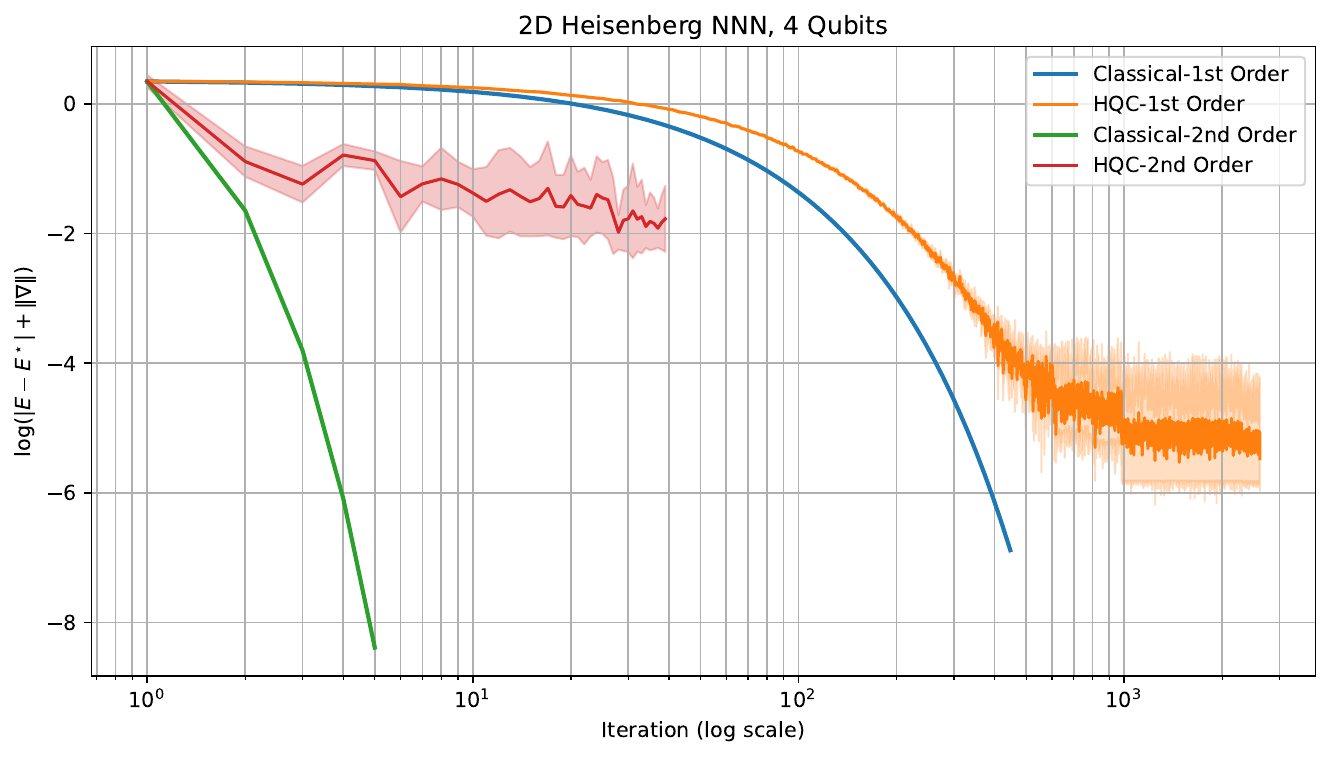}
\caption{The figure depicts the logarithm of the error metric in~\eqref{eq:error-metric} versus the logarithm of the number of iterations,  for the task of constrained energy minimization for the two-dimensional, four-qubit quantum Heisenberg model with nearest- and next-nearest-neighbor interactions and constraints on the total magnetizations in the $x$, $y$, and $z$ directions set to be 1, 0, and 1, respectively.  All of the algorithms converge, but the HQC algorithms, shown as the average over five independent runs with shaded regions denoting one standard deviation, require more iterations to converge due to sampling noise inherent in them.}
\label{fig:2D-heis-nnn_4q}
\end{figure*}

In this section, we report the results
of simulating the LMPW algorithms on several models of interest
in quantum thermodynamics and the design of thermal states. Here,
we focus on the one- and two-dimensional quantum Heisenberg models
with a Hamiltonian having nearest- and next-nearest-neighbor interactions
and the conserved, non-commuting charges set to the total magnetizations
in the $x$, $y$, and $z$ directions (see Section~\ref{subsec:Quantum-Heisenberg-models}).
We also report the results of simulations of a stabilizer thermodynamic
system built from the perfect one-to-five qubit quantum error-correcting
code.

In Appendix~\ref{sec:Other-simulation-results}, we report the results
of other simulations, which include the one- and two-dimensional quantum
Heisenberg models with nearest-neighbor interactions and similar settings
as above for the charges. Therein, we also report simulation results
for other stabilizer thermodynamic systems, including the one-to-three
qubit repetition code and the two-to-four qubit quantum error-detecting
code.

For all of our numerical experiments, we set the error metric for the constrained energy minimization problems to be as follows:
\begin{equation}
\left| E^\star - \tilde{E}_m\right | + \left\| \tilde{\nabla}_m\right\|,
\label{eq:error-metric}
\end{equation}
in order to track the performance of the algorithms.
In~\eqref{eq:error-metric},  $E^\star$ is the true minimum energy output from a standard SDP solver (specifically, we employed CVXPY~\cite{diamond2016cvxpy} for these benchmarks) and $\tilde{E}_m$ and $\tilde{\nabla}_m$ are the estimates of the energy and the gradient, respectively, at the $m$th step of the algorithm. If $\left| E^\star - \tilde{E}_m\right |$ is small, then we are guaranteed that the algorithm has output a good approximation of the true energy at the $m$th step, and if $\left\| \tilde{\nabla}_m\right\|$ is small, then, according to~\eqref{eq:gradient-elements}, we are guaranteed that the algorithm has found a solution that satisfies the constraints in~\eqref{eq:constrained-energy-min-def} approximately. In all of our figures, we plot the logarithm of the error metric in~\eqref{eq:error-metric}, in order to make it easier to visualize the performance of the algorithms. In addition to using this error metric for plotting purposes, we also used the gradient norm $\left\| \tilde{\nabla}_m \right\|$ as the stopping criterion to determine algorithm convergence, terminating the optimization once it fell below a specified threshold.

\begin{figure*}
\includegraphics[width=\linewidth]{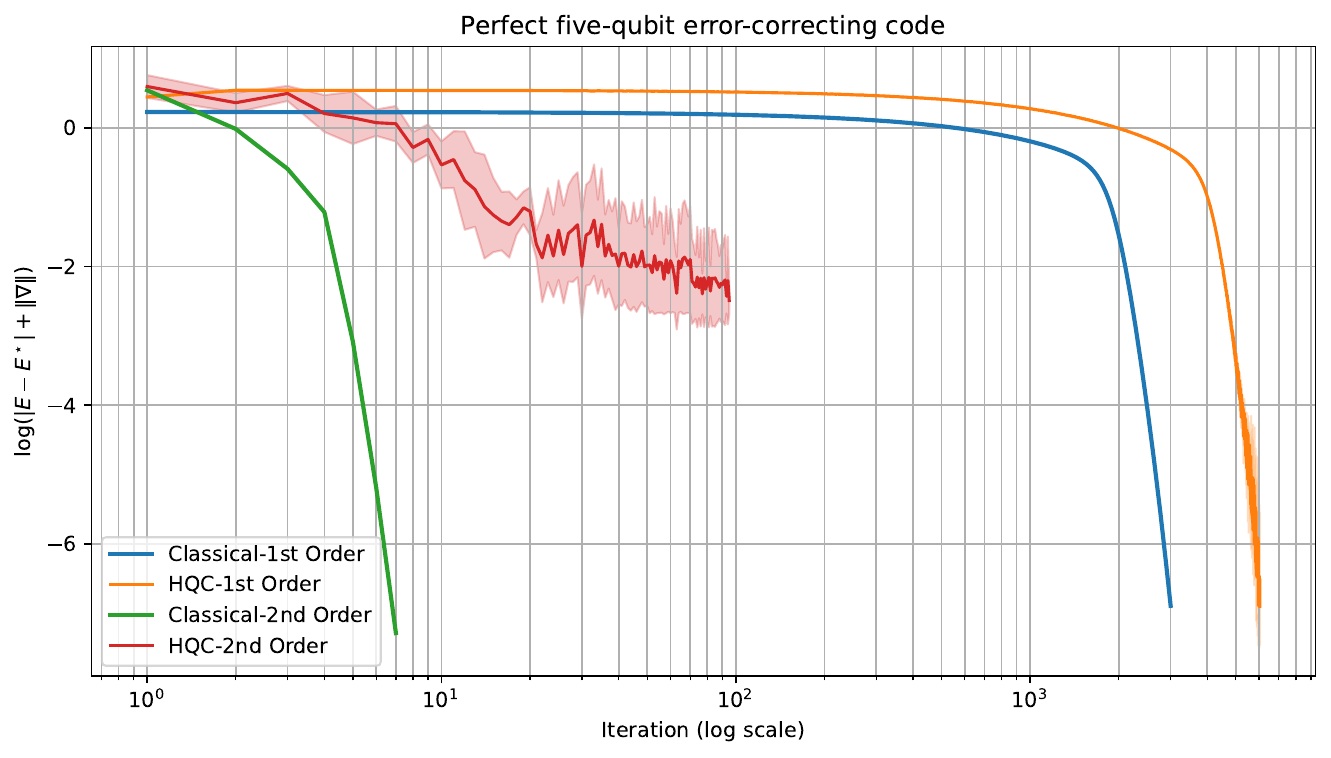}
\caption{The figure depicts the logarithm of the error metric in~\eqref{eq:error-metric} versus the number of iterations, for all of the LMPW algorithms (1st- and 2nd-order, and classical and HQC) for the task of constrained energy minimization for a stabilizer thermodynamic system formed from the perfect five-qubit error-correcting code. The constraints on the logical operators $\overline{X}$, $\overline{Y}$, and $\overline{Z}$ were set to $0.2$, 0, and $0.5$, respectively.  All of the algorithms converge, but the HQC algorithms, shown as the average over five independent runs with shaded regions denoting one standard deviation, require more iterations to converge due to sampling noise inherent in them. For this simulation, we did not warm-start the algorithm according to the recipe from Section~\ref{sec:warm-start}, but we instead started with $\mu_x = 1$, $\mu_y = 1$, and $\mu_z = 1$.
}
\label{fig:Avg-QEC-Perfect-Code-HQC}
\end{figure*}

Our first set of simulations involves the two-dimensional quantum Heisenberg
model with nearest- and next-nearest-neighbor interactions and
with the charges set to the total magnetizations
in the $x$, $y$, and $z$ directions. The goal is to solve
the constrained energy minimization problem from Section~\ref{subsec:Review-of-constrained},
with the expectation values on the total magnetizations in the $x$,
$y$, and $z$ directions set to be $1$, $0$, and $1$, respectively. Here we take the number
of qubits to be fixed at four, arranged on a $2\times2$ square lattice.
Figure~\ref{fig:2D-heis-nnn_4q} compares the performance of all the LMPW algorithms,
including the first- and second-order, classical and HQC algorithms, plotting the error metric in \eqref{eq:error-metric} versus the number of iterations. For the first- and second-order HQC algorithms, the average of five runs, along with one standard deviation shaded, is plotted.
The figure demonstrates that all algorithms converge.

The classical and HQC second-order algorithms employ backtracking, as discussed after Algorithm~\ref{alg:classical-LMPW-2nd-order}. The HQC algorithms employ sampling, resulting in noisy estimates of the gradient and, for the second-order algorithm, the Hessian. We used approximately $10^4$ samples per iteration to estimate expectation values of observables in both the first-order and second-order HQC algorithms, and $10^7$ samples per iteration were employed to estimate the Hessian for the second-order HQC algorithm. To reduce the computational cost of Hessian estimation, we parallelized the simulation across multiple CPU cores using process-level parallelism. Each process evaluates a subset of Hadamard-test circuits with a dedicated, single-threaded simulator to avoid resource contention. The circuits are pre-transpiled with placeholders for the time-evolution unitaries, which are dynamically inserted at runtime using cached matrix exponentials. This strategy significantly accelerates the Hessian estimation step while keeping memory usage and inter-process contention low. As shown in the figure, the second-order HQC exhibits a higher variance across independent runs, primarily due to the statistical and numerical challenges involved in estimating and inverting the Hessian. In particular, the inverse Hessian estimate is significantly more sensitive to noise and often suffers from numerical instabilities, which can amplify fluctuations in the update direction. To mitigate the computational cost of Hessian estimation, we adopted a looser convergence threshold for the gradient norm in the second-order HQC algorithm. This explains the comparatively higher error metric reached by the second-order HQC algorithm: unlike the other algorithms, whose faster runtime permitted more iterations and finer convergence, the second-order HQC was terminated earlier to maintain computational feasibility.

The figure clearly illustrates that incorporating second-derivative information, as done in the second-order algorithms, improves performance, i.e., requiring much fewer iterations to reach convergence compared to the first-order methods. However, as stated previously, the second-order algorithms require more computations per iteration than the first-order algorithms, and thus a trade-off is present.

Our next set of simulations reported here involve a stabilizer thermodynamic
system constructed from the perfect one-to-five qubit code. Specifically,
recall that a generating set for the code's stabilizer consists of
the following operators \cite[Section~8.2]{Gottesman1997}:
\begin{align}
S_{1} & \coloneqq X\otimes Z\otimes Z\otimes X\otimes I,\\
S_{2} & \coloneqq I\otimes X\otimes Z\otimes Z\otimes X,\\
S_{3} & \coloneqq X\otimes I\otimes X\otimes Z\otimes Z,\\
S_{4} & \coloneqq Z\otimes X\otimes I\otimes X\otimes Z,
\end{align}
and the code's logical operators are as follows:
\begin{align}
\overline{X} & \coloneqq X\otimes X\otimes X\otimes X\otimes X,\\
\overline{Y} & \coloneqq Y\otimes Y\otimes Y\otimes Y\otimes Y,\\
\overline{Z} & \coloneqq Z\otimes Z\otimes Z\otimes Z\otimes Z.
\end{align}
As such, the Hamiltonian for the stabilizer thermodynamic system is
as follows:
\begin{equation}
H\coloneqq-\left(S_{1}+S_{2}+S_{3}+S_{4}\right),
\end{equation}
and the conserved, non-commuting charges are set to $\overline{X}$,
$\overline{Y}$, and $\overline{Z}$. In this case, the thermal state
of the system at temperature $T$ is given by
\begin{equation}
\frac{e^{-\frac{1}{T}\left(H-\mu_{x}\overline{X}-\mu_{y}\overline{Y}-\mu_{z}\overline{Z}\right)}}{\Tr[e^{-\frac{1}{T}\left(H-\mu_{x}\overline{X}-\mu_{y}\overline{Y}-\mu_{z}\overline{Z}\right)}]},
\end{equation}
where $\left(\mu_{x},\mu_{y},\mu_{z}\right)\in\mathbb{R}^{3}$ is
the chemical potential vector. According to the discussion after~\eqref{eq:constrained-energy-min-stabilizer}, the ground energy of this stabilizer thermodynamic system is equal to $-(n-k)=-4$, because $n=5$ and $k=1$ in this case.

\begin{figure}
\includegraphics[width=\linewidth]{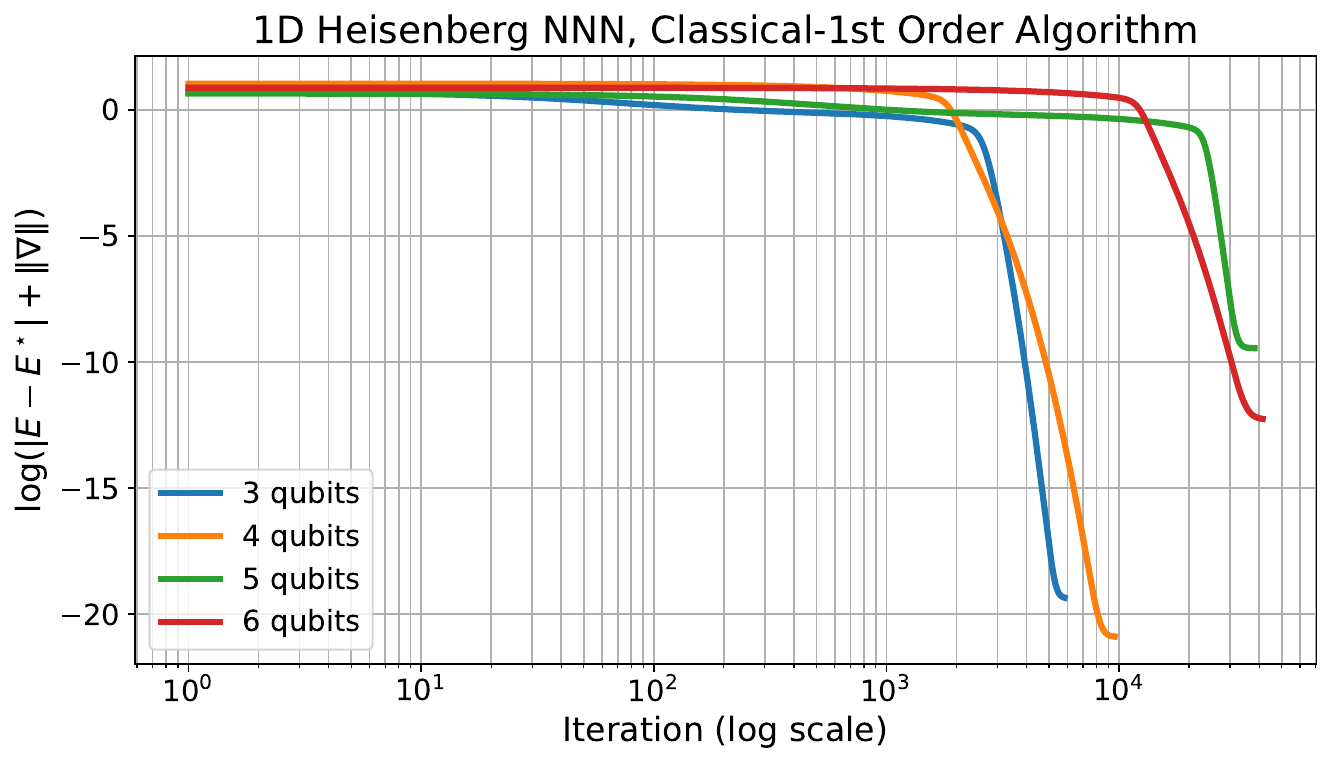}
\caption{The figure depicts the logarithm of the error metric in~\eqref{eq:error-metric} versus the number of iterations,  for the task of constrained energy minimization for the one-dimensional quantum Heisenberg model with nearest- and next-nearest-neighbor interactions and constraints on the total magnetizations in the $x$, $y$, and $z$ directions set to be 1, 0, and 1, respectively.  The LMPW classical algorithm converges in all cases (3, 4, 5, and 6 qubits), with larger systems taking more iterations to converge.}
\label{fig:1D-heis-nnn-1st-ord-classical-3-5-7-qubits-Log-Error}
\end{figure}

Figure~\ref{fig:Avg-QEC-Perfect-Code-HQC} plots the performance of the LMPW algorithms for the
above stabilizer thermodynamic system. Again the goal is to solve
the constrained energy minimization problem from Section~\ref{subsec:Review-of-constrained}.
However, as mentioned in Section~\ref{sec:Encoding-quantum-information},
the LMPW HQC algorithms in this context can alternatively be understood
as methods for encoding quantum information into an error-correcting
code. Indeed, by setting the expectation values of the logical operators to be
\begin{align}
    \Tr[\overline{X}\rho]& =r_{x}, \\ \Tr[\overline{Y}\rho]& =r_{y},\\ 
\Tr[\overline{Z}\rho]& =r_{z},
\end{align}
where $\left(r_{x},r_{y},r_{z}\right)\in\mathbb{R}^{3}$
is a vector in the Bloch ball, the LMPW HQC algorithms encode a
logical qubit with this Bloch sphere representation. In our simulations, we choose
\begin{equation}
r_x = 0.2, \quad  r_y = 0,\quad   r_z=0.5.    
\end{equation}
The plot in Figure~\ref{fig:Avg-QEC-Perfect-Code-HQC} is qualitatively similar to that from Figure~\ref{fig:2D-heis-nnn_4q},
comparing the error metric in \eqref{eq:error-metric} versus number of iterations for all of the algorithms. The details of the implementation are similar to those of the previous example, and so we do not repeat them here.
Again we find that the second-order algorithms require fewer iterations
than do the first-order algorithms, however with the same trade-off
mentioned previously.

\begin{figure}
\includegraphics[width=\linewidth]{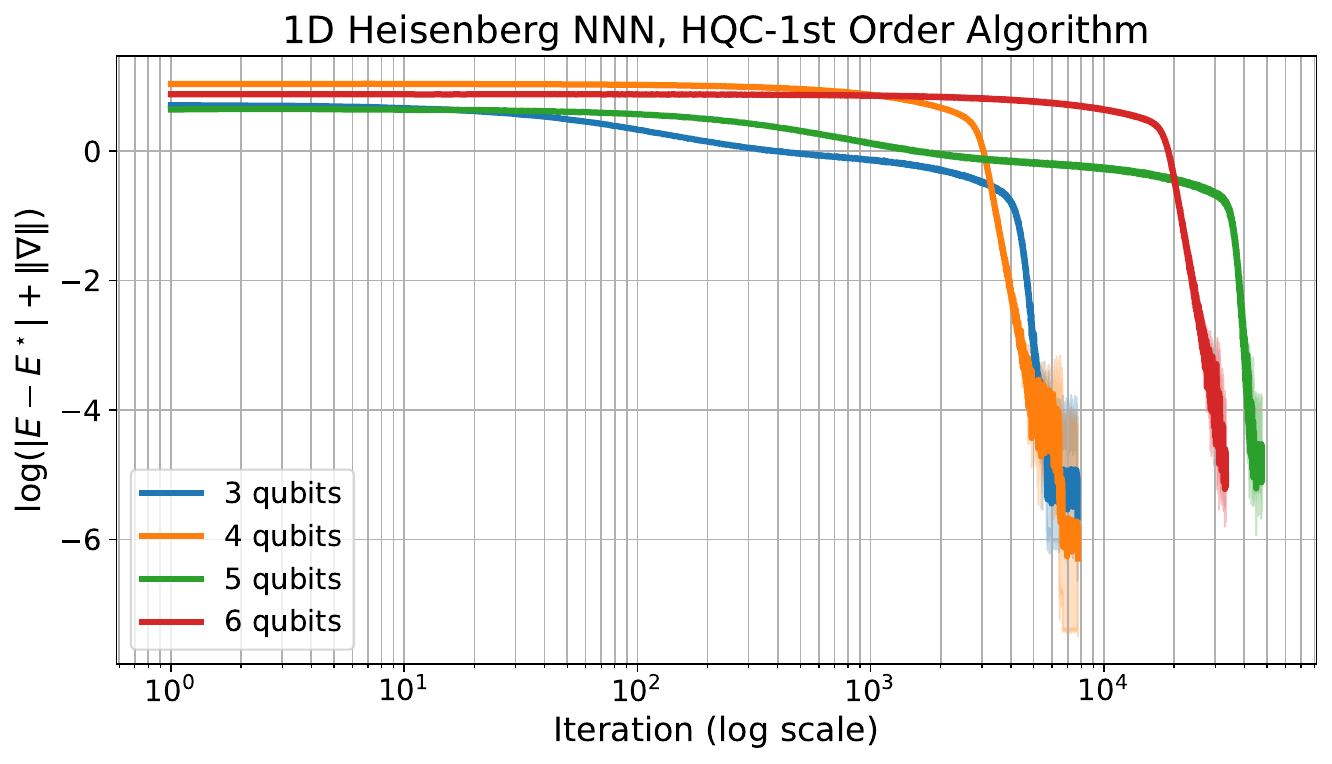}
\caption{The figure depicts the average of the logarithm of the error metric in~\eqref{eq:error-metric} over five independent runs, plotted against the number of iterations, with shaded regions indicating one standard deviation. The task is constrained energy minimization for the one-dimensional quantum Heisenberg model with nearest- and next-nearest-neighbor interactions, and constraints on the total magnetizations in the $x$, $y$, and $z$ directions set to be 1, 0, and 1, respectively.  The LMPW HQC algorithm converges in all cases (3, 4, 5, and 6 qubits), with larger systems taking more iterations to converge.
}
\label{fig:1D-heis-nnn-1st-ord-hqc-3-5-7-qubits-Log-Error}
\end{figure}

Our final set of simulations involves the one-dimensional quantum Heisenberg model with nearest- and next-nearest-neighbor interactions and with the charges set to the total magnetizations in the $x$, $y$, and $z$ directions. The goal is to solve the constrained energy minimization problem reviewed in Section~\ref{subsec:Review-of-constrained}, with the expectation values on the total magnetizations in the $x$, $y$, and $z$ directions set to be 1, 0, and 1, respectively. Figure~\ref{fig:1D-heis-nnn-1st-ord-classical-3-5-7-qubits-Log-Error} demonstrates the performance of the first-order LMPW classical algorithm as the number of qubits in the model increases (specifically, the figure plots results for three, four, five, and six qubits). The figure plots the error metric in \eqref{eq:error-metric} versus the number of iterations, demonstrating that every case converges to the true value as the number of iterations increases.

Figure~\ref{fig:1D-heis-nnn-1st-ord-hqc-3-5-7-qubits-Log-Error} demonstrates the performance of the first-order HQC algorithm for the same setup. The algorithm converges in all cases, but requires more iterations than the classical algorithm and exhibits noticeable noise in proximity to the optimal value. As expected, this behavior arises from the sampling noise inherent in the HQC algorithm. To mitigate the increased computational cost and instability associated with this noise, we employed a looser convergence threshold on the gradient norm, which also accounts for the slightly higher final error metric reached by the HQC algorithm compared to its classical counterpart. The number of shots used per iteration was set according to~\cite[Eq.~(F17)]{liu2025qthermoSDPs}, with $\varepsilon = 10^{-4}$ and $\delta = 10^{-2}$. In this case, the norm of the Pauli coefficients of the total magnetizations is $n$, where $n$ is the number of qubits.

\subsection{Comparison of Standard Gradient Ascent and Nesterov’s Accelerated Method}

\begin{figure}
\includegraphics[width=\linewidth]{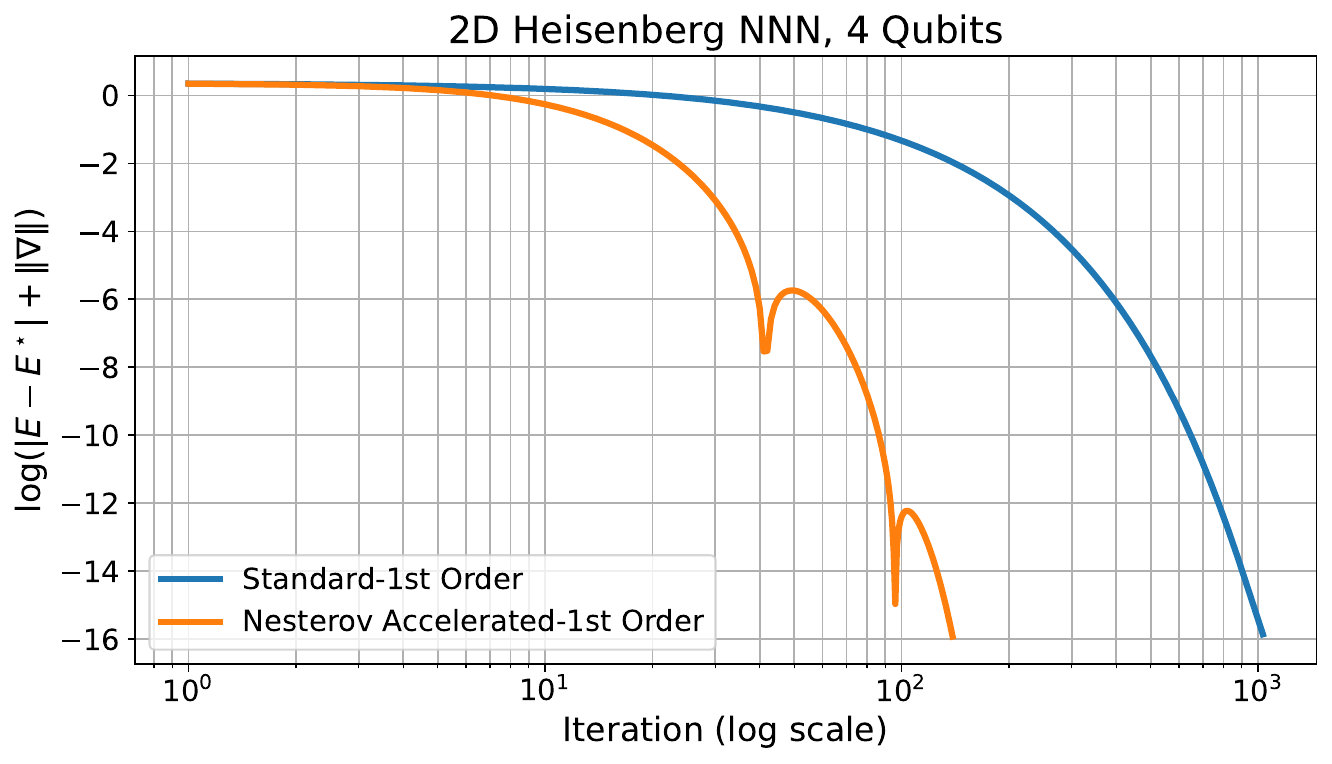}
\caption{The figure depicts the logarithm of the error metric in~\eqref{eq:error-metric} versus the number of iterations for the LMPW classical first-order algorithm, comparing the standard gradient ascent method with Nesterov’s accelerated version. The task is constrained energy minimization for the two-dimensional, four-qubit quantum Heisenberg model with nearest- and next-nearest-neighbor interactions and constraints on the total magnetizations in the $x$, $y$, and $z$ directions set to be 1, 0, and 1, respectively.
}
\label{fig:2d_heis_nnn_4q_Nesterov_Standard}
\end{figure}

\begin{figure}
\includegraphics[width=\linewidth]{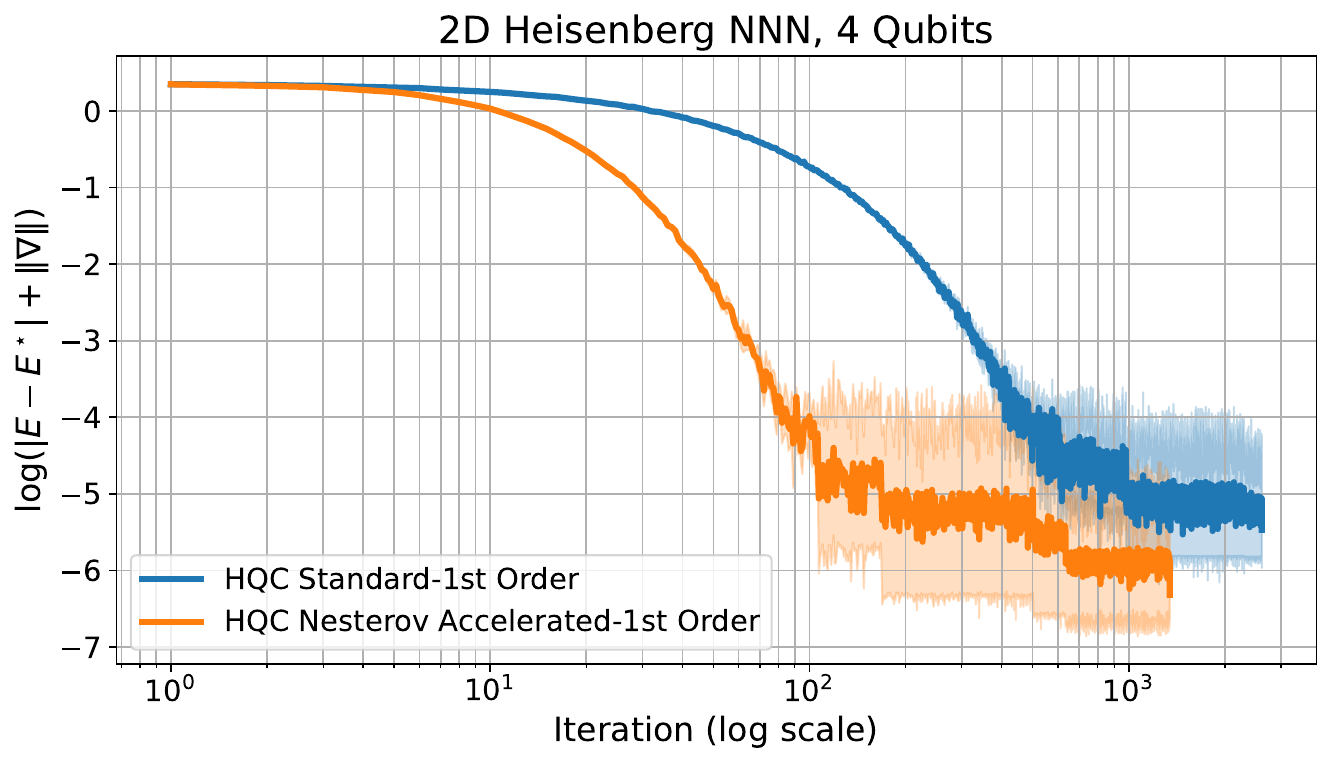}
\caption{The figure depicts the average of the logarithm of the error metric in~\eqref{eq:error-metric} over five independent runs, plotted against the number of iterations, with shaded regions indicating one standard deviation, for the LMPW HQC first-order algorithm, comparing the standard gradient ascent method with Nesterov’s accelerated version. The task is constrained energy minimization for the two-dimensional, four-qubit quantum Heisenberg model with nearest- and next-nearest-neighbor interactions and constraints on the total magnetizations in the $x$, $y$, and $z$ directions set to be 1, 0, and 1, respectively.
}
\label{fig:2d_heis_nnn_4q_Nesterov_Standard_HQC}
\end{figure}

Throughout this work, we employed the standard gradient ascent algorithm for both the classical and HQC implementations of the first-order optimization procedure. However, several well-known modifications of the standard gradient ascent method can lead to improved convergence performance, especially in concave settings. One such modification is Nesterov’s accelerated gradient method~\cite{Nesterov1983}, which introduces a momentum term to anticipate the direction of the next update, resulting in improved convergence speed.

While the standard gradient ascent update rule, given in~\eqref{eq:gradient-update-LMPW}, directly updates the parameters based on the gradient evaluated at the current iterate, Nesterov’s accelerated gradient ascent maintains an auxiliary sequence $v_m$, which incorporates a momentum term, and performs the update as follows:
\begin{align}
    v_{m+1} &= \tau v_m + \eta \nabla f(\mu_m + \tau v_m), \\
    \mu_{m+1} &= \mu_m + v_{m+1},
\end{align}
where $\tau \in [0, 1)$ is the momentum parameter and $\eta >0$ is the learning rate. The key idea is to compute the gradient not at the current iterate $\mu_m$, but at the projected position $\mu_m + \tau v_m$, effectively introducing a look-ahead mechanism that often leads to faster convergence.

Figure~\ref{fig:2d_heis_nnn_4q_Nesterov_Standard} illustrates the performance of the standard and Nesterov-accelerated gradient ascent algorithms for the classical first-order optimization of the two-dimensional four-qubit Heisenberg model with nearest- and next-nearest-neighbor interactions. While both methods converge to the same solution, the Nesterov-accelerated method does so in significantly fewer iterations.
Figure~\ref{fig:2d_heis_nnn_4q_Nesterov_Standard_HQC} presents the corresponding results for the HQC first-order optimization applied to the same task, again demonstrating that the Nesterov-accelerated method converges in significantly fewer iterations than does the standard gradient ascent method.

\section{Conclusion}

\label{sec:Conclusion}In this paper, we put forward several conceptual
contributions, including framing constrained energy minimization as
a method for designing ground states of Hamiltonians and constrained
free energy minimization as a method for designing thermal states
of Hamiltonians. We argued how the LMPW HQC algorithms can be used
for designing these Hamiltonians and how this application has implications
for designing molecules and materials with desired properties -- specifically by enabling the precise targeting of specific electronic sectors (defined by particle number, spin, or other charges), which is a prerequisite for calculating the properties of molecular isomers and magnetic materials. We
also introduced stabilizer thermodynamic systems as thermodynamic
systems for which the Hamiltonian is constructed from the stabilizer
operators of a stabilizer code and the conserved, non-commuting charges
are constructed from the code's logical operators. Then we demonstrated
how the LMPW HQC algorithms can be understood as alternative methods
for encoding quantum information into quantum error-correcting codes,
which could be a useful approach whenever low-temperature thermal state preparation methods are experimentally accessible.

The numerical contribution of our paper involved simulating the LMPW
algorithms for several key thermodynamic systems of interest, including
the one- and two-dimensional quantum Heisenberg models with nearest-neighbor
and next-nearest-neighbor interactions and several stabilizer thermodynamic
systems (the one-to-three qubit repetition code, the two-to-four qubit
error-detecting code, and the perfect one-to-five qubit code). To
test the performance of the LMPW algorithms on all of these examples,
we employed standard SDP solvers to determine the true values of the
constrained energy minimization problems. The LMPW classical algorithms
can be understood as establishing alternative benchmarks for the HQC
algorithms: the former algorithms act as noiseless simulators of the
HQC algorithms, which focus exclusively on converging when measurement
shot noise is not present, while our simulations of the latter incorporate
measurement shot noise.

Finally, we comment on the robustness of the LMPW algorithms in the low-temperature regime ($T \to 0$), which is required for high-precision ground state approximation. As discussed in Section~\ref{sec:Design-of-Hamiltonians}, the smoothness parameter of the dual objective function scales as $L \propto 1/T$. This implies that as the temperature decreases, the optimization landscape becomes stiffer, necessitating smaller step sizes ($\eta \propto T$) and leading to slower convergence rates. Additionally, for the second-order algorithms, the condition number of the Hessian increases, making the update steps more sensitive to sampling noise in the HQC setting. However, a key advantage of the LMPW framework is the guaranteed concavity of the dual landscape, which holds for all $T > 0$. Unlike non-convex variational ansatzes that may get trapped in local minima as the system freezes, the LMPW algorithms are guaranteed to converge to the global optimum. Thus, while the computational cost (in terms of iterations and samples) increases at very low temperatures, the algorithms remain robust and do not suffer from the breakdown of convergence often observed in other variational methods.

Furthermore, our framework opens new avenues for exploring the interplay between thermodynamic geometry and quantum dynamics. For instance, the strictly concave geometry of the dual optimization landscape established in our work could provide a constructive method to probe phase transitions associated with the elimination of dynamical symmetries in systems with non-commuting charges~\cite{Majidy_2024}. Investigating how this geometry changes as non-commuting terms are introduced may offer deeper insights into why non-commuting charges can distinctly hinder or promote thermalization depending on the physical context.

Going forward from here, there are several avenues to pursue for future
research. First, we think it would be worthwhile to employ matrix
product states for simulating the thermal states needed in the LMPW
classical algorithms, as used in prior work~\cite{Huang2021,Alhambra2021,Alhambra2023}.
Very likely such an approach would lead to an advanced classical algorithm
and serve several purposes: first, it could be a useful algorithm
on its own for understanding quantum thermodynamic systems; second,
it could be helpful for solving or approximating the design of ground
and thermal states of controllable Hamiltonians; third, it could be
useful as a benchmark to set against the LMPW HQC algorithms. Regarding
the latter point, any claimed quantum speedup for the LMPW HQC algorithms
would have to outperform any possible classical algorithm for this
task, and it is likely that approaches involving matrix product states would
lead to the fastest possible classical algorithms.

We also left it open to test the performance of the LMPW HQC algorithms
on currently available quantum computers. In doing so, it would be necessary to incorporate
the techniques of quantum error correction and mitigation in order
to limit the effects of noise, and one would also need to incorporate
various advanced methods for thermal state preparation~\cite{chen2023q_Gibbs_sampl,chen2023thermalstatepreparation,rajakumar2024gibbssampling,bergamaschi2024gibbs_sampling,chen2024sim_Lindblad,rouze2024efficientthermalization,bakshi2024hightemperaturegibbsstates,ding2024preparationlowtemperaturegibbs}
in order to have efficient approaches for this necessary step. As
discussed in~\cite{liu2025qthermoSDPs}, in order to have an efficient
HQC algorithm, it would be necessary to find a quantum thermodynamic
system (Hamiltonian and conserved charges) for which one can efficiently
prepare low-temperature thermal states.

\begin{acknowledgments}
We are grateful to Nicole Yunger Halpern
for a helpful discussion on quantum thermodynamics in the presence
of conserved, non-commuting charges. We also thank Zo\"e Holmes,
Peter McMahon, Karan Mehta, Erich Mueller, and Darren Pereira for helpful discussions.
MM acknowledges support of the NCCR MARVEL, a National Centre of Competence in Research, funded by the Swiss National Science Foundation (grant number 205602).
MC and JK acknowledge support from the Cornell Engineering Learning
Initiative program. IL, ML, and KW are grateful to the Cornell School of Electrical and Computer Engineering for hospitality during their summer internships.  NL acknowledges funding from the Science and Technology
Commission of Shanghai Municipality (STCSM) grant no.~24LZ1401200
(21JC1402900), NSFC grants no.~12471411 and no.~12341104, the Shanghai
Jiao Tong University 2030 Initiative, and the Fundamental Research
Funds for the Central Universities. MMW acknowledges support from
the National Science Foundation under grant no.~2329662 and from
the Cornell School of Electrical and Computer Engineering.
\end{acknowledgments}

\section*{Author Contributions}
\noindent
\textbf{Author Contributions}: The following describes the
different contributions of the authors of this work, using
roles defined by the CRediT (Contributor Roles Taxonomy) project~\cite{CRediT}:
\medskip 

\noindent\textbf{MM:} Data curation, Investigation, Resources, Software, Supervision, Validation,  Visualization, Writing - review \& editing.
\smallskip

\noindent\textbf{MC:} Data curation, Resources, Software, Validation,  Visualization, Writing - review \& editing.
\smallskip

\noindent\textbf{JK:} Data curation, Resources, Software, Validation, Visualization, Writing - review \& editing.
\smallskip

\noindent\textbf{NL:} Conceptualization, Methodology, Writing - review \& editing.
\smallskip

\noindent\textbf{IL:} Data curation, Resources, Software, Validation,  Visualization, Writing - review \& editing.
\smallskip

\noindent\textbf{ML:} Data curation, Resources, Software, Validation,  Visualization.
\smallskip

\noindent\textbf{SR:} Software, Supervision, Validation.
\smallskip

\noindent\textbf{KW:} Data curation, Resources, Software, Validation,  Visualization, Writing - review \& editing.
\smallskip

\noindent\textbf{MMW:} Conceptualization, Formal Analysis, Funding acquisition, Investigation, Methodology, Project Administration, Supervision, Validation, Writing - original draft, Writing - review \& editing.

\bibliography{Ref}

\appendix

\section{Closeness of low-temperature thermal states and ground states of
Hamiltonians}

\label{sec:Closeness-of-low-temperature}

How well does the thermal
state of a Hamiltonian approximate its ground state? Alternatively,
in the case of a degenerate Hamiltonian, how well does the thermal
state of a Hamiltonian approximate the maximally mixed state of its
ground space? It is well known that the approximation becomes better as $\beta\to\infty$,
where $\beta\coloneqq\frac{1}{T}$ is the inverse temperature.

In this appendix, we answer this question quantitatively in terms
of several standard distinguishability measures for quantum states:
the normalized trace distance (Proposition~\ref{prop:trace-dist-thermal-ground}),
the fidelity (Proposition~\ref{prop:fid-thermal-ground}), the quantum
relative entropy, and quantum R\'enyi relative entropies (Proposition
\ref{prop:rel-ent-thermal-ground}). Recall that these measures are
respectively defined for states $\rho$ and $\sigma$ and $\alpha\in(0,1)\cup(1,\infty)$
as follows:
\begin{align}
\frac{1}{2}\left\Vert \rho-\sigma\right\Vert _{1} & ,\\
F(\rho,\sigma) & \coloneqq\left\Vert \sqrt{\rho}\sqrt{\sigma}\right\Vert _{1}^{2},\\
D(\rho\|\sigma) & \coloneqq\Tr[\rho\left(\ln\rho-\ln\sigma\right)],\\
D_{\alpha}(\rho\|\sigma) & \coloneqq\frac{1}{\alpha-1}\ln\Tr[\rho^{\alpha}\sigma^{1-\alpha}],\label{eq:petz-renyi-def}
\end{align}
where $\left\Vert A\right\Vert _{1}\coloneqq\Tr\!\left[\sqrt{A^{\dag}A}\right]$.
The R\'enyi relative entropy in~\eqref{eq:petz-renyi-def} is the
Petz--R\'enyi relative entropy~\cite{Petz1986}, and we note that
there are other quantum R\'enyi relative entropies, including the
sandwiched R\'enyi relative entropy~\cite{MuellerLennert2013,Wilde2014}
and the geometric R\'enyi relative entropy~\cite{Matsumoto2013,Fang2021},
defined for $\alpha\in(0,1)\cup(1,\infty)$ as follows:
\begin{align}
\widetilde{D}_{\alpha}(\rho\|\sigma) & \coloneqq\frac{1}{\alpha-1}\ln\Tr\!\left[\left(\sigma^{\frac{1-\alpha}{2\alpha}}\rho\sigma^{\frac{1-\alpha}{2\alpha}}\right)^{\alpha}\right],\\
\widehat{D}_{\alpha}(\rho\|\sigma) & \coloneqq\frac{1}{\alpha-1}\ln\Tr\!\left[\sigma\left(\sigma^{-\frac{1}{2}}\rho\sigma^{-\frac{1}{2}}\right)^{\alpha}\right].
\end{align}
For our case of interest here, the states being considered commute,
i.e., $\left[\rho,\sigma\right]=0$, in which case the following equalities
hold for $\alpha\in(0,1)\cup(1,\infty)$:
\begin{equation}
D_{\alpha}(\rho\|\sigma)=\widetilde{D}_{\alpha}(\rho\|\sigma)=\widehat{D}_{\alpha}(\rho\|\sigma).\label{eq:renyi-eq-commuting}
\end{equation}

Let $H$ be a Hamiltonian for a $d$-dimensional quantum system with
the following spectral decomposition:
\begin{equation}
H=\sum_{i=1}^{M}\lambda_{i}\Pi_{i},\label{eq:ham-spectral-decomp}
\end{equation}
where we assume that the eigenvalues are ordered, so that $\lambda_{M}\geq\cdots\geq\lambda_{1}$,
and $\left\{ \Pi_{i}\right\} _{i=1}^{M}$ is the set of spectral projections,
satisfying $\Pi_{i}\Pi_{j}=\delta_{i,j}\Pi_{i}$ for all $i,j\in\left[M\right]$
and $\sum_{i=1}^{M}\Pi_{i}=I$. 
\begin{prop}
\label{prop:trace-dist-thermal-ground}For a $d$-dimensional Hamiltonian
$H$ as in~\eqref{eq:ham-spectral-decomp} and $\beta\geq0$, the
following hold:
\begin{align}
& \frac{1}{2}\left\Vert \frac{e^{-\beta H}}{\Tr[e^{-\beta H}]}-\frac{\Pi_{1}}{\Tr[\Pi_{1}]}\right\Vert _{1} \notag \\
& =\frac{1}{1+\frac{d_{G}}{\sum_{i=2}^{M}e^{-\beta\left(\lambda_{i}-\lambda_{1}\right)}\Tr[\Pi_{i}]}}.\\
 & \leq\frac{1}{1+e^{\beta\Delta}\frac{d_{G}}{d-d_{G}}},\label{eq:error-bound-thermal-state-ground-state}
\end{align}
where $\Delta\coloneqq\lambda_{2}-\lambda_{1}$ is the spectral gap
and $d_{G}\coloneqq\Tr[\Pi_{1}]$ is the dimension of the ground space
of $H$.
\end{prop}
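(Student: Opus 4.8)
The plan is to exploit the fact that both $e^{-\beta H}/\Tr[e^{-\beta H}]$ and $\Pi_1/\Tr[\Pi_1]$ are functions of $H$, hence simultaneously diagonalizable and commuting. For commuting states the normalized trace distance equals one half the $\ell_1$-distance between their eigenvalue multisets, so the entire proof reduces to careful bookkeeping of eigenvalues of $H$.

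First I would expand the partition function as $\Tr[e^{-\beta H}] = \sum_{i=1}^M e^{-\beta\lambda_i}\Tr[\Pi_i]$ and factor out $e^{-\beta\lambda_1}$, writing $\Tr[e^{-\beta H}] = e^{-\beta\lambda_1}(d_G + R)$, where $d_G = \Tr[\Pi_1]$ and $R \coloneqq \sum_{i=2}^M e^{-\beta(\lambda_i-\lambda_1)}\Tr[\Pi_i]\geq 0$. On the ground space, $e^{-\beta H}/\Tr[e^{-\beta H}]$ has eigenvalue $1/(d_G+R)$ with multiplicity $d_G$, whereas $\Pi_1/d_G$ has eigenvalue $1/d_G$ with multiplicity $d_G$; this contributes $d_G\bigl(\tfrac{1}{d_G} - \tfrac{1}{d_G+R}\bigr) = \tfrac{R}{d_G+R}$ to the $\ell_1$-distance. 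Off the ground space $\Pi_1/d_G$ vanishes, while the eigenvalues of $e^{-\beta H}/\Tr[e^{-\beta H}]$ there sum to $R/(d_G+R)$, contributing another $R/(d_G+R)$. Adding the two pieces and halving gives $\tfrac{1}{2}\lVert\cdot\rVert_1 = \tfrac{R}{d_G+R} = \tfrac{1}{1 + d_G/R}$, which is exactly the first claimed equality once $R$ is written out.

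For the inequality I would bound $R$ from above: since the eigenvalues are ordered, $\lambda_i - \lambda_1 \geq \lambda_2 - \lambda_1 = \Delta$ for every $i\geq 2$, so $e^{-\beta(\lambda_i-\lambda_1)}\leq e^{-\beta\Delta}$ for $\beta\geq 0$, and hence $R \leq e^{-\beta\Delta}\sum_{i=2}^M \Tr[\Pi_i] = e^{-\beta\Delta}(d - d_G)$. Because $x\mapsto \tfrac{1}{1+d_G/x} = \tfrac{x}{x+d_G}$ is nondecreasing in $x>0$, substituting this upper bound for $R$ yields $\tfrac{1}{1+d_G/R} \leq \tfrac{1}{1 + e^{\beta\Delta}\,d_G/(d-d_G)}$, which is the stated bound.

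I do not expect a genuine obstacle here, as the argument is essentially a careful eigenvalue count; the points that need attention are (i) justifying the reduction of trace distance to the $\ell_1$-distance of eigenvalues for co-diagonalizable operators, taking care to merge multiplicities coming from distinct $\Pi_i$ that yield the same eigenvalue of either state, and (ii) the degenerate edge cases, such as $R=0$ (when $H$ is proportional to the identity, so $M=1$ and $d=d_G$), where both sides of the equality are $0$ and the displayed bounds are read with the convention $\tfrac{1}{1+\infty}=0$. As a sanity check one can verify the bound is tight at $\beta=0$, where $R = d - d_G$ and both sides equal $1 - d_G/d$. With these details in place, the monotonicity step closes the proof.
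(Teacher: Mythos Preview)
Your proposal is correct and follows essentially the same approach as the paper: both proofs exploit the common spectral decomposition of the two states to split the trace norm into a ground-space contribution and an excited-space contribution, obtain the exact value $R/(d_G+R)$ with $R=\sum_{i\ge 2}e^{-\beta(\lambda_i-\lambda_1)}\Tr[\Pi_i]$, and then bound $R\le e^{-\beta\Delta}(d-d_G)$ via the spectral gap. Your packaging with the named quantity $R$ and the explicit monotonicity of $x\mapsto x/(x+d_G)$ is slightly tidier, but the underlying argument is the same.
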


\begin{proof}
Consider that
\begin{equation}
\frac{e^{-\beta H}}{\Tr[e^{-\beta H}]}=\frac{\sum_{i=1}^{M}e^{-\beta\lambda_{i}}\Pi_{i}}{\sum_{i=1}^{M}e^{-\beta\lambda_{i}}\Tr[\Pi_{i}]}.
\end{equation}
Now consider that
\begin{align}
 & \left\Vert \frac{\sum_{i=1}^{M}e^{-\beta\lambda_{i}}\Pi_{i}}{\sum_{i=1}^{M}e^{-\beta\lambda_{i}}\Tr[\Pi_{i}]}-\frac{\Pi_{1}}{\Tr[\Pi_{1}]}\right\Vert _{1}\nonumber \\
 & =\left\Vert \frac{\sum_{i=2}^{M}e^{-\beta\lambda_{i}}\Pi_{i}}{\sum_{i=1}^{M}e^{-\beta\lambda_{i}}\Tr[\Pi_{i}]}+\frac{e^{-\beta\lambda_{1}}\Pi_{1}}{\sum_{i=1}^{M}e^{-\beta\lambda_{i}}\Tr[\Pi_{i}]}-\frac{\Pi_{1}}{\Tr[\Pi_{1}]}\right\Vert _{1}\\
 & =\left\Vert \begin{array}{cc}
\frac{\sum_{i=2}^{M}e^{-\beta\lambda_{i}}\Pi_{i}}{\sum_{i=1}^{M}e^{-\beta\lambda_{i}}\Tr[\Pi_{i}]}+\\
\left(\frac{e^{-\beta\lambda_{1}}}{\sum_{i=1}^{M}e^{-\beta\lambda_{i}}\Tr[\Pi_{i}]}-\frac{1}{\Tr[\Pi_{1}]}\right)\Pi_{1}
\end{array}\right\Vert _{1}\\
 & =\left\Vert \frac{\sum_{i=2}^{M}e^{-\beta\lambda_{i}}\Pi_{i}}{\sum_{i=1}^{M}e^{-\beta\lambda_{i}}\Tr[\Pi_{i}]}\right\Vert _{1}\nonumber \\
 & \qquad+\left|\frac{e^{-\beta\lambda_{1}}}{\sum_{i=1}^{M}e^{-\beta\lambda_{i}}\Tr[\Pi_{i}]}-\frac{1}{\Tr[\Pi_{1}]}\right|\left\Vert \Pi_{1}\right\Vert _{1}\\
 & =\frac{\sum_{i=2}^{M}e^{-\beta\lambda_{i}}\left\Vert \Pi_{i}\right\Vert _{1}}{\sum_{i=1}^{M}e^{-\beta\lambda_{i}}\Tr[\Pi_{i}]}\nonumber \\
 & \qquad+\left|\frac{e^{-\beta\lambda_{1}}}{\sum_{i=1}^{M}e^{-\beta\lambda_{i}}\Tr[\Pi_{i}]}-\frac{1}{\Tr[\Pi_{1}]}\right|\Tr[\Pi_{1}]\\
 & =\frac{\sum_{i=2}^{M}e^{-\beta\lambda_{i}}\Tr[\Pi_{i}]}{\sum_{i=1}^{M}e^{-\beta\lambda_{i}}\Tr[\Pi_{i}]}+\left|\frac{e^{-\beta\lambda_{1}}\Tr[\Pi_{1}]}{\sum_{i=1}^{M}e^{-\beta\lambda_{i}}\Tr[\Pi_{i}]}-1\right|\\
 & =\frac{\sum_{i=2}^{M}e^{-\beta\lambda_{i}}\Tr[\Pi_{i}]}{\sum_{i=2}^{M}e^{-\beta\lambda_{i}}\Tr[\Pi_{i}]+e^{-\beta\lambda_{1}}\Tr[\Pi_{1}]}\nonumber \\
 & \qquad+1-\frac{e^{-\beta\lambda_{1}}\Tr[\Pi_{1}]}{\sum_{i=1}^{M}e^{-\beta\lambda_{i}}\Tr[\Pi_{i}]}\\
 & =2\frac{\sum_{i=2}^{M}e^{-\beta\lambda_{i}}\Tr[\Pi_{i}]}{\sum_{i=2}^{M}e^{-\beta\lambda_{i}}\Tr[\Pi_{i}]+e^{-\beta\lambda_{1}}\Tr[\Pi_{1}]}\\
 & =2\frac{\sum_{i=2}^{M}e^{-\beta\left(\lambda_{i}-\lambda_{1}\right)}\Tr[\Pi_{i}]}{\sum_{i=2}^{M}e^{-\beta\left(\lambda_{i}-\lambda_{1}\right)}\Tr[\Pi_{i}]+\Tr[\Pi_{1}]}\\
 & =2\frac{1}{1+\frac{\Tr[\Pi_{1}]}{\sum_{i=2}^{M}e^{-\beta\left(\lambda_{i}-\lambda_{1}\right)}\Tr[\Pi_{i}]}}.
\end{align}

Now recall that $\Delta=\lambda_{2}-\lambda_{1}$. Then $e^{-\beta\left(\lambda_{i}-\lambda_{1}\right)}\leq e^{-\beta\Delta}$
for all $i\in\left\{ 2,\ldots,M\right\} $, which implies that
\begin{align}
 & \frac{1}{1+\frac{\Tr[\Pi_{1}]}{\sum_{i=2}^{M}e^{-\beta\left(\lambda_{i}-\lambda_{1}\right)}\Tr[\Pi_{i}]}}\nonumber \\
 & \leq\frac{1}{1+\frac{\Tr[\Pi_{1}]}{\sum_{i=2}^{M}e^{-\beta\Delta}\Tr[\Pi_{i}]}}\\
 & =\frac{1}{1+e^{\beta\Delta}\frac{\Tr[\Pi_{1}]}{\sum_{i=2}^{M}\Tr[\Pi_{i}]}}\\
 & =\frac{1}{1+e^{\beta\Delta}\frac{\Tr[\Pi_{1}]}{\Tr\left[\sum_{i=2}^{M}\Pi_{i}\right]}}\\
 & =\frac{1}{1+e^{\beta\Delta}\frac{\Tr[\Pi_{1}]}{\Tr\left[I-\Pi_{1}\right]}}\\
 & =\frac{1}{1+e^{\beta\Delta}\frac{\Tr[\Pi_{1}]}{d-\Tr\left[\Pi_{1}\right]}},
\end{align}
thus concluding the proof.
\end{proof}
\begin{cor}
For the normalized trace distance to be $\leq\varepsilon\in[0,1]$
in~\eqref{eq:error-bound-thermal-state-ground-state}, i.e., to have
\begin{equation}
\frac{1}{2}\left\Vert \frac{e^{-\beta H}}{\Tr[e^{-\beta H}]}-\frac{\Pi_{1}}{\Tr[\Pi_{1}]}\right\Vert _{1}\leq\varepsilon,
\end{equation}
we require the inverse temperature $\beta$ to be set as follows:
\begin{equation}
\beta=\frac{1}{\Delta}\ln\!\left[\left(\frac{1-\varepsilon}{\varepsilon}\right)\left(\frac{d-d_{G}}{d_{G}}\right)\right],
\end{equation}
\end{cor}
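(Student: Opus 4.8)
The plan is to obtain the stated value of $\beta$ simply by inverting the upper bound in Proposition~\ref{prop:trace-dist-thermal-ground}. Since that proposition already establishes
\begin{equation}
\frac{1}{2}\left\Vert \frac{e^{-\beta H}}{\Tr[e^{-\beta H}]}-\frac{\Pi_{1}}{\Tr[\Pi_{1}]}\right\Vert _{1}\leq\frac{1}{1+e^{\beta\Delta}\frac{d_{G}}{d-d_{G}}},
\end{equation}
it suffices to choose $\beta$ so that the right-hand side equals $\varepsilon$; monotonicity of the bound in $\beta$ then guarantees the inequality for all larger $\beta$ as well.

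First I would set $\dfrac{1}{1+e^{\beta\Delta}\frac{d_{G}}{d-d_{G}}}=\varepsilon$, which is equivalent (for $\varepsilon\in(0,1]$) to $1+e^{\beta\Delta}\frac{d_{G}}{d-d_{G}}=\frac{1}{\varepsilon}$, hence to $e^{\beta\Delta}\frac{d_{G}}{d-d_{G}}=\frac{1-\varepsilon}{\varepsilon}$. Solving for the exponential gives
\begin{equation}
e^{\beta\Delta}=\left(\frac{1-\varepsilon}{\varepsilon}\right)\left(\frac{d-d_{G}}{d_{G}}\right),
\end{equation}
and then taking the logarithm and dividing by the spectral gap $\Delta$ (which is positive by assumption that the Hamiltonian has a nondegenerate second level, i.e.\ $\lambda_2>\lambda_1$) yields exactly the claimed expression for $\beta$.

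There is essentially no hard step here; the only points that warrant a sentence of care are the degenerate and boundary cases. If $\varepsilon=0$ the formula correctly degenerates to $\beta=+\infty$ (one cannot reach exact equality at finite temperature), so one should state the corollary for $\varepsilon\in(0,1]$ or interpret the $\varepsilon\to 0$ limit. One also tacitly needs $d_G<d$ (there exists at least one excited subspace) for the ratio $\frac{d-d_G}{d_G}$ to be finite and positive, and $\Delta>0$; both hold in the setting of Proposition~\ref{prop:trace-dist-thermal-ground}. I would close by remarking that, because the bound in~\eqref{eq:error-bound-thermal-state-ground-state} is monotonically decreasing in $\beta$, any $\beta$ at least as large as the displayed value also achieves normalized trace distance $\leq\varepsilon$, so the stated $\beta$ is the threshold inverse temperature.
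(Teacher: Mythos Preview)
Your proposal is correct and follows essentially the same approach as the paper: set the upper bound from Proposition~\ref{prop:trace-dist-thermal-ground} equal to $\varepsilon$ and algebraically invert to solve for $\beta$. Your added remarks on the boundary cases $\varepsilon=0$, $d_G<d$, $\Delta>0$, and on monotonicity in $\beta$ go slightly beyond the paper's proof but are consistent with it.
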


\begin{proof}
Consider that we need to solve for $\varepsilon$ in~\eqref{eq:1st-eq-solve-eps}
and find that
\begin{align}
 \frac{1}{1+e^{\beta\Delta}\frac{\Tr[\Pi_{1}]}{d-\Tr\left[\Pi_{1}\right]}} & = \varepsilon\label{eq:1st-eq-solve-eps}\\
\iff\quad1+e^{\beta\Delta}\frac{\Tr[\Pi_{1}]}{d-\Tr\left[\Pi_{1}\right]} & =\frac{1}{\varepsilon}\\
\iff\quad e^{\beta\Delta}\frac{\Tr[\Pi_{1}]}{d-\Tr\left[\Pi_{1}\right]} & =\frac{1}{\varepsilon}-1\\
\iff\quad \left(\frac{1-\varepsilon}{\varepsilon}\right)\left(\frac{d-\Tr\left[\Pi_{1}\right]}{\Tr[\Pi_{1}]}\right)  & = e^{\beta\Delta}
\end{align}
\begin{equation}
\iff\quad\frac{1}{\Delta}\ln\!\left[\left(\frac{1-\varepsilon}{\varepsilon}\right)\left(\frac{d-\Tr\left[\Pi_{1}\right]}{\Tr[\Pi_{1}]}\right)\right]=\beta,
\end{equation}
concluding the proof.
\end{proof}
\begin{prop}
\label{prop:fid-thermal-ground}For a $d$-dimensional Hamiltonian
$H$ as in~\eqref{eq:ham-spectral-decomp} and $\beta\geq0$, the
following hold:
\begin{align}
& F\!\left(\frac{e^{-\beta H}}{\Tr[e^{-\beta H}]},\frac{\Pi_{1}}{\Tr[\Pi_{1}]}\right) \notag \\
& =\frac{1}{1+\frac{\sum_{i=2}^{M}e^{-\beta(\lambda_{i}-\lambda_{1})}\Tr[\Pi_{i}]}{d_{G}}}\\
 & \geq\frac{1}{1+e^{-\beta\Delta}\left(\frac{d-d_{G}}{d_{G}}\right)},\label{eq:fidelity-to-ground-state-bnd}
\end{align}
where we have used the same notation as in Proposition~\ref{prop:trace-dist-thermal-ground}.
\end{prop}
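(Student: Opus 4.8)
The plan is to use that both states in question are functions of $H$ and hence commute, which collapses the fidelity to an elementary expression. Writing $\rho$ and $\sigma$ for the thermal state and the normalized ground-space projector respectively, I would first note that, since $\Pi_1$ is an orthogonal projection, $\sqrt{\Pi_1/\Tr[\Pi_1]} = \Pi_1/\sqrt{\Tr[\Pi_1]}$, while the spectral decomposition~\eqref{eq:ham-spectral-decomp} gives $\sqrt{e^{-\beta H}/\Tr[e^{-\beta H}]} = \bigl(\sum_{i=1}^M e^{-\beta\lambda_i}\Tr[\Pi_i]\bigr)^{-1/2}\sum_{i=1}^M e^{-\beta\lambda_i/2}\Pi_i$. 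Multiplying these and invoking $\Pi_i\Pi_1 = \delta_{i,1}\Pi_1$, every term with $i\geq 2$ drops out, so $\sqrt{\rho}\sqrt{\sigma}$ equals $\Pi_1$ times a nonnegative scalar. Being a nonnegative multiple of a projection, it is positive semidefinite, so its trace norm is just its trace.

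Carrying out this computation, I would obtain $\|\sqrt{\rho}\sqrt{\sigma}\|_1 = e^{-\beta\lambda_1/2}\sqrt{\Tr[\Pi_1]}\,\bigl(\sum_{i=1}^M e^{-\beta\lambda_i}\Tr[\Pi_i]\bigr)^{-1/2}$, and squaring yields $F(\rho,\sigma) = e^{-\beta\lambda_1}\Tr[\Pi_1]\bigl(\sum_{i=1}^M e^{-\beta\lambda_i}\Tr[\Pi_i]\bigr)^{-1}$. Dividing numerator and denominator by $e^{-\beta\lambda_1}\Tr[\Pi_1] = e^{-\beta\lambda_1} d_G$ and separating off the $i=1$ term in the denominator then gives the first displayed equality, $F(\rho,\sigma) = \bigl(1 + d_G^{-1}\sum_{i=2}^M e^{-\beta(\lambda_i-\lambda_1)}\Tr[\Pi_i]\bigr)^{-1}$.

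For the inequality~\eqref{eq:fidelity-to-ground-state-bnd}, I would use that $\lambda_i - \lambda_1 \geq \lambda_2 - \lambda_1 = \Delta$ for all $i\geq 2$, so that $e^{-\beta(\lambda_i-\lambda_1)}\leq e^{-\beta\Delta}$ since $\beta\geq 0$; summing against $\Tr[\Pi_i]$ and using $\sum_{i=2}^M\Tr[\Pi_i] = \Tr[I-\Pi_1] = d - d_G$ bounds $\sum_{i=2}^M e^{-\beta(\lambda_i-\lambda_1)}\Tr[\Pi_i] \leq e^{-\beta\Delta}(d-d_G)$. Since $x\mapsto(1+x/d_G)^{-1}$ is decreasing on $[0,\infty)$, substituting this bound into the denominator proves the claim. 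This parallels the trace-distance calculation in Proposition~\ref{prop:trace-dist-thermal-ground}, and I do not anticipate a genuine obstacle; the only steps requiring care are the reduction to the commuting case — which is what makes $F$ explicit — and the observation that $\sqrt{\rho}\sqrt{\sigma}$ is positive semidefinite, so that its trace norm coincides with its trace.
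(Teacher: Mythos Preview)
Your proposal is correct and follows essentially the same route as the paper: both proofs exploit the spectral decomposition of $H$ and the orthogonality $\Pi_i\Pi_1=\delta_{i,1}\Pi_1$ to reduce the fidelity to $e^{-\beta\lambda_1}\Tr[\Pi_1]\big/\sum_i e^{-\beta\lambda_i}\Tr[\Pi_i]$, and then bound the denominator using $\lambda_i-\lambda_1\geq\Delta$ for $i\geq2$. The only cosmetic difference is that the paper starts from the Uhlmann form $\bigl(\Tr\sqrt{\sqrt{\sigma}\rho\sqrt{\sigma}}\bigr)^2$ rather than your $\|\sqrt{\rho}\sqrt{\sigma}\|_1^2$, but since the operators commute, both collapse to the same scalar computation.
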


\begin{proof}
Consider that
\begin{align}
 & F\!\left(\frac{e^{-\beta H}}{\Tr[e^{-\beta H}]},\frac{\Pi_{1}}{\Tr[\Pi_{1}]}\right)\nonumber \\
 & =\left(\Tr\!\left[\sqrt{\sqrt{\frac{\Pi_{1}}{\Tr[\Pi_{1}]}}\frac{e^{-\beta H}}{\Tr[e^{-\beta H}]}\sqrt{\frac{\Pi_{1}}{\Tr[\Pi_{1}]}}}\right]\right)^{2}\\
 & =\frac{\left(\Tr\!\left[\sqrt{\Pi_{1}e^{-\beta H}\Pi_{1}}\right]\right)^{2}}{\Tr[\Pi_{1}]\Tr[e^{-\beta H}]}\\
 & =\frac{\left(\Tr\!\left[\sqrt{\Pi_{1}\left(\sum_{i=1}^{M}e^{-\beta\lambda_{i}}\Pi_{i}\right)\Pi_{1}}\right]\right)^{2}}{\Tr[\Pi_{1}]\sum_{i=1}^{M}e^{-\beta\lambda_{i}}\Tr[\Pi_{i}]}\\
 & =\frac{\left(\Tr\!\left[\sqrt{e^{-\beta\lambda_{1}}\Pi_{1}}\right]\right)^{2}}{\Tr[\Pi_{1}]\sum_{i=1}^{M}e^{-\beta\lambda_{i}}\Tr[\Pi_{i}]}\\
 & =\frac{e^{-\beta\lambda_{1}}\Tr[\Pi_{1}]}{\sum_{i=1}^{M}e^{-\beta\lambda_{i}}\Tr[\Pi_{i}]}\\
 & =\frac{e^{-\beta\lambda_{1}}\Tr[\Pi_{1}]}{e^{-\beta\lambda_{1}}\Tr[\Pi_{1}]+\sum_{i=2}^{M}e^{-\beta\lambda_{i}}\Tr[\Pi_{i}]}\\
 & =\frac{1}{1+\frac{1}{d_{G}}\sum_{i=2}^{M}e^{-\beta(\lambda_{i}-\lambda_{1})}\Tr[\Pi_{i}]}\\
 & \geq\frac{\Tr[\Pi_{1}]}{1+\frac{1}{d_{G}}\sum_{i=2}^{M}e^{-\beta\Delta}\Tr[\Pi_{i}]}\\
 & =\frac{1}{1+\frac{e^{-\beta\Delta}}{d_{G}}\sum_{i=2}^{M}\Tr[\Pi_{i}]}\\
 & =\frac{1}{1+e^{-\beta\Delta}\left(\frac{d-d_{G}}{d_{G}}\right)}.
\end{align}
This completes the proof.
\end{proof}
\begin{cor}
For the fidelity to be $\geq1-\varepsilon$ in~\eqref{eq:fidelity-to-ground-state-bnd}
for $\varepsilon\in[0,1]$, i.e., to have
\begin{equation}
F\!\left(\frac{e^{-\beta H}}{\Tr[e^{-\beta H}]},\frac{\Pi_{1}}{\Tr[\Pi_{1}]}\right)\geq1-\varepsilon,
\end{equation}
we require the inverse temperature $\beta$ to be set as follows:
\begin{equation}
\beta=\frac{1}{\Delta}\ln\!\left[\left(\frac{1-\varepsilon}{\varepsilon}\right)\left(\frac{d-d_{G}}{d_{G}}\right)\right].
\end{equation}
\end{cor}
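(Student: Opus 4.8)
\emph{Proof proposal.} The plan is to bootstrap directly off the lower bound on the fidelity already established in Proposition~\ref{prop:fid-thermal-ground}, namely
\[
F\!\left(\frac{e^{-\beta H}}{\Tr[e^{-\beta H}]},\frac{\Pi_{1}}{\Tr[\Pi_{1}]}\right)\geq\frac{1}{1+e^{-\beta\Delta}\left(\frac{d-d_{G}}{d_{G}}\right)},
\]
which already carries out all of the spectral analysis. The remaining task is purely elementary algebra: it suffices to choose $\beta$ large enough that this right-hand lower bound is itself at least $1-\varepsilon$, since then the true fidelity is at least $1-\varepsilon$ as well.

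First I would note that the map $\beta\mapsto \left(1+e^{-\beta\Delta}\left(\frac{d-d_{G}}{d_{G}}\right)\right)^{-1}$ is monotonically nondecreasing on $[0,\infty)$, because $\Delta=\lambda_{2}-\lambda_{1}\geq 0$ and $0\le d_{G}\le d$ make the quantity $e^{-\beta\Delta}\left(\frac{d-d_{G}}{d_{G}}\right)$ nonnegative and nonincreasing in $\beta$. Hence it is enough to locate the threshold value of $\beta$ at which the bound equals $1-\varepsilon$; every larger $\beta$ then works a fortiori. Second, I would solve $\left(1+e^{-\beta\Delta}\left(\frac{d-d_{G}}{d_{G}}\right)\right)^{-1}=1-\varepsilon$: rearranging gives $e^{-\beta\Delta}\frac{d-d_{G}}{d_{G}}=\frac{1}{1-\varepsilon}-1=\frac{\varepsilon}{1-\varepsilon}$, hence $e^{\beta\Delta}=\left(\frac{1-\varepsilon}{\varepsilon}\right)\left(\frac{d-d_{G}}{d_{G}}\right)$, and taking logarithms and dividing by $\Delta$ yields exactly the claimed $\beta=\frac{1}{\Delta}\ln\!\left[\left(\frac{1-\varepsilon}{\varepsilon}\right)\left(\frac{d-d_{G}}{d_{G}}\right)\right]$. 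Combining this with the monotonicity observation and the bound from Proposition~\ref{prop:fid-thermal-ground} completes the argument. This is precisely parallel to the corollary following Proposition~\ref{prop:trace-dist-thermal-ground}, so I would keep the presentation short.

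Since every step is a one-line manipulation, there is no genuine obstacle here; the only points I would take care to state are the degeneracies. When $\varepsilon\to 0$ or $d_{G}\to 0$ the formula sends $\beta\to+\infty$, consistent with the fidelity reaching $1$ only in the limit, and when $d_{G}=d$ the fidelity is identically $1$ for all $\beta$; accordingly the formula should be read in the nondegenerate regime $\varepsilon\in(0,1)$, $0<d_{G}<d$. I would also remark that, strictly, ``we require'' should be understood as ``it suffices'': this value of $\beta$ guarantees the stated fidelity bound but is not necessary, since it is obtained from the lower bound rather than from the exact fidelity expression derived in Proposition~\ref{prop:fid-thermal-ground}.
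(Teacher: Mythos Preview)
Your proposal is correct and follows essentially the same approach as the paper: both start from the fidelity lower bound of Proposition~\ref{prop:fid-thermal-ground} and solve $\left(1+e^{-\beta\Delta}\tfrac{d-d_{G}}{d_{G}}\right)^{-1}=1-\varepsilon$ for $\beta$ via the same chain of algebraic manipulations. Your added remarks on monotonicity, the degenerate cases, and the sufficiency-versus-necessity reading are helpful clarifications but go beyond what the paper records.
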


\begin{proof}
Considering~\eqref{eq:fidelity-to-ground-state-bnd}, we should solve
for $\beta$ therein and find that
\begin{align}
  \frac{1}{1+e^{-\beta\Delta}\left(\frac{d-d_{G}}{d_{G}}\right)} & = 1-\varepsilon\\
\iff\quad1+e^{-\beta\Delta}\left(\frac{d-d_{G}}{d_{G}}\right) & =\frac{1}{1-\varepsilon}\\
\iff\quad e^{-\beta\Delta}\left(\frac{d-d_{G}}{d_{G}}\right) & =\frac{1}{1-\varepsilon}-1\\
\iff\quad e^{-\beta\Delta} & =\left(\frac{\varepsilon}{1-\varepsilon}\right)\left(\frac{d_{G}}{d-d_{G}}\right)
\end{align}
\begin{equation}
\iff\quad\beta=\frac{1}{\Delta}\ln\!\left[\left(\frac{1-\varepsilon}{\varepsilon}\right)\left(\frac{d-d_{G}}{d_{G}}\right)\right],
\end{equation}
thus concluding the proof.
\end{proof}
\begin{prop}
\label{prop:rel-ent-thermal-ground}For a $d$-dimensional Hamiltonian
$H$ as in~\eqref{eq:ham-spectral-decomp} and $\beta\geq0$, the
following hold:
\begin{align}
& D\!\left(\frac{\Pi_{1}}{\Tr[\Pi_{1}]}\middle\|\frac{e^{-\beta H}}{\Tr[e^{-\beta H}]}\right) \notag \\
& =\ln\!\left(1+\frac{\sum_{i=2}^{M}e^{-\beta\left(\lambda_{i}-\lambda_{1}\right)}\Tr[\Pi_{i}]}{d_{G}}\right)\label{eq:rel-ent-to-gnd-state-eq}\\
 & \leq\ln\!\left(1+e^{-\beta\Delta}\left(\frac{d-d_{G}}{d_{G}}\right)\right),\label{eq:rel-ent-to-ground-state-bnd}
\end{align}
where we have used the same notation as in Proposition~\ref{prop:trace-dist-thermal-ground}.
Furthermore, the following equalities hold for all $\alpha\in(0,1)\cup(1,\infty)$:
\begin{align}
& D\!\left(\frac{\Pi_{1}}{\Tr[\Pi_{1}]}\middle\|\frac{e^{-\beta H}}{\Tr[e^{-\beta H}]}\right) \notag \\
& =D_{\alpha}\!\left(\frac{\Pi_{1}}{\Tr[\Pi_{1}]}\middle\|\frac{e^{-\beta H}}{\Tr[e^{-\beta H}]}\right)\label{eq:Petz-Renyi-equality}\\
 & =\widetilde{D}_{\alpha}\!\left(\frac{\Pi_{1}}{\Tr[\Pi_{1}]}\middle\|\frac{e^{-\beta H}}{\Tr[e^{-\beta H}]}\right)\\
 & =\widehat{D}_{\alpha}\!\left(\frac{\Pi_{1}}{\Tr[\Pi_{1}]}\middle\|\frac{e^{-\beta H}}{\Tr[e^{-\beta H}]}\right).\label{eq:geometric-Renyi-equality}
\end{align}
\end{prop}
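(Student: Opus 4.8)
The plan is to evaluate the quantum relative entropy directly from its definition and then exploit the fact that the three R\'enyi variants collapse onto it whenever the two states commute. Write $\rho \coloneqq \Pi_1/\Tr[\Pi_1]$ and $\sigma \coloneqq e^{-\beta H}/\Tr[e^{-\beta H}]$, abbreviate $Z \coloneqq \Tr[e^{-\beta H}]$ and $d_G \coloneqq \Tr[\Pi_1]$, and note that $\sigma$ has full support so that $D(\rho\|\sigma)$ is finite. On the support of $\rho$ one has $\ln\rho = -(\ln d_G)\Pi_1$, while $\ln\sigma = -\beta H - (\ln Z)I$; using the spectral decomposition~\eqref{eq:ham-spectral-decomp} together with the orthogonality relations $\Pi_1\Pi_i = \delta_{1,i}\Pi_1$, I would obtain
\begin{equation}
D(\rho\|\sigma) = -\ln d_G + \beta\lambda_1 + \ln Z .
\end{equation}
Expanding the partition function as $Z = e^{-\beta\lambda_1}\big(d_G + \sum_{i=2}^{M} e^{-\beta(\lambda_i - \lambda_1)}\Tr[\Pi_i]\big)$ and substituting then yields the exact formula~\eqref{eq:rel-ent-to-gnd-state-eq}.

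For the upper bound~\eqref{eq:rel-ent-to-ground-state-bnd}, I would invoke the ordering of the eigenvalues in~\eqref{eq:ham-spectral-decomp}: for every $i \in \{2,\ldots,M\}$ we have $\lambda_i - \lambda_1 \geq \lambda_2 - \lambda_1 = \Delta$, hence $e^{-\beta(\lambda_i-\lambda_1)} \leq e^{-\beta\Delta}$, so that $\sum_{i=2}^M e^{-\beta(\lambda_i-\lambda_1)}\Tr[\Pi_i] \leq e^{-\beta\Delta}\sum_{i=2}^M\Tr[\Pi_i] = e^{-\beta\Delta}(d - d_G)$; monotonicity of the logarithm finishes the bound.

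For the R\'enyi equalities~\eqref{eq:Petz-Renyi-equality}--\eqref{eq:geometric-Renyi-equality}, the point is that $[\rho,\sigma] = 0$, since $\Pi_1$ is a spectral projection of $H$ and $\sigma$ is a function of $H$; by~\eqref{eq:renyi-eq-commuting} it therefore suffices to compute any single one of them, say the Petz--R\'enyi quantity $D_\alpha(\rho\|\sigma)$. Using $\rho^\alpha = \Pi_1/d_G^{\alpha}$ (since $\Pi_1$ is a projection) and $\sigma^{1-\alpha} = e^{-\beta(1-\alpha)H}/Z^{1-\alpha}$, I would compute $\Tr[\rho^\alpha\sigma^{1-\alpha}] = d_G^{1-\alpha}e^{-\beta(1-\alpha)\lambda_1}/Z^{1-\alpha}$, whence $D_\alpha(\rho\|\sigma) = \ln\!\big(Z/(d_G e^{-\beta\lambda_1})\big)$, which is exactly the quantity already identified as $D(\rho\|\sigma)$; the remaining two equalities then follow at once from~\eqref{eq:renyi-eq-commuting}. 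I do not foresee a genuine obstacle here, as every manipulation takes place within the commutative algebra generated by $H$; the only points warranting care are that $\rho$ is rank-deficient but still supported inside the (full) support of $\sigma$, so all quantities are finite, and the correct bookkeeping of the partition function into its ground-space and excited-state contributions.
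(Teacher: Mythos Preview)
Your proposal is correct and follows essentially the same approach as the paper: both compute $D(\rho\|\sigma)$ directly to obtain the key intermediate expression $-\ln d_G + \beta\lambda_1 + \ln Z$, bound the excited-state part of the partition function via $e^{-\beta(\lambda_i-\lambda_1)}\leq e^{-\beta\Delta}$, and then handle the R\'enyi equalities by invoking commutativity and verifying that the Petz--R\'enyi quantity reduces to the same expression.
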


\begin{proof}
Consider that
\begin{align}
 & D\!\left(\frac{\Pi_{1}}{\Tr[\Pi_{1}]}\middle\|\frac{e^{-\beta H}}{\Tr[e^{-\beta H}]}\right)\nonumber \\
 & =\Tr\!\left[\frac{\Pi_{1}}{\Tr[\Pi_{1}]}\ln\!\left(\frac{\Pi_{1}}{\Tr[\Pi_{1}]}\right)\right]\nonumber \\
 & \qquad-\Tr\!\left[\frac{\Pi_{1}}{\Tr[\Pi_{1}]}\ln\!\left(\frac{e^{-\beta H}}{\Tr[e^{-\beta H}]}\right)\right]\\
 & =-\ln d_{G}-\Tr\!\left[\frac{\Pi_{1}}{\Tr[\Pi_{1}]}\ln\!\left(e^{-\beta H}\right)\right]+\ln\Tr[e^{-\beta H}]\\
 & =-\ln d_{G}+\frac{\beta}{\Tr[\Pi_{1}]}\Tr\!\left[\Pi_{1}H\right]+\ln\Tr[e^{-\beta H}]\\
 & =-\ln d_{G}+\frac{\lambda_{1}\beta}{\Tr[\Pi_{1}]}\Tr\!\left[\Pi_{1}\right]+\ln\Tr[e^{-\beta H}]\\
 & =\lambda_{1}\beta-\ln d_{G}+\ln\Tr[e^{-\beta H}]\label{eq:rel-ent-mid-expression}\\
 & =\lambda_{1}\beta-\ln d_{G}+\ln\sum_{i=1}^{M}e^{-\beta\lambda_{i}}\Tr[\Pi_{i}]\\
 & =\ln\!\left(\frac{1}{d_{G}}\sum_{i=1}^{M}e^{-\beta\left(\lambda_{i}-\lambda_{1}\right)}\Tr[\Pi_{i}]\right)\\
 & =\ln\!\left(\frac{\Tr[\Pi_{1}]+\sum_{i=2}^{M}e^{-\beta\left(\lambda_{i}-\lambda_{1}\right)}\Tr[\Pi_{i}]}{d_{G}}\right)\\
 & =\ln\!\left(1+\frac{1}{d_{G}}\sum_{i=2}^{M}e^{-\beta\left(\lambda_{i}-\lambda_{1}\right)}\Tr[\Pi_{i}]\right)\\
 & \leq\ln\!\left(1+\frac{1}{d_{G}}\sum_{i=2}^{M}e^{-\beta\Delta}\Tr[\Pi_{i}]\right)\\
 & =\ln\!\left(1+e^{-\beta\Delta}\left(\frac{d-d_{G}}{d_{G}}\right)\right).
\end{align}
This establishes~\eqref{eq:rel-ent-to-gnd-state-eq} and~\eqref{eq:rel-ent-to-ground-state-bnd}.

We now prove~\eqref{eq:Petz-Renyi-equality}--\eqref{eq:geometric-Renyi-equality}.
Since the states of interest commute, the equalities in~\eqref{eq:renyi-eq-commuting}
follow, and it suffices to evaluate the Petz--R\'enyi relative entropy.
To this end, consider that
\begin{align}
 & D_{\alpha}\!\left(\frac{\Pi_{1}}{\Tr[\Pi_{1}]}\middle\|\frac{e^{-\beta H}}{\Tr[e^{-\beta H}]}\right)\nonumber \\
 & =\frac{1}{\alpha-1}\ln\Tr\!\left[\left(\frac{\Pi_{1}}{\Tr[\Pi_{1}]}\right)^{\alpha}\left(\frac{e^{-\beta H}}{\Tr[e^{-\beta H}]}\right)^{1-\alpha}\right]\\
 & =\frac{1}{\alpha-1}\ln\frac{1}{d_{G}^{\alpha}\left(\Tr[e^{-\beta H}]\right)^{1-\alpha}}\nonumber \\
 & \qquad+\frac{1}{\alpha-1}\ln\Tr\!\left[\Pi_{1}^{\alpha}e^{-\beta\left(1-\alpha\right)H}\right]\\
 & =-\frac{\alpha}{\alpha-1}\ln d_{G}+\ln\Tr[e^{-\beta H}]\nonumber \\
 & \qquad+\frac{1}{\alpha-1}\ln\Tr\!\left[e^{-\beta\left(1-\alpha\right)\lambda_{1}}\Pi_{1}\right]\\
 & =-\frac{\alpha}{\alpha-1}\ln d_{G}+\ln\Tr[e^{-\beta H}]\nonumber \\
 & \qquad+\beta\lambda_{1}+\frac{1}{\alpha-1}\ln d_{G}\\
 & =-\ln d_{G}+\ln\Tr[e^{-\beta H}]+\beta\lambda_{1}.
\end{align}
This last expression coincides with that in~\eqref{eq:rel-ent-mid-expression},
thus concluding the proof.
\end{proof}
\begin{cor}
Fix $\varepsilon>0$. For the relative entropy to be $\leq\varepsilon$
in~\eqref{eq:rel-ent-to-ground-state-bnd}, i.e., to have
\begin{equation}
D\!\left(\frac{\Pi_{1}}{\Tr[\Pi_{1}]}\middle\|\frac{e^{-\beta H}}{\Tr[e^{-\beta H}]}\right)\leq\varepsilon,
\end{equation}
we require the inverse temperature $\beta$ to be set as follows:
\begin{equation}
\beta=\frac{1}{\Delta}\ln\!\left[\left(\frac{1}{e^{\varepsilon}-1}\right)\left(\frac{d-d_{G}}{d_{G}}\right)\right].
\end{equation}
Due to~\eqref{eq:Petz-Renyi-equality}--\eqref{eq:geometric-Renyi-equality},
the same statement applies to the Petz, sandwiched, and geometric
R\'enyi relative entropies for all $\alpha\in(0,1)\cup(1,\infty)$.
\end{cor}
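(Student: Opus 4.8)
The plan is to invert the upper bound that Proposition~\ref{prop:rel-ent-thermal-ground} has already supplied, exactly as was done in the corollaries for trace distance and fidelity. By~\eqref{eq:rel-ent-to-ground-state-bnd} we have
\begin{equation}
D\!\left(\frac{\Pi_{1}}{\Tr[\Pi_{1}]}\middle\|\frac{e^{-\beta H}}{\Tr[e^{-\beta H}]}\right)\leq\ln\!\left(1+e^{-\beta\Delta}\left(\frac{d-d_{G}}{d_{G}}\right)\right),
\end{equation}
and the right-hand side is a strictly decreasing function of $\beta\geq 0$, since $e^{-\beta\Delta}$ is decreasing (note $\Delta>0$ because it is defined only when $H$ has $M\geq 2$ distinct eigenvalues, which also forces $d_{G}<d$) while $\ln$ and the affine map $x\mapsto 1+x(d-d_{G})/d_{G}$ are increasing. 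Hence it suffices to locate the unique $\beta$ at which the right-hand side equals $\varepsilon$; any larger $\beta$ then yields the desired inequality as well.

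First I would set $\ln\!\left(1+e^{-\beta\Delta}(d-d_{G})/d_{G}\right)=\varepsilon$ and solve by elementary algebra: exponentiating gives $1+e^{-\beta\Delta}(d-d_{G})/d_{G}=e^{\varepsilon}$, so $e^{-\beta\Delta}(d-d_{G})/d_{G}=e^{\varepsilon}-1$, hence $e^{-\beta\Delta}=(e^{\varepsilon}-1)\,d_{G}/(d-d_{G})$, and taking logarithms and rearranging yields
\begin{equation}
\beta=\frac{1}{\Delta}\ln\!\left[\left(\frac{1}{e^{\varepsilon}-1}\right)\left(\frac{d-d_{G}}{d_{G}}\right)\right],
\end{equation}
which is precisely the claimed expression. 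Along the way I would record the elementary domain check that $e^{\varepsilon}-1>0$ for $\varepsilon>0$, so the logarithm is well posed, and remark that this $\beta$ (or anything larger) achieves the bound.

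For the final sentence about the Rényi relative entropies, I would simply invoke~\eqref{eq:Petz-Renyi-equality}--\eqref{eq:geometric-Renyi-equality}: because $\Pi_{1}/\Tr[\Pi_{1}]$ and $e^{-\beta H}/\Tr[e^{-\beta H}]$ commute, the Petz, sandwiched, and geometric Rényi relative entropies all coincide with the Umegaki relative entropy for every $\alpha\in(0,1)\cup(1,\infty)$, so the same choice of $\beta$ controls all of them simultaneously, with no further work required. I do not anticipate any genuine obstacle in this proof; the only points deserving a line of care are the monotonicity observation that justifies reducing to the equality case and the verification that the argument of the logarithm is positive.
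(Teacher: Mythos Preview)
Your proposal is correct and mirrors the paper's own proof: the paper likewise sets the upper bound $\ln\!\left(1+e^{-\beta\Delta}(d-d_{G})/d_{G}\right)=\varepsilon$ and solves the resulting chain of elementary equivalences for $\beta$. Your added remarks on monotonicity and the positivity of the logarithm's argument are minor elaborations beyond what the paper records, but the approach is essentially identical.
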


\begin{proof}
We have to solve for the inverse temperature $\beta$ in the following
equation:
\begin{align}
 \ln\!\left(1+e^{-\beta\Delta}\left(\frac{d-d_{G}}{d_{G}}\right)\right) & = \varepsilon\\
\iff\qquad e^{\varepsilon}-1 & =e^{-\beta\Delta}\left(\frac{d-d_{G}}{d_{G}}\right)\\
\iff\qquad\frac{d_{G}}{d-d_{G}}\left(e^{\varepsilon}-1\right) & =e^{-\beta\Delta}
\end{align}
\begin{equation}
\iff\qquad\beta=\frac{1}{\Delta}\ln\!\left[\left(\frac{1}{e^{\varepsilon}-1}\right)\left(\frac{d-d_{G}}{d_{G}}\right)\right],
\end{equation}
thus concluding the proof.
\end{proof}
The following corollary establishes equalities relating several distinguishability
measures considered in this appendix, whenever the two states being
compared are the temperature-$T$ thermal state of a Hamiltonian and
the maximally mixed state on its ground space.
\begin{cor}
For a $d$-dimensional Hamiltonian $H$ as in~\eqref{eq:ham-spectral-decomp}
and $\beta\geq0$, the following equalities hold:
\begin{multline}
\frac{1}{2}\left\Vert \frac{e^{-\beta H}}{\Tr[e^{-\beta H}]}-\frac{\Pi_{1}}{\Tr[\Pi_{1}]}\right\Vert _{1}=\\
1-F\!\left(\frac{e^{-\beta H}}{\Tr[e^{-\beta H}]},\frac{\Pi_{1}}{\Tr[\Pi_{1}]}\right),
\end{multline}
\begin{multline}
D\!\left(\frac{\Pi_{1}}{\Tr[\Pi_{1}]}\middle\|\frac{e^{-\beta H}}{\Tr[e^{-\beta H}]}\right)=\\
-\ln\!\left(F\!\left(\frac{e^{-\beta H}}{\Tr[e^{-\beta H}]},\frac{\Pi_{1}}{\Tr[\Pi_{1}]}\right)\right)
\end{multline}
\end{cor}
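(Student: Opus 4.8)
The plan is to reduce both identities to the explicit closed-form expressions already obtained in Propositions~\ref{prop:trace-dist-thermal-ground}, \ref{prop:fid-thermal-ground}, and~\ref{prop:rel-ent-thermal-ground}, so that the corollary becomes a one-parameter algebraic exercise. First I would abbreviate $\rho_{\beta}\coloneqq e^{-\beta H}/\Tr[e^{-\beta H}]$ and $\pi\coloneqq\Pi_{1}/\Tr[\Pi_{1}]$, and introduce the nonnegative quantity
\begin{equation}
x\coloneqq\frac{1}{d_{G}}\sum_{i=2}^{M}e^{-\beta(\lambda_{i}-\lambda_{1})}\Tr[\Pi_{i}].
\end{equation}
With this notation, the three preceding propositions say precisely that
\begin{align}
\tfrac{1}{2}\bigl\|\rho_{\beta}-\pi\bigr\|_{1} &= \frac{1}{1+1/x}=\frac{x}{1+x}, &
F(\rho_{\beta},\pi) &= \frac{1}{1+x}, &
D(\pi\|\rho_{\beta}) &= \ln(1+x).
\end{align}

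Given these three formulas in the single parameter $x$, both claimed identities follow by inspection. For the first, I would write $\frac{x}{1+x}=1-\frac{1}{1+x}$ and substitute the closed forms, obtaining $\tfrac{1}{2}\|\rho_{\beta}-\pi\|_{1}=1-F(\rho_{\beta},\pi)$, which is exactly the first displayed equality of the corollary (recalling that $F$ is defined here as the \emph{squared} fidelity $\|\sqrt{\rho}\sqrt{\sigma}\|_{1}^{2}$, which is why the relation is linear with no square roots). For the second, I would write $\ln(1+x)=-\ln\frac{1}{1+x}$ and substitute, obtaining $D(\pi\|\rho_{\beta})=-\ln F(\rho_{\beta},\pi)$, the second displayed equality.

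The only point meriting a separate sentence is the degenerate case $M=1$, in which $H$ has a single eigenspace, $\rho_{\beta}=\pi$, the sum defining $x$ is empty, and all three quantities take their trivial values ($\tfrac{1}{2}\|\cdot\|_{1}=0$, $F=1$, $D=0$); here the identities hold at once, and this also matches the $x\to 0$ limit of the formulas above. Apart from this bookkeeping, I anticipate no real obstacle: all of the analytic content is already carried by the three preceding propositions, and this corollary merely records that for this particular pair of commuting states — a Gibbs state and the uniform state on the ground eigenspace of the same Hamiltonian — the normalized trace distance, the fidelity, and the relative entropy are exact monotone functions of one another, so the standard Fuchs--van de Graaf and Pinsker-type inequalities are saturated.
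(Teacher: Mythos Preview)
Your proposal is correct and essentially identical to the paper's own proof: the paper also reduces to the closed forms from Propositions~\ref{prop:trace-dist-thermal-ground}--\ref{prop:rel-ent-thermal-ground}, introduces a single parameter (their $a$ is your $1/x$), and verifies the two identities by the same elementary algebra. Your treatment of the degenerate $M=1$ case and the closing remark about Fuchs--van de Graaf and Pinsker are extra but harmless.
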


\begin{proof}
Set
\begin{align}
a & \equiv\frac{d_{G}}{\sum_{i=2}^{M}e^{-\beta\left(\lambda_{i}-\lambda_{1}\right)}\Tr[\Pi_{i}]},\\
T & \equiv\frac{1}{2}\left\Vert \frac{e^{-\beta H}}{\Tr[e^{-\beta H}]}-\frac{\Pi_{1}}{\Tr[\Pi_{1}]}\right\Vert _{1},\\
F & \equiv F\!\left(\frac{e^{-\beta H}}{\Tr[e^{-\beta H}]},\frac{\Pi_{1}}{\Tr[\Pi_{1}]}\right),\\
D & \equiv D\!\left(\frac{\Pi_{1}}{\Tr[\Pi_{1}]}\middle\|\frac{e^{-\beta H}}{\Tr[e^{-\beta H}]}\right).
\end{align}
By Propositions~\ref{prop:trace-dist-thermal-ground},~\ref{prop:fid-thermal-ground},
and~\ref{prop:rel-ent-thermal-ground}, we have that
\begin{align}
T & =\frac{1}{1+a},\\
F & =\frac{1}{1+\frac{1}{a}},\\
D & =\ln\!\left(1+\frac{1}{a}\right).
\end{align}
Clearly, $D=-\ln F$. Additionally,
\begin{align}
1-F & =1-\frac{1}{1+\frac{1}{a}}=\frac{1+\frac{1}{a}}{1+\frac{1}{a}}-\frac{1}{1+\frac{1}{a}}\\
 & =\frac{\frac{1}{a}}{1+\frac{1}{a}}=\frac{1}{1+a}\\
 & =T,
\end{align}
thus concluding the proof.
\end{proof}

\section{Simplifying the implementation of the channel $\Phi_\mu$ for extensive, conserved charges}

\label{app:simplifying-phi-mu}

In this appendix, we provide a proof of \eqref{eq:simplify-phi-conserved-extensive-charges}.
Starting from~\eqref{eq:simplify-phi-conserved-2}, Eq.~\eqref{eq:simplify-phi-conserved-extensive-charges} follows because
\begin{align}
    & \Phi_\mu(Q_i)  \notag \\
    & = \int_{-\infty}^{\infty}dt\ p(t)\,e^{i\mu\cdot Qt/T}Q_i e^{-i\mu\cdot Qt/T} \\
    & = \int_{-\infty}^{\infty}dt\ p(t)\,e^{i\mu\cdot Qt/T}\left(\sum_{j=1}^{n}C_{i}^{(j)}\right) e^{-i\mu\cdot Qt/T} \label{eq:extensive-simplify-1}\\
    & = \sum_{j=1}^{n}\int_{-\infty}^{\infty}dt\ p(t)\,e^{i\mu\cdot Qt/T}C_{i}^{(j)} e^{-i\mu\cdot Qt/T}\\
    & = \sum_{j=1}^{n}\int_{-\infty}^{\infty}dt\ p(t)\,e^{\frac{it}{T} \sum_{i'\in [c]} \mu_{i'}Q_{i'} }C_{i}^{(j)} e^{-\frac{it}{T} \sum_{i''\in [c]} \mu_{i''} Q_{i''} }\\ 
    & = \sum_{j=1}^{n}\int_{-\infty}^{\infty}dt\ p(t)\,e^{\frac{it}{T} \sum_{i'\in [c]} \mu_{i'} \sum_{j'=1}^{n} C_{i'}^{(j')}}C_{i}^{(j)} \times \notag \\
    & \qquad\qquad\qquad\qquad e^{-\frac{it}{T} \sum_{i''\in [c]} \mu_{i''} \sum_{j''=1}^{n} C_{i''}^{(j'')} }\label{eq:extensive-simplify-2}\\ 
    & = \sum_{j=1}^{n}\int_{-\infty}^{\infty}dt\ p(t)\,\prod_{j'=1}^{n} e^{\frac{it}{T} \sum_{i'\in [c]} \mu_{i'} C_{i'}^{(j')}}C_{i}^{(j)} \times \notag \\
    & \qquad\qquad\qquad\qquad \prod_{j''=1}^{n}e^{-\frac{it}{T} \sum_{i''\in [c]} \mu_{i''}  C_{i''}^{(j'')} }\label{eq:extensive-simplify-3}\\ 
        & = \sum_{j=1}^{n}\int_{-\infty}^{\infty}dt\ p(t)\, e^{\frac{it}{T} \sum_{i'\in [c]} \mu_{i'} C_{i'}^{(j)}} C_{i}^{(j)} \times \notag \\
    & \qquad\qquad\qquad\qquad e^{-\frac{it}{T} \sum_{i''\in [c]} \mu_{i''}  C_{i''}^{(j)} }\label{eq:extensive-simplify-4}\\ 
        & = \sum_{j=1}^{n}\int_{-\infty}^{\infty}dt\ p(t)\, e^{\frac{it}{T}  \mu\cdot  C^{(j)}} C_{i}^{(j)}  e^{-\frac{it}{T}  \mu \cdot  C^{(j)} }.
\end{align}
Eqs.~\eqref{eq:extensive-simplify-1} and~\eqref{eq:extensive-simplify-2} follow by substituting \eqref{eq:def-extensive-charges}. Eqs.~\eqref{eq:extensive-simplify-3} and~\eqref{eq:extensive-simplify-4} follow by observing that $\left[C_{i'}^{(j')},C_{i''}^{(j'')}\right]=0$ for all $j'\neq j''$ and applying \eqref{eq:exp-sum-commute}.

\section{Derivation of Equation~\eqref{eq:encoded-pauli-expansion}}

\label{sec:Derivation-of-Equation}In this appendix, we include a
derivation of Eq.~\eqref{eq:encoded-pauli-expansion} for completeness.

First, recall that a Clifford unitary is one that preserves the $n$-fold
Pauli group $\mathcal{P}_{n}$ when acting on its elements by conjugation
\cite{Gottesman1998} (see also \cite[Section~3.5]{Gottesman2009});
i.e., the set $C_{n}$ of all Clifford unitaries acting on $n$ qubits
is defined as follows: 
\begin{equation}
C_{n}\coloneqq\left\{ U\in\mathcal{U}(2^{n}):UPU^{\dag}\in\mathcal{P}_{n}\,\forall P\in\mathcal{P}_{n}\right\} ,
\end{equation}
where $\mathcal{U}(2^{n})$ denotes the set of all $n$-qubit unitaries.

Consider a stabilizer code that encodes $k\in\mathbb{N}$ logical
qubits into $n\in\mathbb{N}$ physical qubits, where $k\leq n$. Such
a code is encoded by a Clifford encoding unitary $U$. The unencoded
state before the encoded unitary acts is as follows:
\begin{equation}
\rho\otimes|0\rangle\!\langle0|^{\otimes n-k},
\end{equation}
where $\rho$ is defined as in~\eqref{eq:pauli-expansion-k-qubits},
and the state after the encoding is as follows:
\begin{equation}
U\left(\rho\otimes|0\rangle\!\langle0|^{\otimes n-k}\right)U^{\dag}.
\end{equation}
Recalling that $|0\rangle\!\langle0|=\frac{I+Z}{2}$, we can write
the state before encoding takes place as follows:
\begin{multline}
\rho\otimes\left(\frac{I+Z}{2}\right)^{\otimes n-k}\\
=\left(I^{\otimes k}\otimes\left(\frac{I+Z}{2}\right)^{\otimes n-k}\right)\left(\rho\otimes I^{\otimes n-k}\right)\times\\
\left(I^{\otimes k}\otimes\left(\frac{I+Z}{2}\right)^{\otimes n-k}\right),
\end{multline}
so that the state after encoding takes place is
\begin{multline}
U\left[\rho\otimes\left(\frac{I+Z}{2}\right)^{\otimes n-k}\right]U^{\dag}\\
=\left[U\left(I^{\otimes k}\otimes\left(\frac{I+Z}{2}\right)^{\otimes n-k}\right)U^{\dag}\right]\times \\
\left[U\left(\rho\otimes I^{\otimes n-k}\right)U^{\dag}\right]\times\\
\left[U\left(I^{\otimes k}\otimes\left(\frac{I+Z}{2}\right)^{\otimes n-k}\right)U^{\dag}\right].\label{eq:clifford-encoded-state-expand}
\end{multline}
Now recall that the action of the Clifford unitary is to transform
the $n-k$ operators $Z_{k+1},\ldots,Z_{n}$ to $n-k$ stabilizer
generators $\overline{Z}_{k+1},\ldots,\overline{Z}_{n}$ as follows:
\begin{equation}
UZ_{i}U^{\dag}=\overline{Z}_{i}\qquad\forall i\in\left\{ k+1,\ldots,n\right\} ,
\end{equation}
and it transforms the $3k$ logical operators $X_{1}$, $Y_{1}$,
$Z_{1}$, \ldots, $X_{k}$, $Y_{k}$, $Z_{k}$ for the unencoded
state to the $3k$ logical operators $\overline{X}_{1}$, $\overline{Y}_{1}$,
$\overline{Z}_{1}$, \ldots, $\overline{X}_{k}$, $\overline{Y}_{k}$,
$\overline{Z}_{k}$ for the encoded state, as follows:
\begin{align}
UX_{i}U^{\dag} & =\overline{X}_{i}\qquad\forall i\in\left[k\right],\\
UY_{i}U^{\dag} & =\overline{Y}_{i}\qquad\forall i\in\left[k\right],\\
UZ_{i}U^{\dag} & =\overline{Z}_{i}\qquad\forall i\in\left[k\right].
\end{align}
As such, it follows that
\begin{align}
 & U\left(I^{\otimes k}\otimes\left(\frac{I+Z}{2}\right)^{\otimes n-k}\right)U^{\dag}\nonumber \\
 & =U\left(\prod_{i=k+1}^{n}\frac{I+Z_{i}}{2}\right)U^{\dag}\label{eq:clifford-expand-stabilizer-ops}\\
 & =\prod_{i=k+1}^{n}U\left(\frac{I+Z_{i}}{2}\right)U^{\dag}\\
 & =\prod_{i=k+1}^{n}\left(\frac{I+\overline{Z}_{i}}{2}\right)\\
 & =\Pi_{\mathcal{C}},
\end{align}
and
\begin{multline}
U\left(\rho\otimes I^{\otimes n-k}\right)U^{\dag}=\\
\frac{1}{2^{k}}\sum_{i_{1},\ldots,i_{k}\in\{0,1,2,3\}}r_{i_{1},\ldots,i_{k}}\overline{\sigma}_{i_{1},1}\cdots\overline{\sigma}_{i_{k},k},\label{eq:clifford-expand-log-ops}
\end{multline}
where we have applied the definitions in~\eqref{eq:pauli-expansion-k-qubits}
and~\eqref{eq:encoded-pauli-expansion}. Thus, combining~\eqref{eq:clifford-encoded-state-expand}
and~\eqref{eq:clifford-expand-stabilizer-ops}--\eqref{eq:clifford-expand-log-ops},
we conclude~\eqref{eq:encoded-pauli-expansion}.

\section{Derivation of Equation~\eqref{eq:complementary-slack-cond}}

\label{sec:Derivation-of-Equation-1}Let us redo the development in
\eqref{eq:proof-lagrange-mults}--\eqref{eq:duality-proof-last}
in a slightly different way, in order to arrive at the complementary
slackness condition in~\eqref{eq:complementary-slack-cond}. Let us
set $Q_{0}=I$ and $q_{0}=1$, in order to encode the constraint that
$\rho$ is a density matrix, i.e., $\Tr[\rho]=1$. Consider that
\begin{align}
 & \min_{\rho\in\mathcal{D}_{d^{n}}}\left\{ \Tr[H\rho]:\Tr[Q_{i}\rho]=q_{i}\ \forall i\in\left[c\right]\right\} \nonumber \\
 & =\min_{\rho\geq0}\left\{ \Tr[H\rho]:\Tr[Q_{i}\rho]=q_{i}\ \forall i\in\left\{ 0,1,\ldots,c\right\} \right\} \label{eq:add-trace-density-constraint}\\
 & =\min_{\rho\geq0}\left\{ \Tr[H\rho]+\sup_{\substack{\mu\in\mathbb{R}^{c},\\
\mu_{0}\in\mathbb{R}
}
}\sum_{i\in\left\{ 0,1,\ldots,c\right\} }\mu_{i}\left(q_{i}-\Tr[Q_{i}\rho]\right)\right\} \label{eq:proof-lagrange-mults-1}\\
 & =\min_{\rho\geq0}\sup_{\substack{\mu\in\mathbb{R}^{c},\\
\mu_{0}\in\mathbb{R}
}
}\left\{ \Tr[H\rho]+\sum_{i\in\left\{ 0,1,\ldots,c\right\} }\mu_{i}\left(q_{i}-\Tr[Q_{i}\rho]\right)\right\} \label{eq:sup-to-outside-1}\\
 & =\min_{\rho\geq0}\sup_{\substack{\mu\in\mathbb{R}^{c},\\
\mu_{0}\in\mathbb{R}
}
}\left\{ \mu\cdot q+\mu_{0}+\Tr[\left(H-\mu\cdot Q-\mu_{0}I\right)\rho]\right\} \label{eq:algebraic-simplifications-1}\\
 & =\sup_{\substack{\mu\in\mathbb{R}^{c},\\
\mu_{0}\in\mathbb{R}
}
}\min_{\rho\geq0}\left\{ \mu\cdot q+\mu_{0}+\Tr[\left(H-\mu\cdot Q-\mu_{0}I\right)\rho]\right\} \label{eq:minimax-app}\\
 & =\sup_{\substack{\mu\in\mathbb{R}^{c},\\
\mu_{0}\in\mathbb{R}
}
}\left\{ \mu\cdot q+\mu_{0}:H-\mu\cdot Q\geq\mu_{0}I\right\} .\label{eq:duality-proof-last-1}
\end{align}
The equalities follow for reasons similar to those given for~\eqref{eq:proof-lagrange-mults}--\eqref{eq:duality-proof-last},
with the exception that~\eqref{eq:minimax-app} follows from strong
duality of semidefinite programming, under the assumption that there
exists at least one state $\rho$ satisfying the constraints in~\eqref{eq:constraints-for-ground-space}.
Observe that an optimal $\mu_{0}$ in~\eqref{eq:duality-proof-last-1}
is equal to the minimum eigenvalue of $H-\mu\cdot Q$. Now suppose
that $\rho^{*}$ is optimal for~\eqref{eq:add-trace-density-constraint}
and $\mu^{*}$ is optimal for~\eqref{eq:duality-proof-last-1}. Then
it follows that
\begin{align}
\Tr[H\rho^{*}] & =\mu^{*}\cdot q+\mu_{0}\\
 & =\mu^{*}\cdot q+\lambda_{\min}(H-\mu^{*}\cdot Q).
\end{align}
Consider that $\Tr[Q_{i}\rho^{*}]=q_{i}$ for all $i\in\left[c\right]$.
Then
\begin{equation}
\mu^{*}\cdot q=\Tr[\left(\mu^{*}\cdot Q\right)\rho^{*}].
\end{equation}
Plugging back in above, we conclude the desired equality:
\begin{align}
\Tr[H\rho^{*}] & =\Tr[\left(\mu^{*}\cdot Q\right)\rho]+\lambda_{\min}(H-\mu^{*}\cdot Q)\\
\iff\qquad & \Tr[\left(H-\mu^{*}\cdot Q\right)\rho^{*}]=\lambda_{\min}(H-\mu^{*}\cdot Q),
\end{align}
so that $\rho^{*}$ is in the ground space of $H-\mu^{*}\cdot Q$.

\section{Proof of Theorem~\ref{thm:mixture-to-exp-qubit}}

\label{app:mixture-to-exp-proof}
\begin{proof}
Consider that, in order for the equality in~\eqref{eq:desired-eq-mix-exp}
to hold, the eigenvectors of the matrix on the left-hand side and
those for the matrix on the right-hand side should match. The eigenvectors
for the matrix on the left-hand side are the same as those for the
following matrix:
\begin{equation}
r_{x}X+r_{y}Y+r_{z}Z,\label{eq:r-matrix}
\end{equation}
and the eigenvectors for the matrix on the right-hand side are the
same as those for the following matrix:
\begin{equation}
\mu_{x}X+\mu_{y}Y+\mu_{z}Z.
\end{equation}
Thus, for the eigenvectors to match, we simply set $\mu=r$. This
is also intuitive, because the vector $r$ identifies a direction
in the Bloch sphere, and so does $\mu$, and these directions should
be aligned. So now we are trying to solve the following equation for
$\beta$, which is the only parameter left to solve for:
\begin{multline}
\frac{1}{2}\left(I+r_{x}X+r_{y}Y+r_{z}Z\right)=\\
\frac{1}{Z_{\beta}(r)}\exp\!\left(-\beta\left(r_{x}X+r_{y}Y+r_{z}Z\right)\right).\label{eq:equa-to-solve}
\end{multline}
It is well known that the eigenvalues of the matrix in~\eqref{eq:r-matrix}
are given by
\begin{equation}
\pm\left\Vert r\right\Vert .\label{eq:pm-norm-bloch-vec}
\end{equation}
To see this, recall that the eigenvalues of a $2\times2$ matrix $A$
are given by the following formula:
\begin{equation}
\lambda_{\pm}=\frac{1}{2}\left(\Tr[A]\pm\sqrt{\left(\Tr[A]\right)^{2}-4\det(A)}\right).
\end{equation}
Since the trace of the matrix in~\eqref{eq:r-matrix} is equal to
zero, this implies that the eigenvalues are given by the following
formula:
\begin{equation}
\pm\sqrt{-\det(r_{x}X+r_{y}Y+r_{z}Z)},
\end{equation}
which one finds upon direct calculation to be equal to~\eqref{eq:pm-norm-bloch-vec}.
This implies that the eigenvalues of the matrix in~\eqref{eq:bloch-rep}
are given by
\begin{equation}
\frac{1\pm\left\Vert r\right\Vert }{2},\label{eq:eigenvals-eye-plus-pauli}
\end{equation}
and the eigenvalues of $\exp\!\left(-\beta\left(r_{x}X+r_{y}Y+r_{z}Z\right)\right)$
are given by
\begin{equation}
e^{\mp\beta\left\Vert r\right\Vert }.
\end{equation}
Then
\begin{equation}
Z_{\beta}(r)=e^{-\beta\left\Vert r\right\Vert }+e^{\beta\left\Vert r\right\Vert },
\end{equation}
and we conclude that the eigenvalues of the matrix on the right-hand
side of~\eqref{eq:equa-to-solve} are given by
\begin{equation}
\frac{e^{\mp\beta\left\Vert r\right\Vert }}{e^{-\beta\left\Vert r\right\Vert }+e^{\beta\left\Vert r\right\Vert }}.\label{eq:thermal-form-eigenvals}
\end{equation}
Now setting the first eigenvalue in~\eqref{eq:eigenvals-eye-plus-pauli}
to the first eigenvalue in~\eqref{eq:thermal-form-eigenvals}, we
conclude that
\begin{equation}
\frac{1+\left\Vert r\right\Vert }{2}=\frac{e^{-\beta\left\Vert r\right\Vert }}{e^{-\beta\left\Vert r\right\Vert }+e^{\beta\left\Vert r\right\Vert }}.\label{eq:almost-eq-to-solve}
\end{equation}
Since $\beta$ is a free parameter, let us set
\begin{equation}
\gamma=\beta\left\Vert r\right\Vert \label{eq:gamma-beta-vars}
\end{equation}
and rewrite~\eqref{eq:almost-eq-to-solve} as follows:
\begin{equation}
\frac{1+\left\Vert r\right\Vert }{2}=\frac{e^{-\gamma}}{e^{-\gamma}+e^{\gamma}}.
\end{equation}
Now let us solve this equation for $\gamma$:
\begin{align}
\frac{1+\left\Vert r\right\Vert }{2} & =\frac{e^{-\gamma}}{e^{-\gamma}+e^{\gamma}}\\
\iff\qquad & 1+\left\Vert r\right\Vert =\frac{2e^{-\gamma}}{e^{-\gamma}+e^{\gamma}}\\
\iff\qquad & \left\Vert r\right\Vert =\frac{2e^{-\gamma}}{e^{-\gamma}+e^{\gamma}}-\frac{e^{-\gamma}+e^{\gamma}}{e^{-\gamma}+e^{\gamma}}\\
\iff\qquad & \left\Vert r\right\Vert =\frac{e^{-\gamma}-e^{\gamma}}{e^{-\gamma}+e^{\gamma}}\\
\iff\qquad & \left\Vert r\right\Vert =-\tanh(\gamma).
\end{align}
This last line implies that
\begin{equation}
\gamma=\arctanh\!\left(-\left\Vert r\right\Vert \right).
\end{equation}
Finally, substituting into~\eqref{eq:gamma-beta-vars}, we find that
\begin{equation}
\beta=\frac{\arctanh\!\left(-\left\Vert r\right\Vert \right)}{\left\Vert r\right\Vert }.
\end{equation}
This concludes our first proof.

We also provide a second, alternative proof based on Taylor expansions.
Consider that the following equality should hold:
\begin{equation}
\frac{1}{2}\left(I+\overrightarrow{r}\cdot\overrightarrow{\sigma}\right)=\frac{\exp\!\left(-\beta\overrightarrow{\mu}\cdot\overrightarrow{\sigma}\right)}{\Tr\!\left[\exp\!\left(-\beta\overrightarrow{\mu}\cdot\overrightarrow{\sigma}\right)\right]},\label{eq:thermal-qubit-eq-to-solve}
\end{equation}
where
\begin{align}
\overrightarrow{r} & \coloneqq\left(r_{x},r_{y},r_{z}\right),\\
\overrightarrow{\sigma} & \coloneqq\left(\sigma_{X},\sigma_{Y},\sigma_{Z}\right),\\
\overrightarrow{\mu} & \coloneqq\left(\mu_{x},\mu_{y},\mu_{z}\right).
\end{align}
Without loss of generality, let us suppose that $\overrightarrow{\mu}$
is normalized, so that $\left\Vert \overrightarrow{\mu}\right\Vert =1$.
Observe that $\left(\overrightarrow{\mu}\cdot\overrightarrow{\sigma}\right)^{2}=I$,
which implies for all $n\in\mathbb{N}$ that
\begin{equation}
\left(\overrightarrow{\mu}\cdot\overrightarrow{\sigma}\right)^{n}=\begin{cases}
I & \text{if \ensuremath{n} is even}\\
\overrightarrow{\mu}\cdot\overrightarrow{\sigma} & \text{else}
\end{cases}.
\end{equation}
Then we find that
\begin{align}
 & \exp\!\left(-\beta\overrightarrow{\mu}\cdot\overrightarrow{\sigma}\right)\nonumber \\
 & =\sum_{n=0}^{\infty}\frac{\left(-\beta\overrightarrow{\mu}\cdot\overrightarrow{\sigma}\right)^{n}}{n!}\\
 & =\sum_{k=0}^{\infty}\frac{\left(-\beta\right)^{2k}\left(\overrightarrow{\mu}\cdot\overrightarrow{\sigma}\right)^{2k}}{\left(2k\right)!}+\sum_{k=0}^{\infty}\frac{\left(-\beta\right)^{2k+1}\left(\overrightarrow{\mu}\cdot\overrightarrow{\sigma}\right)^{2k+1}}{\left(2k+1\right)!}\\
 & =\sum_{k=0}^{\infty}\frac{\left(-\beta\right)^{2k}}{\left(2k\right)!}I+\sum_{k=0}^{\infty}\frac{\left(-\beta\right)^{2k+1}}{\left(2k+1\right)!}\overrightarrow{\mu}\cdot\overrightarrow{\sigma}\\
 & =\cosh(-\beta)I+\sinh(-\beta)\overrightarrow{\mu}\cdot\overrightarrow{\sigma}\\
 & =\cosh(\beta)I-\sinh(\beta)\overrightarrow{\mu}\cdot\overrightarrow{\sigma}.
\end{align}
Then
\begin{align}
& \Tr\!\left[\exp\!\left(-\beta\overrightarrow{\mu}\cdot\overrightarrow{\sigma}\right)\right] \notag \\
& =\Tr[\cosh(\beta)I-\sinh(\beta)\overrightarrow{\mu}\cdot\overrightarrow{\sigma}]\\
 & =\cosh(\beta)\Tr[I]-\sinh(\beta)\Tr[\overrightarrow{\mu}\cdot\overrightarrow{\sigma}]\\
 & =2\cosh(\beta),
\end{align}
which implies that
\begin{equation}
\frac{\exp\!\left(-\beta\overrightarrow{\mu}\cdot\overrightarrow{\sigma}\right)}{\Tr\!\left[\exp\!\left(-\beta\overrightarrow{\mu}\cdot\overrightarrow{\sigma}\right)\right]}=\frac{1}{2}\left(I-\tanh(\beta)\overrightarrow{\mu}\cdot\overrightarrow{\sigma}\right)
\end{equation}
As such, for \eqref{eq:thermal-qubit-eq-to-solve} to be satisfied,
the following should hold:
\begin{align}
\frac{1}{2}\left(I+\overrightarrow{r}\cdot\overrightarrow{\sigma}\right) & =\frac{1}{2}\left(I-\tanh(\beta)\overrightarrow{\mu}\cdot\overrightarrow{\sigma}\right),
\end{align}
which implies that $\overrightarrow{\mu}=-\frac{\overrightarrow{r}}{\left\Vert \overrightarrow{r}\right\Vert }$
and $\left\Vert \overrightarrow{r}\right\Vert =\tanh(\beta)$, the
latter of which is equivalent to $\beta=\arctanh\!\left(\left\Vert \overrightarrow{r}\right\Vert \right)$.
\end{proof}

\begin{rem}
If we would like $\mu$ to be normalized and the temperature to be non-negative, then we could alternatively set
$\mu=- \frac{r}{\left\Vert r\right\Vert }$ and $\beta=\arctanh\!\left(\left\Vert r\right\Vert \right)$, as done in our second proof above,
and arrive at the same conclusion. In this way, $\mu$ exclusively
represents the direction of the Bloch vector, and the inverse temperature
$\beta$ is related only to its norm. This leads to the following
corollary, which gives a function that maps $(\mu,\beta)$ to $r$.
\end{rem}

\begin{cor}
Let $\mu\in\mathbb{R}^{3}$ be the unit vector (i.e., $\left\Vert \mu\right\Vert =1$)
of exponential coordinates and $\beta\geq0$ the inverse temperature
such that a qubit state $\rho$ can be written as in~\eqref{eq:exponential-coord-qubit}.
Then, by choosing
\begin{equation}
r=\tanh(-\beta)\mu,
\end{equation}
the following equality holds:
\begin{multline}
\frac{1}{2}\left(I+r_{x}X+r_{y}Y+r_{z}Z\right)=\\
\frac{1}{Z_{\beta}(\mu)}\exp\!\left(-\beta\left(\mu_{x}X+\mu_{y}Y+\mu_{z}Z\right)\right).
\end{multline}
\end{cor}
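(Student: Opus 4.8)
The plan is to verify the claimed identity by directly evaluating the right-hand side under the prescribed substitution $r=\tanh(-\beta)\mu$, rather than solving any equation. Since the corollary hands us a unit vector $\mu$ and an inverse temperature $\beta\geq0$ and merely asks us to confirm that the resulting Bloch vector is $r=\tanh(-\beta)\mu$, this is a forward verification and the entire task reduces to computing a matrix exponential. I would reuse the computation from the second proof of Theorem~\ref{thm:mixture-to-exp-qubit} almost verbatim.

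First I would exploit the hypothesis $\left\Vert \mu\right\Vert =1$. Writing $M\coloneqq\mu_{x}X+\mu_{y}Y+\mu_{z}Z$, the Pauli anticommutation relations cause the cross terms to cancel, giving the involution $M^{2}=\left\Vert \mu\right\Vert ^{2}I=I$. This lets me split the power series $\exp(-\beta M)=\sum_{n\geq0}(-\beta)^{n}M^{n}/n!$ into its even and odd parts, collapsing them into the hyperbolic functions:
\begin{equation}
\exp(-\beta M)=\cosh(\beta)I-\sinh(\beta)M.
\end{equation}
Next I would compute the partition function using $\Tr[I]=2$ and the tracelessness of each Pauli operator (so $\Tr[M]=0$), obtaining $Z_{\beta}(\mu)=\Tr[\exp(-\beta M)]=2\cosh(\beta)$. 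Dividing then yields
\begin{equation}
\frac{\exp(-\beta M)}{Z_{\beta}(\mu)}=\frac{1}{2}\left(I-\tanh(\beta)M\right).
\end{equation}

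Finally I would match the two sides. The substitution $r=\tanh(-\beta)\mu=-\tanh(\beta)\mu$ gives $r_{x}X+r_{y}Y+r_{z}Z=-\tanh(\beta)M$, so the left-hand side $\frac{1}{2}(I+r_{x}X+r_{y}Y+r_{z}Z)$ coincides exactly with $\frac{1}{2}(I-\tanh(\beta)M)$, completing the verification. The only conceptual ingredient — hardly an obstacle — is the involution identity $M^{2}=I$, which hinges on $\mu$ being a unit vector; everything else is bookkeeping with traces and the even/odd parity split of the exponential series. Because the statement is a direct substitution rather than an inversion, there is no need to invoke $\arctanh$ or to worry about invertibility of any map, which is what distinguishes this corollary from Theorem~\ref{thm:mixture-to-exp-qubit}.
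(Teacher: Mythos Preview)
Your proposal is correct and mirrors the paper's approach: the corollary is stated immediately after the second proof of Theorem~\ref{thm:mixture-to-exp-qubit} and is understood to follow from exactly the computation you describe, namely the parity split of $\exp(-\beta M)$ using $M^{2}=I$ to obtain $\frac{1}{2}(I-\tanh(\beta)M)$ and then reading off $r=-\tanh(\beta)\mu$. You have reproduced that derivation faithfully.
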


\section{Proof of Theorem~\ref{thm:solution-stab-thermo}}

\label{app:proof-solution-stab-thermo}

\begin{proof}
We know from~\eqref{eq:eq:proof-lagrange-mults-thermal}--\eqref{eq:last-step-legendre-duality} and~\eqref{eq:optimality-condition-thermal} that the optimal state for \eqref{eq:SDP-QECC}
has the following form:
\begin{equation}
\omega^{*}\equiv\frac{\exp\!\left(-\frac{1}{T}\left(H-T\sum_{i_{1},\ldots,i_{k}}\lambda_{i_{1},\ldots,i_{k}}\overline{\sigma}_{i_{1}}\cdots\overline{\sigma}_{i_{k}}\right)\right)}{\Tr\!\left[\exp\!\left(-\frac{1}{T}\left(H-T\sum_{i_{1},\ldots,i_{k}}\lambda_{i_{1},\ldots,i_{k}}\overline{\sigma}_{i_{1}}\cdots\overline{\sigma}_{i_{k}}\right)\right)\right]},
\end{equation}
where $\left(\lambda_{i_{1},\ldots,i_{k}}\right)_{i_{1},\ldots,i_{k}\in\left\{ 0,1,2,3\right\} }$
is a tuple of real coefficients. What remains is to determine the
coefficients in $\left(\lambda_{i_{1},\ldots,i_{k}}\right)_{i_{1},\ldots,i_{k}\in\left\{ 0,1,2,3\right\} }$.
To this end, recall that every stabilizer code consists of a Clifford
encoding unitary $U$ such that
\begin{align}
U^{\dag}HU & =-\sum_{i=1}^{n-k}Z_{i},\\
U^{\dag}\overline{\sigma}_{i_{1}}\cdots\overline{\sigma}_{i_{k}}U & =I^{\otimes n-k}\otimes\sigma_{i_{1}}\otimes\cdots\otimes\sigma_{i_{k}}.
\end{align}
Then, after setting
\begin{equation}
Z\equiv\Tr\!\left[\exp\!\left(-\frac{1}{T}\left(H-T\sum_{i_{1},\ldots,i_{k}}\lambda_{i_{1},\ldots,i_{k}}\overline{\sigma}_{i_{1}}\cdots\overline{\sigma}_{i_{k}}\right)\right)\right],
\end{equation}
consider that
\begin{align}
 & \frac{\exp\!\left(-\frac{1}{T}\left(H-T\sum_{i^{k}}\lambda_{i^{k}}\overline{\sigma}_{i_{1}}\cdots\overline{\sigma}_{i_{k}}\right)\right)}{Z}\nonumber \\
 & =\frac{U\exp\!\left(\frac{1}{T}\left(\begin{array}{c}
\sum_{i=1}^{n-k}Z_{i}+\\
T\sum_{i^{k}}\lambda_{i^{k}}I^{\otimes n-k}\otimes\sigma_{i_{1}}\otimes\cdots\otimes\sigma_{i_{k}}
\end{array}\right)\right)U^{\dag}}{Z}\label{eq:encoded-to-unencoded-thermal-state-1}\\
 & =U\left(\frac{\bigotimes_{i=1}^{n-k}e^{\frac{1}{T}Z}\otimes e^{\sum_{i^{k}}\lambda_{i^{k}}\sigma_{i_{1}}\otimes\cdots\otimes\sigma_{i_{k}}}}{Z}\right)U^{\dag},\label{eq:encoded-to-unencoded-thermal-state}
\end{align}
where we have used the abbreviation $i^{k}\equiv i_{1},\ldots,i_{k}$.
Also, we conclude from \eqref{eq:encoded-to-unencoded-thermal-state-1}--\eqref{eq:encoded-to-unencoded-thermal-state}
that
\begin{align}
Z & =\Tr\!\left[\bigotimes_{i=1}^{n-k}e^{\frac{1}{T}Z}\otimes e^{\sum_{i^{k}}\lambda_{i^{k}}\sigma_{i_{1}}\otimes\cdots\otimes\sigma_{i_{k}}}\right]\\
 & =\left(\Tr\!\left[e^{\frac{1}{T}Z}\right]\right)^{n-k}\Tr\!\left[e^{\sum_{i^{k}}\lambda_{i^{k}}\sigma_{i_{1}}\otimes\cdots\otimes\sigma_{i_{k}}}\right].\label{eq:partition-func-stab-split}
\end{align}
Now observe that
\begin{align}
 & \Tr\!\left[\left(\overline{\sigma}_{i_{1}}\cdots\overline{\sigma}_{i_{k}}\right)\omega^{*}\right]\nonumber \\
 & =\Tr\!\left[U^{\dag}\overline{\sigma}_{i_{1}}\cdots\overline{\sigma}_{i_{k}}UU^{\dag}\omega^{*}U\right]\label{eq:logical-constraints-rewrite}\\
 & =\Tr\!\left[\left(I^{\otimes n-k}\otimes\sigma_{i_{1}}\otimes\cdots\otimes\sigma_{i_{k}}\right)U^{\dag}\omega^{*}U\right]\\
 & =\Tr\!\left[\left(\sigma_{i_{1}}\otimes\cdots\otimes\sigma_{i_{k}}\right)\Tr_{n-k}\!\left[U^{\dag}\omega^{*}U\right]\right].\label{eq:logical-constraints-rewrite-last}
\end{align}
The constraints in \eqref{eq:SDP-QECC} and the development in \eqref{eq:logical-constraints-rewrite}--\eqref{eq:logical-constraints-rewrite-last}
imply that
\begin{equation}
\Tr_{n-k}\!\left[U^{\dag}\omega^{*}U\right]=\rho,\label{eq:constraint-on-marginal-encoded-state}
\end{equation}
where $\rho$ is defined by \eqref{eq:k-qubit-constraints-for-state}
and \eqref{eq:unencoded-state-thermal-form}. However, \eqref{eq:encoded-to-unencoded-thermal-state}
and \eqref{eq:partition-func-stab-split} imply that
\begin{align}
 & \Tr_{n-k}\!\left[U^{\dag}\omega^{*}U\right]\nonumber \\
 & =\Tr_{n-k}\!\left[\frac{e^{\frac{1}{T}Z_{1}}\otimes\cdots\otimes e^{\frac{1}{T}Z_{n-k}}\otimes e^{\sum_{i^{k}}\lambda_{i^{k}}\sigma_{i_{1}}\otimes\cdots\otimes\sigma_{i_{k}}}}{\left(\Tr[e^{\frac{1}{T}Z}]\right)^{n-k}\Tr\!\left[e^{\sum_{i^{k}}\lambda_{i^{k}}\sigma_{i_{1}}\otimes\cdots\otimes\sigma_{i_{k}}}\right]}\right]\label{eq:marg-state-omega-1}\\
 & =\frac{e^{\sum_{i_{1},\ldots,i_{k}}\lambda_{i_{1},\ldots,i_{k}}\sigma_{i_{1}}\otimes\cdots\otimes\sigma_{i_{k}}}}{\Tr\!\left[e^{\sum_{i_{1},\ldots,i_{k}}\lambda_{i_{1},\ldots,i_{k}}\sigma_{i_{1}}\otimes\cdots\otimes\sigma_{i_{k}}}\right]}.\label{eq:marg-state-omega-2}
\end{align}
Given the form in \eqref{eq:unencoded-state-thermal-form} and the
equalities in \eqref{eq:marg-state-omega-1}--\eqref{eq:marg-state-omega-2},
the only way that the equality in \eqref{eq:constraint-on-marginal-encoded-state}
can hold is if $\lambda_{i_{1},\ldots,i_{k}}=\mu_{i_{1},\ldots,i_{k}}$
for all $i_{1},\ldots,i_{k}\in\left\{ 0,1,2,3\right\} $, thus concluding
the proof.
\end{proof}

\begin{rem}
\label{rem:stab-thermo-encoding}
The development in \eqref{eq:encoded-to-unencoded-thermal-state-1}--\eqref{eq:partition-func-stab-split} indicates
that there is a simple method of encoding the state of a stabilizer
thermodynamic system. Indeed, the state has the following form:
\begin{equation}
U\left(\bigotimes_{i=1}^{n-k}\frac{e^{\frac{1}{T}Z}}{\Tr[e^{\frac{1}{T}Z}]}\otimes\frac{e^{\sum_{i^{k}}\lambda_{i^{k}}\sigma_{i_{1}}\otimes\cdots\otimes\sigma_{i_{k}}}}{\Tr\!\left[e^{\sum_{i^{k}}\lambda_{i^{k}}\sigma_{i_{1}}\otimes\cdots\otimes\sigma_{i_{k}}}\right]}\right)U^{\dag},
\end{equation}
where we have again used the abbreviation $i^{k}\equiv i_{1},\ldots,i_{k}$.
This implies that it can be encoded by preparing the $n-k$ ancilla
qubits each in temperature-$T$ thermal states of the Hamiltonian $-Z$ (which are
approximations of the ideal state $|0\rangle\!\langle0|$) and the $k$
logical qubits in the state
\begin{equation}
\frac{e^{\sum_{i^{k}}\lambda_{i^{k}}\sigma_{i_{1}}\otimes\cdots\otimes\sigma_{i_{k}}}}{\Tr\!\left[e^{\sum_{i^{k}}\lambda_{i^{k}}\sigma_{i_{1}}\otimes\cdots\otimes\sigma_{i_{k}}}\right]}.
\end{equation}
After doing so, one then applies the Clifford encoding unitary $U$
for the stabilizer code.
\end{rem}

\section{Other simulation results}

\label{sec:Other-simulation-results}

The technical details of the algorithmic implementations and simulation setup used for the results presented in this appendix are similar to those described in Section~\ref{sec:Simulation-results} of the main text. We therefore refer the reader to that section for explanations of implementation-specific aspects of the classical and HQC algorithms.

\subsection{One-dimensional quantum Heisenberg model with nearest-neighbor interactions}

\begin{figure*}
\includegraphics[width=\linewidth]{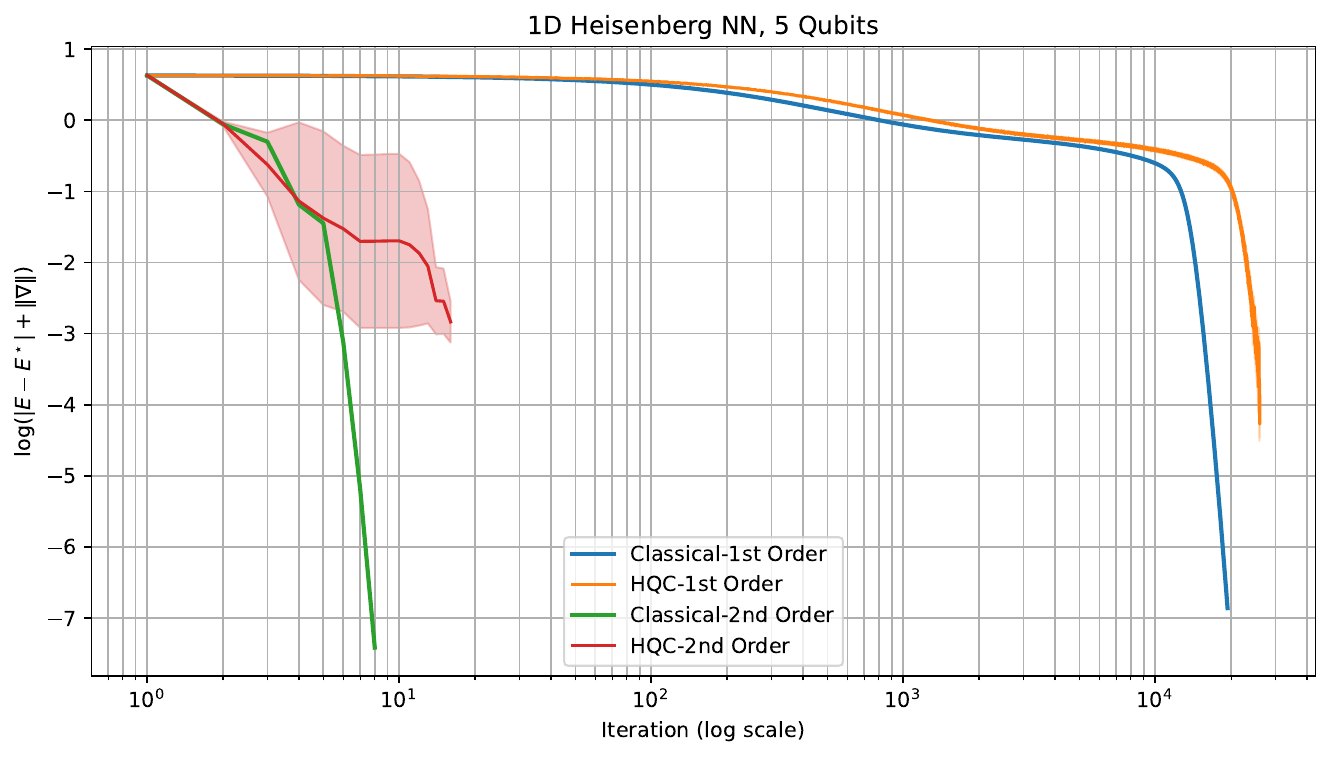}
\caption{The figure depicts the logarithm of the error metric in~\eqref{eq:error-metric} versus the number of iterations, for the first-order classical and HQC algorithms and the second-order classical algorithm for the task of constrained energy minimization for the five-qubit, one-dimensional quantum Heisenberg model with nearest-neighbor interactions and constraints on the total magnetizations in the $x$, $y$, and $z$ directions set to be 1, 0, and 1, respectively.  All of the algorithms converge, but the HQC algorithms, shown as the average over five independent runs with shaded regions denoting one standard deviation,  require more iterations to converge due to sampling noise inherent in them.
}
\label{fig:1D-heis-nn-Log-Error}
\end{figure*}

Figure~\ref{fig:1D-heis-nn-Log-Error} depicts the performance of the first-order classical and HQC algorithms and the second-order classical algorithm for the five-qubit, one-dimensional quantum Heisenberg model with nearest-neighbor interactions and constraints on the total magnetizations in the $x$, $y$, and $z$ directions set to be 1, 0, and 1, respectively.

\subsection{Two-dimensional quantum Heisenberg model with nearest-neighbor interactions}

\begin{figure*}
\includegraphics[width=\linewidth]{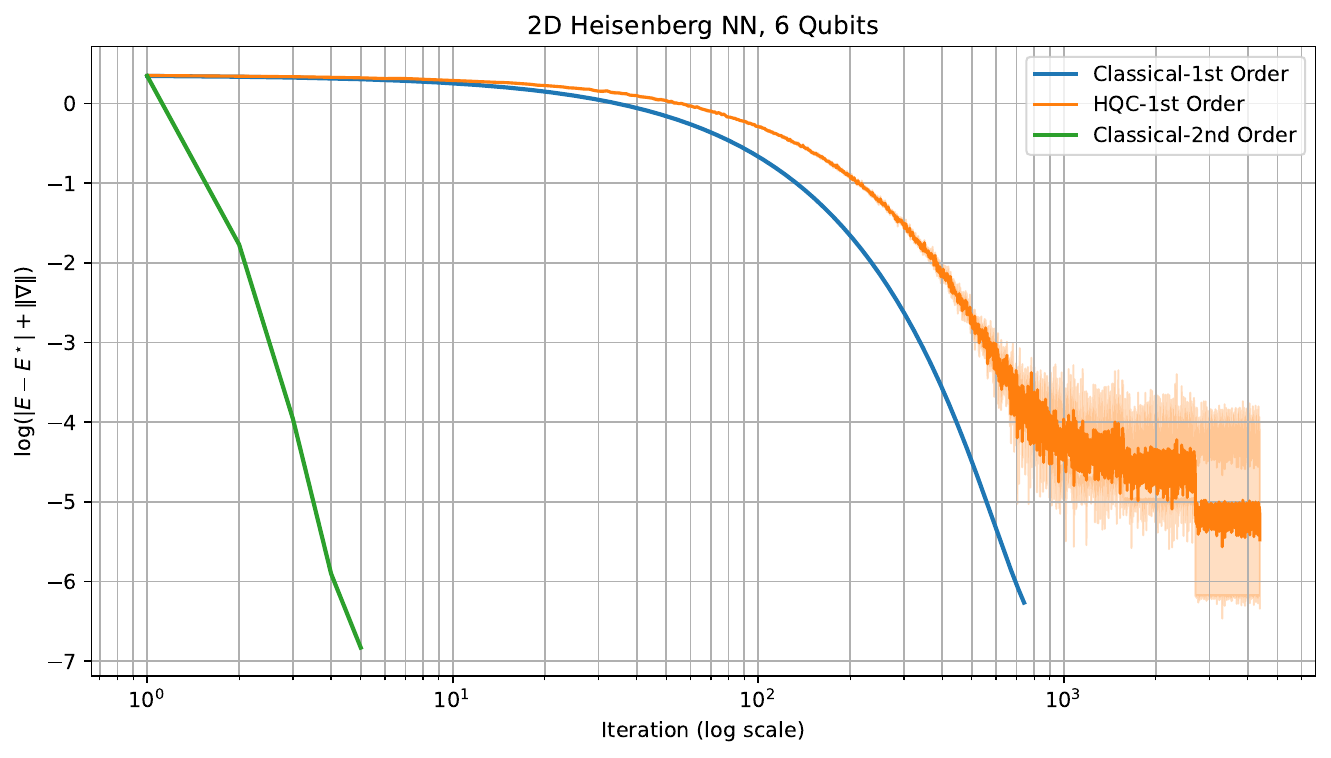}
\caption{The figure depicts the logarithm of the error metric in~\eqref{eq:error-metric} versus the logarithm of the number of iterations, for the first-order classical and HQC LMPW algorithms for the task of constrained energy minimization for the six-qubit, two-dimensional quantum Heisenberg model with nearest-neighbor interactions and constraints on the total magnetizations in the $x$, $y$, and $z$ directions set to be 1, 0, and 1, respectively.  Both algorithms converge, but the HQC algorithm, shown as the average over five independent runs with the shaded region denoting one standard deviation, requires more iterations to converge due to sampling noise inherent in it.
}
\label{fig:2D-heis-nn-Log-Error}
\end{figure*}

Figure~\ref{fig:2D-heis-nn-Log-Error} depicts the performance of the first-order classical and HQC algorithms and the second-order classical algorithm for the six-qubit, two-dimensional quantum Heisenberg model with nearest-neighbor interactions and constraints on the total magnetizations in the $x$, $y$, and $z$ directions set to be 1, 0, and 1, respectively. Given the substantial computational cost associated with Hessian estimation in the HQC setting, we did not include the second-order HQC algorithm in this simulation.

\subsection{One-to-three-qubit repetition code}

The one-to-three qubit repetition code was the earliest discovered
quantum error-correcting code~\cite{Peres1985}. We presented it as
a stabilizer thermodynamic system in Example~\ref{exa:repetition}.

Figure~\ref{fig:rep-code-warm-start-compare-classical-1st-order} plots the performance of the LMPW classical first-order algorithm with and without a warm start, showing that warm starting leads to immediate convergence in just one iteration. Figure~\ref{fig:rep-code-warm-start-compare-HQC-1st-order} presents a similar plot for the LMPW HQC first-order algorithm, which likewise benefits from a warm start by converging in a single iteration.

\begin{figure}
\includegraphics[width=\linewidth]{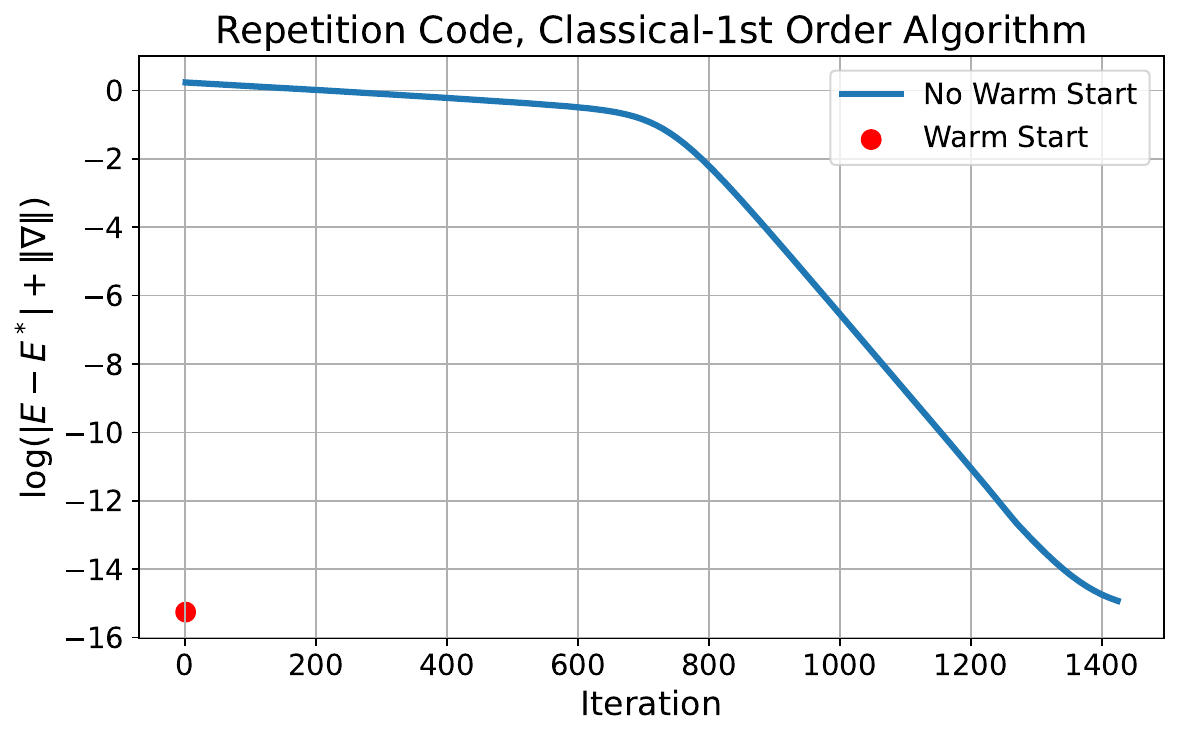}
\caption{The figure depicts the logarithm of the error metric in~\eqref{eq:error-metric} versus the number of iterations, for the LMPW classical first-order algorithm for the task of constrained energy minimization for the one-to-three-qubit repetition code. In this case, the plot compares the performance with and without a warm start for the algorithm, as discussed in Section~\ref{sec:warm-start}. The warm-started algorithm converges in a single iteration.
}
\label{fig:rep-code-warm-start-compare-classical-1st-order}
\end{figure}

\begin{figure}
\includegraphics[width=\linewidth]{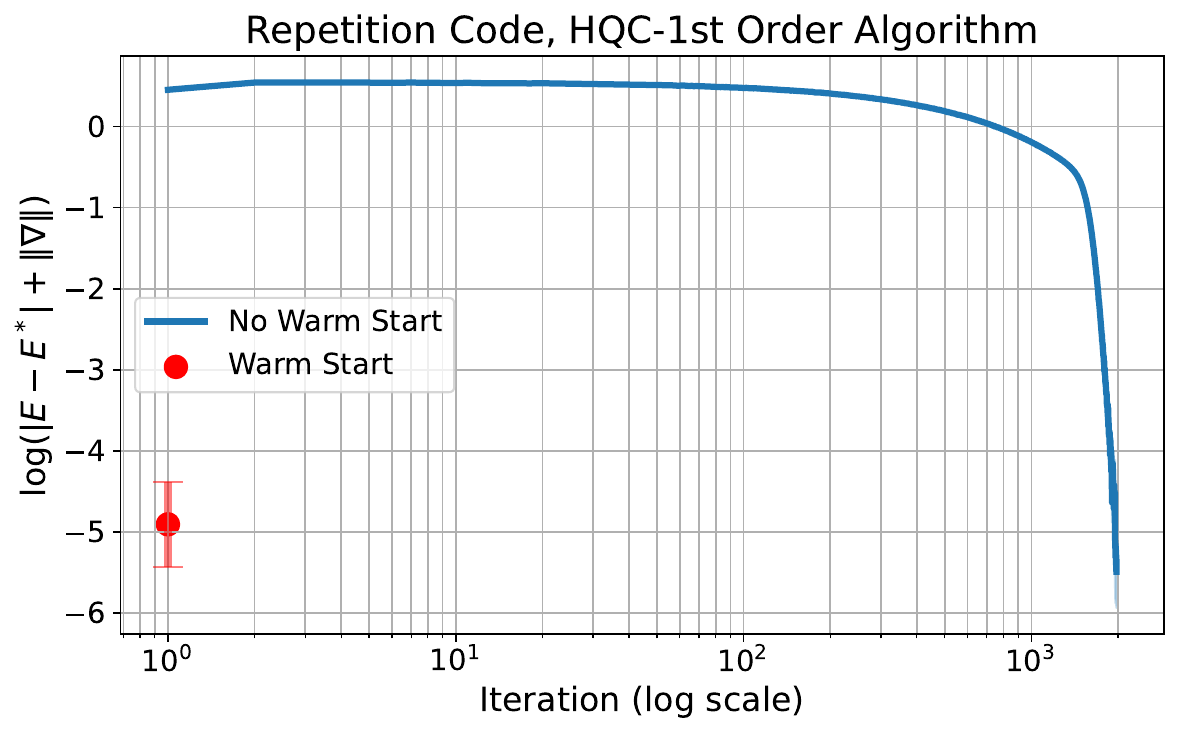}
\caption{The figure depicts the average of the logarithm of the error metric in~\eqref{eq:error-metric} over five independent runs, plotted against the number of iterations, with shaded regions indicating one standard deviation, for the LMPW HQC first-order algorithm for the task of constrained energy minimization for the one-to-three-qubit repetition code. In this case, the plot compares the performance with and without a warm start for the algorithm, as discussed in Section~\ref{sec:warm-start}. The warm-started algorithm converges in a single iteration, and the error bars reflect the standard deviation across the five runs.
}
\label{fig:rep-code-warm-start-compare-HQC-1st-order}
\end{figure}

\subsection{Two-to-four-qubit quantum error-detecting code}

The two-to-four qubit quantum error-detecting code encodes two logical
qubits into four physical qubits and can detect an arbitrary single-qubit
error~\cite{Vaidman1996,Grassl1997}. Its stabilizer generators are
as follows \cite[Section~8.1]{Gottesman1997}:
\begin{align}
S_{1} & \coloneqq X\otimes X\otimes X\otimes X,\\
S_{2} & \coloneqq Z\otimes Z\otimes Z\otimes Z,
\end{align}
and logical operators for it are as follows \cite[Eqs.~(8.1)--(8.2)]{Gottesman1997}:
\begin{align}
\overline{X}_{1} & \coloneqq X\otimes X\otimes I\otimes I,\\
\overline{Z}_{1} & \coloneqq I\otimes Z\otimes I\otimes Z,\\
\overline{X}_{2} & \coloneqq X\otimes I\otimes X\otimes I,\\
\overline{Z}_{2} & \coloneqq I\otimes I\otimes Z\otimes Z.
\end{align}
From the above, we deduce that
\begin{align}
\overline{Y}_{1} & =i\overline{X}_{1}\overline{Z}_{1}=X\otimes Y\otimes I\otimes Z,\\
\overline{Y}_{2} & =i\overline{X}_{2}\overline{Z}_{2}=X\otimes I\otimes Y\otimes Z,
\end{align}
Here, we test the performance of the LMPW algorithms for encoding
a Bell state $|\Phi^{+}\rangle\!\langle\Phi^{+}|$, where $|\Phi^{+}\rangle\coloneqq\frac{1}{\sqrt{2}}\left(|00\rangle+|11\rangle\right)$,
into the four-qubit error-detecting code. The representation of this
state in the Pauli basis is as follows:
\begin{equation}
|\Phi^{+}\rangle\!\langle\Phi^{+}|=\frac{1}{4}(I\otimes I+X\otimes X-Y\otimes Y+Z\otimes Z).\label{eq:bell-state-pauli-exp}
\end{equation}
Thus, we take the non-commuting charges to be all possible logical
operators formed from $\overline{X}_{1}$, $\overline{Z}_{1}$, $\overline{X}_{2}$,
and $\overline{Z}_{2}$, and the constraints to be as follows:
\begin{align}
\Tr[\overline{X}_{2}\rho] & =0,\qquad\Tr[\overline{Y}_{2}\rho]=0,\\
\Tr[\overline{Z}_{2}\rho] & =0,\qquad\Tr[\overline{X}_{1}\rho]=0,\\
\Tr[\overline{X}_{1}\overline{X}_{2}\rho] & =1,\qquad\Tr[\overline{X}_{1}\overline{Y}_{2}\rho]=0,\\
\Tr[\overline{X}_{1}\overline{Z}_{2}\rho] & =0,\qquad\Tr[\overline{Y}_{1}\rho]=0,\\
\Tr[\overline{Y}_{1}\overline{X}_{2}\rho] & =0,\qquad\Tr[\overline{Y}_{1}\overline{Y}_{2}\rho]=-1,\\
\Tr[\overline{Y}_{1}\overline{Z}_{2}\rho] & =0,\qquad\Tr[\overline{Z}_{1}\rho]=0,\\
\Tr[\overline{Z}_{1}\overline{X}_{2}\rho] & =0,\qquad\Tr[\overline{Z}_{1}\overline{Y}_{2}\rho]=0,\\
\Tr[\overline{Z}_{1}\overline{Z}_{2}\rho] & =1.
\end{align}
These constraints are chosen in accordance with the representation
in~\eqref{eq:bell-state-pauli-exp}, in order to ensure that the encoded
two-qubit state is indeed a Bell state.
Figure~\ref{fig:ErrorDetectingCode_2to4q} plots the performance of the LMPW algorithms for this constrained energy minimization problem.

\begin{figure*}
\includegraphics[width=\linewidth]{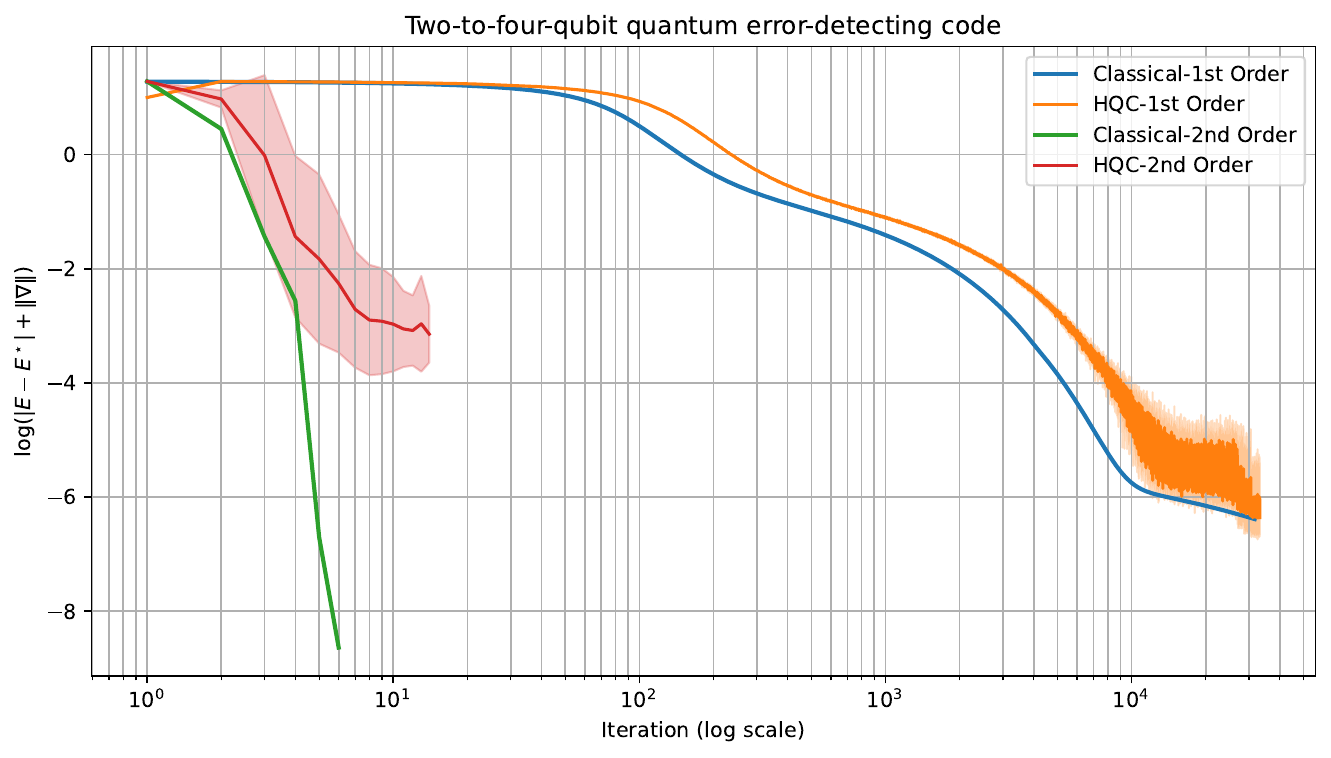}
\caption{The figure depicts the logarithm of the error metric in~\eqref{eq:error-metric} versus the number of iterations, for the first-order classical and HQC algorithms and the second-order classical algorithm for the task of constrained energy minimization for the two-to-four-qubit quantum error-detecting code.  All of the algorithms converge, but the HQC algorithms, shown as the average over five independent runs with the shaded region denoting one standard deviation, require more iterations to converge due to sampling noise inherent in them.
}
\label{fig:ErrorDetectingCode_2to4q}
\end{figure*}

\end{document}